\newtheorem{definition}{Definition}[section]
\newtheorem{theorem}{Theorem}[section]
    \newtheorem{lemma}{Lemma}[section]
\newcommand{\Y}{\mathbb{Y}} \newcommand{\X}{\mathbb{X}}
\newcommand{\poll}{\Req_{L}}
\newcommand{\tp}{P}
\newcommand{\G}{\mathcal{G}}
\newcommand{\belief}{{\pi}}
\newcommand{\tbelief}{\pi^0}
\newcommand{\cbelief}{l^0}
\newcommand{\logoprob}{o}
\newcommand{\noise}{\epsilon}
\newcolumntype{L}{>{\centering\arraybackslash}m{1.92cm}}
\newcommand{\obseq}{\mathcal{Y}}
\newcommand{\omat}{\mathcal{O}}
\newcommand{\Req}{\mathcal{R}}
\newcommand{\additional}{\mathcal{E}}
\newcommand{\priv}{\eta}
\newcommand{\A}{\mathcal{A}}
\newcommand{\beq}{\begin{equation}}
\newcommand{\eeq}{\end{equation}}
\newcommand{\E}{\mathbf{E}}
\newcommand{\ones}{\mathbf{1}}
\newcommand{\loprob}{\bar{R}}
\newcommand{\history}{\mathcal{H}}
\newcommand{\full}{\mathcal{F}}
\newcommand{\argmax}{\operatorname{argmax}}
\newcommand{\argmin}{\operatorname{argmin}}
\newcommand{\Bs}{R^\pi} 
\newcommand{\ta}{\tilde{a}}
\newcommand{\sigs}{\sigma}
  \def\1{{\mathbf 1}}
\newcommand{\reals} {\Bbb{R}}
\newcommand{\qed} {{$\hfill\blacksquare$}}
\newcommand{\lr}{\leq_r}
\newcommand{\lbelief}{l}
    \newcommand{\Ts}{T}
    \newcommand{\ca}{c_a}
\newcommand{\p}{\prime}
    \newcommand{\probe}{\belief}
\newcommand{\response}{a}
\newcommand{\utility}{u}
\newcommand{\budget}{I}
\newcommand{\dataset}{\mathcal{D}}
\newcommand{\tindx}{t}
\newcommand{\Tindxter}{T}
\begin{document}
\title{Online Reputation and Polling Systems:  Data Incest, Social Learning and Revealed Preferences}

\author{Vikram~Krishnamurthy,~\IEEEmembership{Fellow,~IEEE}  William Hoiles,~\IEEEmembership{Student Member,~IEEE} %
\thanks{V. Krishnamurthy (e-mail: vikramk@ece.ubc.ca)  and W. Hoiles (email: whoiles@ece.ubc.ca) are
  with the Department of Electrical and Computer
Engineering, University of British Columbia, Vancouver, V6T 1Z4, Canada. 

Parts of Sec.I and II of this paper appear in our tutorial paper \cite{KP14}.
}

}

\maketitle

\begin{abstract}

This paper considers %models and  algorithms for 
  online reputation and polling systems where individuals make recommendations based on their private observations and recommendations of friends.
 Such  interaction of individuals and their social  influence  is modelled as social learning on a directed acyclic graph.
% Data incest arises as a result of correlations in recommendations due to the intersection of multiple paths in the information exchange graph. 
 Data incest (misinformation propagation)   occurs due to  unintentional re-use of identical actions in the formation of public belief in social learning; the information gathered by each agent is mistakenly considered to be independent. This results in overconfidence and bias in estimates of the state.
   Necessary and sufficient conditions are given on the structure of information exchange graph  to mitigate data incest. Incest removal algorithms are presented.
Experimental results on human subjects are presented to illustrate the effect of social influence  and data incest on decision making. 
These experimental results indicate that  social learning protocols require careful design to handle and mitigate data incest.
The incest removal algorithms are illustrated in an expectation polling system where participants in a poll respond with a summary of their friends' beliefs. Finally, the principle of revealed preferences arising in micro-economics theory is used to parse Twitter datasets to determine if social sensors are utility maximizers and then determine their utility functions.  %Numerical examples %of a corporate network
%and mesh network 
 %illustrate the incest removal algorithms.

 \end{abstract}

{\bf Keywords}: social learning, data incest, reputation systems, Bayesian estimation, expectation polling, Afriat's theorem, revealed preferences

\section{Introduction}
\label{sec:introduction}

Online reputation systems (Yelp, Tripadvisor, etc.) are of increasing importance in measuring social opinion.
%To quote from the February 2010 issue of {\em eMarketer}: ``Consumer reviews are significantly more trusted--nearly 12 times more--than descriptions that come from manufacturers, according to a survey of US mom Internet users by online video review site EXPO''. 
They can be viewed as sensors of social opinion - they
 go beyond physical sensors  since user opinions/ratings (such as human evaluation of a restaurant or movie)   are impossible to measure via  physical sensors.
 Devising a fair online reputation system involves constructing a  data  fusion system that combines  estimates of individuals  to generate an unbiased estimate.
 This 
 presents unique challenges from a statistical signal processing and data fusion point of view.
First,  humans  interact with and influence other humans since  ratings posted on online reputation systems strongly influence the behavior of  individuals.\footnote{81\% of hotel managers  regularly check Tripadvisor reviews  \cite{IMS11}.   A one-star increase in the Yelp rating maps to 5-9 \% revenue increase~\cite{Luc11}. \label{foot}}
This  can result in  non-standard information patterns due to correlations introduced by the structure of the underlying social network.
Second, due to privacy
concerns,
humans rarely reveal  raw observations of the underlying state of nature.
%In social sensing,
%Unlike traditional signal processing where individual sensors  record observations independently of other sensors,
%social sensing has two important differences. First, 
Instead, they reveal their decisions 
(ratings, recommendations, votes) which can be viewed as a low resolution (quantized) function of their raw measurements and interactions with other individuals.

\begin{comment}
There is strong motivation to construct tractable models that  facilitate understanding the dynamics of information
flow in online reputation systems.
Social learning \cite{BHW92,Cha04} is
a useful mathematical abstraction for modeling  the interaction of individuals that reveal actions but not private
measurements.
  In  social learning,  each agent chooses its action by  optimizing
its local utility function. 
  Subsequent agents then use their private observations together with the actions of previous agents
 to estimate (learn) an underlying state.
The setup is fundamentally different from classical signal processing in which sensors use  noisy observations to compute estimates - in social learning agents
use  noisy observations together with decisions made by previous agents, to estimate the underlying state of nature.
 Indeed, a key result in social learning of a random variable is that  agents  herd \cite{BHW92}, that is, they eventually end up choosing the same action irrespective of their private observations. 
 \end{comment}
 
\subsection*{Motivation}

% This paper extends the social learning paradigm by considering the deeper issue of {\em active}  social sensing where information exchange between the interacting
 %agents is exploited to optimize sensing.

%\subsubsection{Social Learning Protocol}
%Sec.\ref{sec:classicalsocial} formulates Bayesian social learning model which forms the  basis for modelling interaction amongst individuals.

%\subsubsection{Data Incest in Online Reputation Systems}  
%Sec.\ref{sec:incest} deals with the question:

This paper models how {\em data incest} propagates amongst individuals in online reputation and polling systems.
%Unlike classical social learning,
%In online reputation systems,  social learning takes place with information exchange over a loopy graph  (where the agents form the vertices of the graph).
%Online reputation systems such as Yelp or Tripadvisor  maintain logs of votes (actions)  by agents (customers).
%As a result,
%{\em data incest} (mis-information) can propagate.
%
Consider  the following example comprising a
multi-agent system  where  agents seek to estimate an underlying state of nature.
An  agent visits a restaurant and obtains  a noisy private measurement of the state  (quality of food).
She then rates the restaurant as  excellent  on an online reputation website.
   Another agent is influenced by this rating, visits the restaurant, and also gives  a good rating on the online reputation website.
  The first agent visits the  reputation  site  and notices that another agent has also given the restaurant a good rating - this double confirms her rating and she enters another good rating. 
     In a fair 
  reputation
  system,  such ``double counting" or ``data incest'' should have been prevented by making the first agent  aware that the rating of the second agent was influenced by her own rating.
  The information exchange between the agents is   represented by the directed graph
  of Fig.\ref{fig:samplem}. The fact
 that there are two distinct paths (denoted in red) between Agent 1 at time 1 and Agent 1 at time 3  implies that 
 the information of Agent 1 at time 1 is double counted thereby leading to a data incest event.
Such  
data incest results in a  bias in the estimate of the underlying state.

% Consider the following example:

\begin{figure}[h]
\centering
{\includegraphics[scale=0.2]{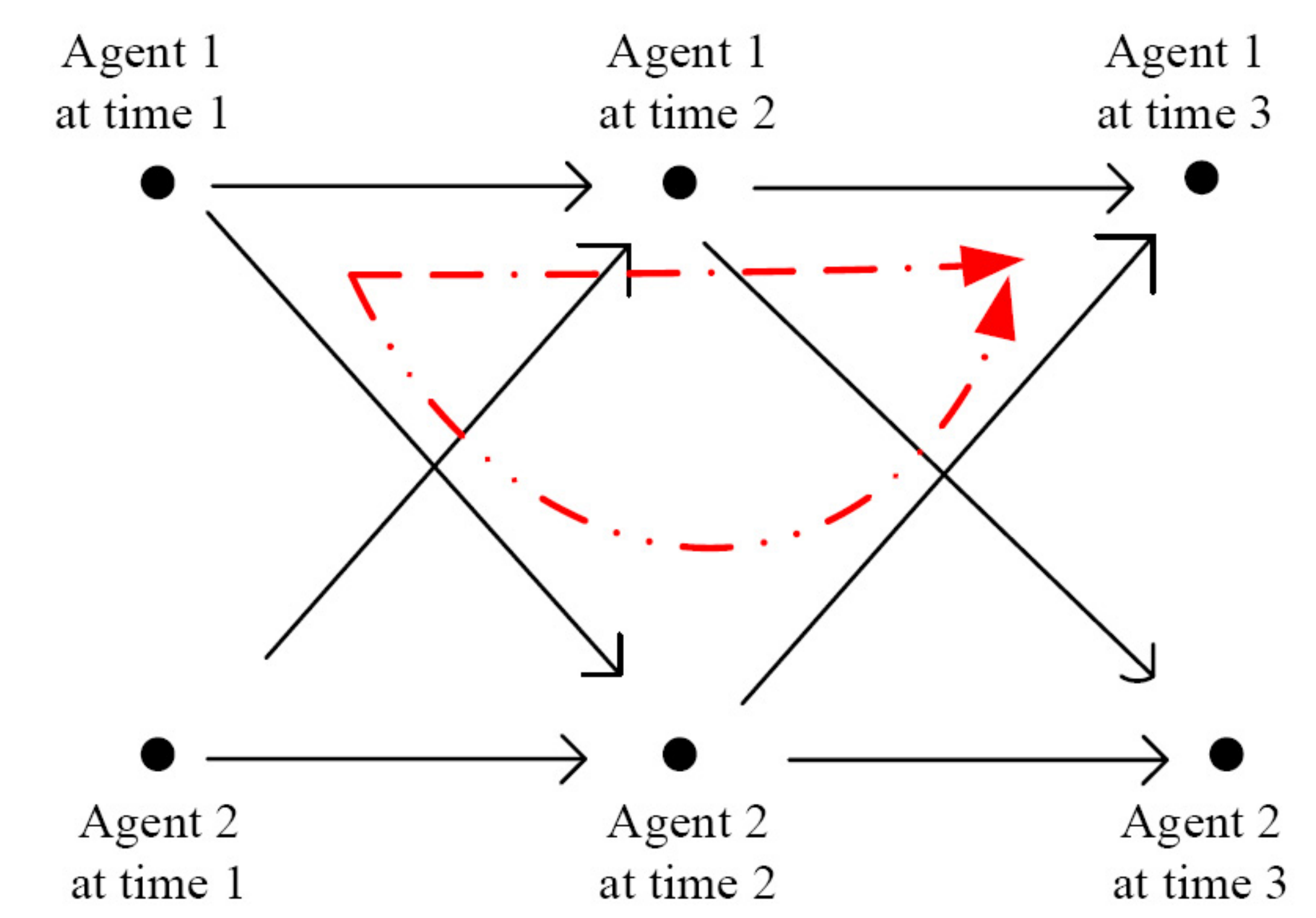}}
\caption{ Example of the information flow  in a social network with two agents  and over three event epochs. The arrows represent exchange of information.}
\label{fig:samplem}
\end{figure}

% How can protocols be designed to prevent data incest and thereby maintain a fair  online reputation system which yields an unbiased estimate? Our aim is to devise incest removal algorithms that the administrator of a social network can deploy to maintain an unbiased (fair) reputation system.

\subsection*{Main Results and Organization}
This paper has four parts. The first part, namely
Sec.\ref{sec:classicalsocial}, deals with the design of a fair online reputation system.  Sec.\ref{sec:classicalsocial} formulates the data incest problem in a multi agent system where individual agents perform social learning
and  exchange information on  a sequence of directed acyclic graphs. The aim is to develop a  distributed data
fusion protocol which incorporates social influence constraints and provides an  unbiased estimate of the state of nature at each node. Protocol \ref{protocol:socialcons} in Sec.\ref{sec:repute} gives the complete design template of how incest can be
avoided in the online reputation system.  
It is shown that by choosing the costs to satisfy reasonable conditions, the recommendations made by individuals are {\em ordinal} functions of their private observations and {\em monotone}
in the prior information. This means that the Bayesian social learning follows simple intuitive rules and is therefore, a useful idealization.
Necessary and sufficient conditions for exact incest removal subject to a social influence constraint are given.

The second part of the paper, namely 
Sec.\ref{sec:expt}, analyzes the data of an  actual experiment that we performed on human subjects to determine how social influence affects decision making. In particular,
information flow patterns from the experimental data indicative of social learning and data incest are described. The experimental results illustrate the effect of social influence.

The third part of the paper, namely Sec.\ref{sec:vote}, describes how the data incest problem formulation and incest removal algorithms 
%(of Sec.\ref{sec:classicalsocial})
 can be applied to an expectation polling system.
Polls seek to estimate the fraction of a population that support a political party,  executive decision, etc.
In {\em intent} polling,
individuals are sampled and asked who they intend to vote for. In {\em expectation} polling \cite{RW10}  individuals are sampled and  asked who they believe will win the election.
It is intuitive  that expectation polling is more accurate than intent polling; since
 in expectation polling an individual  considers its own intent together with the intents of its friends.\footnote{\cite{RW10} analyzes all US presidential electoral college results from 1952-2008 where both intention and expectation polling were conducted and shows a remarkable
result: In 77 cases where expectation and intent polling pointed to different winners, expectation polling was accurate 78\% of the time! The dataset from the American National
Election Studies comprised of voter responses to two questions:\\
{\em Intention}: Who will you vote for in the election for President?\\
{\em Expectation}: Who do you think will be elected President in November?}
If the average degree of nodes in the network is $d$, then  the savings in the number of samples is by a factor of $d$, since a
randomly chosen node summarizes the results form $d$ of its friends. However, the variance and bias of the estimate
depend on the social network structure, and data incest can strongly bias the estimate. We illustrate how the posterior distribution of the leading candidate in the poll can be estimated based on incestious estimates.

Social learning assumes that agents (social sensors) choose their actions by maximizing a  utility function.
The final part of the paper, namely Sec.\ref{sec:revealed},  uses the principle of revealed preferences  as a constructive test to determine: Are social sensors utility optimizers in their response to external influence? We present a remarkable result arising in microeconomics theory called Afriat's theorem
 \cite{Afr67,Var82}  which provides a necessary and sufficient condition for a finite dataset $\dataset$ to have originated from a utility maximizer. 
 The result is illustrated  by tracking real-time tweets and retweets for reputation based review Twitter accounts.
 In particular, Sec.\ref{sec:revealed} shows that social learning associated with reputation agencies has the 
 structure illustrated in Fig.\ref{fig:IntroConnect}. That is,  the sentiment of tweets and number of followers (reputation) constitute the publicly available
 information that drives the social learning process.
 If the sentiment of the tweet published by the reputation agency improves its reputation, 
  then this reinforces the reputation agency's belief that it adequately reviewed the content. This is an example of data incest. 
 
%Finally, Sec.\ref{sec:numerical} presents additional numerical examples including a
% corporate human social network.

 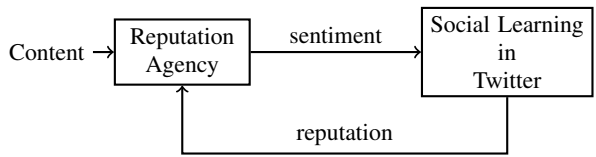
\begin{figure}[h]
  \centering
\begin{tikzpicture}[font = \normalsize, scale =0.9,transform shape, american voltages]
% Define block styles

%\draw [fill= lightgray] (1.5,-3.5) coordinate (topleft) rectangle (4.5,1) coordinate (bottomright);
\node (CN) at (-4.0,0) {Content};
\node[draw,rectangle,fill=white,thick,text width=5em,align=center] (RA) at (-2.0,0) {Reputation Agency}; 
\node[draw,rectangle,fill=white,thick,text width=6.5em,align=center] (UM) at (2.8,0) {Social Learning \\ in \\ Twitter}; 

\draw[thick, black, ->] (CN.east) --  (RA.west);
\draw[thick, black, ->] (RA.east) -- node[midway, above] {sentiment} (UM.west);
\draw[thick, black, ->] (UM.south) -- (2.8,-1.5) -| node[near start, above] {reputation} (RA.south) ;

\end{tikzpicture}
  \caption{Schematic of dynamics of the reputation agency and the Twitter network. Content (i.e. games, movies, books) is provided to the reputation agency which then publishes its review as a  tweet. The Twitter network then responds to the sentiment of this tweet by retweeting.
 This may  increase or decrease the  reputation of the agency.} 
\label{fig:IntroConnect}
%\vspace{-1pt}
\end{figure}

\subsection*{Related work}  
\cite{Gra78} is a seminal paper in collective human behavior.
 The book \cite{Cha04} contains a complete treatment of social learning models.
Social learning has been used widely in economics, marketing, political science and sociology  to model the behavior of financial markets, 
crowds, social groups and social networks; see \cite{BHW92,AO11,KT12,Cha04} and numerous references therein. Related models have been studied in  sequential decision making  \cite{CH70}  and statistical signal processing \cite{CSL13,KP14}.
A tutorial exposition of social learning in and sensing is given in our recent paper \cite{KP14}.
Online reputation systems are reviewed and studied in \cite{Del03,JIB07,LNJ10}.
 %In
%\cite{KT12},  
%a model of Bayesian social learning is considered in which  agents receive private information about the state of nature and observe actions of their neighbors in a tree-based network.
Information caused by influential agents %(agents who heavily affect  actions of other agents in social networks) 
 is investigated in 
\cite{AO11,KH13}.   In \cite{XSK11} examples are given that show that if
 just 10\% of the population holds and unshakable belief, their belief will be adopted by the majority of society.

The role of social influence in decision making (which we consider in Sec.\ref{sec:incest}) is studied in \cite{BFJ12}.
The expectation polling \cite{RW10} (which we consider in Sec.\ref{sec:vote}) is a  form of   {\em social sampling}  \cite{DKS12} where
  participants in a poll respond with a summary of their friends
responses.  \cite{DKS12} analyzes the effect of the social  network structure on the bias and variance of expectation polls.
Social sampling has interesting parallels with  the so called Keynesian beauty contest, see for example \url{https://en.wikipedia.org/wiki/Keynesian_beauty_contest}
for a discussion.

Data incest arises in other areas of electrical engineering.
 The  so called the count to infinity problem in the distance vector routing protocol in packet switched networks \cite{RT99}  is a type of misinformation propagation.
 Data incest also arises in  Belief Propagation (BP) algorithms in computer vision and error-correcting coding theory. 
%The aim of BP algorithms is to solve inference problems over graphical models such as Bayesian networks (where nodes represent random variables and edges depict dependencies among them) by computing a marginal distribution. 
BP algorithms require passing local messages over the graph (Bayesian network) at each iteration. %These algorithms converge to the exact marginal distribution when the factor graph is a tree (loop free). 
For graphical models with loops, BP algorithms are only approximate due to the over-counting of local messages \cite{YFW05} which is similar to data incest.  The algorithms presented in this paper can remove  data incest in Bayesian social learning over non-tree  graphs that satisfy a topological constraint.
In \cite{KH13}, data incest is considered in a network where agents exchange their private belief states (without social learning).  Simpler versions 
were studied in~\cite{Aum76,GP82}.  

Regarding revealed preferences,  highly influential works in the economics
literature include those by 
Afriat \cite{Afr67,Afr87} and
Varian  (chief economist at  Google). In particular Varian's work  includes
measuring the welfare effect of price discrimination~\cite{Var85}, analysing the relationship between prices of broadband Internet access and time of use service~\cite{Var12}, and 
 auctions for advertisement position placement on page search results from Google~\cite{Var12,Var09}.

%\begin{comment}

\subsection*{Perspective}

%The methodology of this paper  can be interpreted in terms of the recent {\em Time}   article \cite{Tut13} which provides interesting rules for online reputation systems. These include: (i) review the reviewers, and  (ii) censor fake (malicious) reviewers. Data incest removal  can be viewed as  ``reviewing the reviews" of 
%agents to remove misinformation.
In
Bayesian estimation,  
the twin effects of social learning (information aggregation with interaction amongst agents) and data incest 
(misinformation propagation) lead to non-standard information patterns in estimating the underlying state of nature.
 Herding occurs when the public belief overrides the private observations and thus actions of agents are independent of their private observations.
Data incest results in bias in the public belief as a consequence of the unintentional re-use of identical actions in the formation of public belief in social learning; the information gathered by each agent is mistakenly considered to be independent. This results in overconfidence and bias in estimates of the state.

%Privacy issues impose important constraints on  individuals. Typically, individuals are not willing to disclose private observations. Optimizing interactive  sensing with privacy constraints is an important problem. 
Privacy and reputation pose conflicting requirements: % on human-based sensing:
 privacy requirements  result in  noisier measurements or lower resolution actions (since individuals are not willing to disclose private observations), while a high degree
 of reputation requires accurate measurements.  Utility functions, noisy private measurements and quantized actions are essential
 ingredients of the  social  learning models presented in this paper 
that facilitate
modeling this tradeoff between reputation and privacy.

%\end{comment}

\section{Reputation Dynamics and Data Incest} 
   \label{sec:classicalsocial}

\subsection{Classical Social Learning}\label{sec:herd}

% We briefly review the classical  social learning model  for  the interaction of individuals. Subsequently, we will deal with more general models over a social network.
 
Consider a multi-agent system that aims to estimate the state of an underlying finite state random variable $x \in
\X = \{1,2,\ldots,X\}$ with known prior distribution $\pi_0$.
Each agent acts once  in a predetermined sequential order indexed by $k=1,2,\ldots$   %The index $k$ can also be viewed
%as the discrete time instant when agent $k$ acts.  
Assume at the beginning of iteration $k$,
all agents have access to the public belief $\pi_{k-1}$ defined in  Step (iv) below.
The social learning protocol proceeds as follows
 \cite{BHW92,Cha04}:\\
 (i) {\em Private Observation}: At time $k$,
agent $k$  records a private observation $y_k\in \Y $ 
from the observation distribution $B_{iy} = P(y|x=i)$, $i \in \X$.
Throughout this paper we assume that $\Y = \{1,2,\ldots,Y\}$ is finite.
\\
(ii) 
{\em Private Belief}:  Using the public belief $\pi_{k-1} $ available at time $k-1$ (Step (iv) below), agent $k$   updates its private
posterior belief  $\priv_k(i)  =  P(x_k = i| a_1,\ldots,a_{k-1},y_k)$ using Bayes formula:
\begin{align}  \label{eq:hmm} \priv_k &= 
\frac{B_{y_k} \pi}{ \mathbf{1}_X^\p B_y  \pi}, \;  %\text{ where }
B_{y_k} = \text{diag}(P(y_k|x=i),i\in \X) . 
 \end{align}
 Here $\mathbf{1}_X$ denotes the $X$-dimensional vector of ones, $\eta_k$ is an $X$-dimensional probability mass function (pmf). \\
  (iii)   {\em Myopic Action}: Agent  $k$  takes  action $a_k\in \A = \{1,2,\ldots, A\}$ to  minimize its expected cost 
 \begin{multline}  
  a_k =  \arg\min_{a \in \A} \E\{c(x,a)|a_1,\ldots,a_{k-1},y_k\} \\  =\arg\min_{a\in \A} \{c_a^\p\priv_k\}.    \label{eq:myopic}
  \end{multline}
  Here $\ca = (c(i,a), i \in \X)$ denotes an $X$ dimensional cost vector, and $c(i,a)$ denotes the cost  incurred when the underlying state is $i$ and the  agent chooses action $a$.\\
%  Agent $k$ then broadcasts its  action $a_k$.\\
(iv) {\em Social Learning Filter}:   
Given the action $a_k$ of agent $k$,  and the public belief $\pi_{k-1}$, each  subsequent agent $k' > k$ 
%\footnote{In classical social learning agent $k$ acts only once; so there is no need for agent $k$ to update its public belief after choosing its action.} 
performs social learning to
update the public belief $\pi_k$ according to the  ``social learning  filter":\ \beq \pi_k = \Ts(\pi_{k-1},a_k), \text{ where } \Ts(\pi,a) = 
 \frac{\Bs_a \,\pi}{\sigs(\pi,a)}, \label{eq:piupdate}\eeq
where
$\sigs(\pi,a) = \mathbf{1}_X^\p \Bs_a \tp^\p \pi$ is the normalization factor of the Bayesian update.
In (\ref{eq:piupdate}),  the public belief $\pi_k(i)  = P(x_k = i|a_1,\ldots a_k)$ and $\Bs_a  = \text{diag}(P(a|x=i,\pi),i\in \X ) $ has elements
\begin{align*} % \label{eq:aprob}
 & P(a_k = a|x_k=i,\pi_{k-1}=\pi) = \sum_{y\in \Y} P(a|y,\pi)P(y|x_k=i) \\
 &   P(a_k=a|y,\pi) = \begin{cases}  1 \text{ if }  c_a^\p B_y \tp^\p \pi \leq c_{\ta}^\p B_y \tp^\p\pi, \; \ta \in \A \\
 0  \text{ otherwise. }  \end{cases}
%= \sum_{y\in \Y} \prod _{\ta \in \A - \{a\}}I(c_a^\p B_{y} \pi < c_{\ta}^\p B_{y} \pi) P(y|x=e_i)
  \end{align*}
%where $I(\cdot)$ denotes the indicator function and $B_y$ is defined in (\ref{eq:hmm}).
%The social learning filter (\ref{eq:piupdate}) is given in \cite{Kri12}.

 The following
result which is well known in the economics literature \cite{BHW92,Cha04}:

\begin{theorem}[\cite{BHW92}] 
\label{thm:herd} The social learning protocol  leads to an {\em information cascade}\footnote{
  A {\em herd of agents} takes place at time $\bar{k}$, if the actions of all agents after time $\bar{k}$ are identical, i.e., $a_k = a_{\bar{k}}$ for all
time  $k > \bar{k}$. An information cascade implies that a herd of agents occur.
 \cite{TBB10}  quotes the following anecdote of user  influence and herding in a social network: ``... when a popular blogger left his blogging site for a two-week vacation, the site's visitor tally fell, and content produced by three invited substitute bloggers could not stem the decline."}  in finite time
with probability~1. That is, after some finite 
 time $\bar{k}$ social learning ceases and the public belief $\pi_{k+1} = \pi_k$, $k \geq \bar{k}$, and all agents choose the same action  $a_{k+1} = a_k$, $k\geq \bar{k}$.
  \qed\end{theorem}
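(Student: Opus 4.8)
The plan is to show that the public belief sequence $\{\pi_k\}$ is a bounded martingale (with respect to the filtration generated by the actions $a_1,\ldots,a_k$), invoke the martingale convergence theorem to get almost-sure convergence of $\pi_k$ to some limit $\pi_\infty$, and then argue that on the event that the limit is a ``fixed point'' of the social learning filter the actions must be constant, while the complementary event has probability zero. The key structural fact to exploit is that the social learning filter $T(\pi,a)$ in \eqref{eq:piupdate} is an honest Bayesian update, so $\pi_k(i) = P(x=i\mid a_1,\ldots,a_k)$ and hence $\E\{\pi_k \mid a_1,\ldots,a_{k-1}\} = \pi_{k-1}$; each coordinate is a martingale bounded in $[0,1]$.

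**First I would** set up the probability space with the prior $\pi_0$ on $x$ and the product observation kernel, define the action-generated filtration $\mathcal{F}_k = \sigma(a_1,\ldots,a_k)$, and verify the martingale property coordinatewise --- this is immediate from the tower property since $\pi_k$ is a conditional expectation of the indicator $\mathbf 1\{x=i\}$. Martingale convergence then gives $\pi_k \to \pi_\infty$ a.s. and in $L^1$. **Next** I would translate convergence of the belief into convergence of behavior: since $a_k$ is the argmin in \eqref{eq:myopic} of a linear functional $c_a^\p \eta_k$ of the private belief, and since $\eta_k$ is obtained from $\pi_{k-1}$ by the fixed finite collection of Bayes updates $B_y$, $y\in\Y$, the map from $\pi_{k-1}$ to the \emph{distribution} of $a_k$ is piecewise constant; more to the point, once $\pi_{k-1}$ lies in the (relatively open) region where the argmin is strict for every $y$, the update $T(\pi_{k-1},a_k)$ does \emph{not} depend on which $y$ was realized, i.e. the belief stops moving. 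The argument is then: if $\pi_\infty$ lies in such a region, then from the first time the belief is close enough we have $\pi_{k+1}=\pi_k$ deterministically and all subsequent actions coincide; if instead $\pi_\infty$ lies on one of the finitely many ``indifference'' hyperplanes $\{\pi : c_a^\p B_y T^\p \pi = c_{\tilde a}^\p B_y T^\p \pi\}$, one shows this happens with probability zero because on that event the martingale increments fail to vanish in $L^1$ --- contradicting convergence --- unless the observation is in fact uninformative on $\pi_\infty$, which again forces the belief (and the action) to be frozen.

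**The main obstacle** is the last step: carefully ruling out ``perpetual indifference.'' One must show that the event $\{\pi_k \to \pi_\infty \text{ with } \pi_\infty \text{ on an indifference manifold and actions not eventually constant}\}$ is null. The clean way is a contradiction argument: on $\{\pi_k \not\to \text{fixed point}\}$ there is, infinitely often, a positive-probability private observation $y$ whose induced posterior $\eta_k$ puts $a_k$ strictly on one side of the indifference plane, so $T(\pi_{k-1},a_k)\neq \pi_{k-1}$ by a non-negligible amount; since the state space of beliefs is compact and there are only finitely many $(y,a,\tilde a)$ combinations, a uniform lower bound $\E\|\pi_k-\pi_{k-1}\| \geq \delta >0$ persists along a subsequence, contradicting $L^1$-convergence $\E\|\pi_k-\pi_{k-1}\|\to 0$. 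I would also need the elementary but essential observation that $a_{k+1}=a_k$ once $\pi_{k}=\pi_{k-1}$: this is because the action rule \eqref{eq:myopic} depends on $k$ only through $\eta_k$, which depends only on $\pi_{k-1}$ and $y_k$, and once the belief is frozen the \emph{set} of actions that can be chosen (over all $y_k$) is a singleton (that is precisely what being a fixed point of $T(\cdot,a)$ for the realized $a$ entails, after checking all $y$). Finiteness of $\X$, $\Y$, $\A$ is used throughout to get the compactness and the ``finitely many hyperplanes'' bookkeeping; I would state that reliance explicitly.
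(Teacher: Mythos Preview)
The paper does not actually prove this theorem: it is stated with a citation to \cite{BHW92} and terminated with \qed. So there is no ``paper's own proof'' to compare against; your proposal has to be judged on its own.

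The overall strategy --- bounded martingale $\Rightarrow$ almost-sure convergence of $\pi_k$ $\Rightarrow$ the limit must be a fixed point of the social learning filter $\Rightarrow$ eventual cascade --- is the standard modern route (cf.\ Smith--S{\o}rensen, or Chamley \cite{Cha04}), and is perfectly viable here given the finiteness of $\X,\Y,\A$. But there is a genuine confusion in the middle step that would make the argument fail as written.

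You write that ``once $\pi_{k-1}$ lies in the (relatively open) region where the argmin is strict for every $y$, the update $T(\pi_{k-1},a_k)$ does not depend on which $y$ was realized, i.e.\ the belief stops moving.'' This conflates two different things. Strictness of the argmin for every $y$ only means that the map $y\mapsto a(\pi,y)$ is single-valued; it does \emph{not} mean that map is \emph{constant} in $y$. If two observations $y,y'$ yield distinct actions $a\neq a'$, then both $T(\pi,a)$ and $T(\pi,a')$ are genuine Bayesian updates that move $\pi$ (in opposite directions, typically), and the belief does not stop. The region where the belief freezes is the \emph{cascade region} $\{\pi:\ a(\pi,y)=a^*\ \text{for all }y\in\Y\}$, which is a strict subset of the complement of the indifference hyperplanes.

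The correct middle step is: the indifference hyperplanes partition the simplex into finitely many open cells on each of which the map $y\mapsto a(\pi,y)$ is a fixed function of $y$. On any cell where this function is \emph{non-constant}, each realized action $a$ has $R^\pi_a$ not proportional to the identity, so $T(\pi,a)\neq\pi$; by compactness and finiteness there is a uniform lower bound $\|T(\pi,a)-\pi\|\geq\delta>0$ on any compact subset of that cell. Hence $\pi_\infty$ cannot lie in the interior of a non-cascade cell (this would contradict $\pi_k\to\pi_\infty$). Together with your argument ruling out the measure-zero indifference set, this forces $\pi_\infty$ into the interior of a cascade cell, and then the ``first time close enough'' argument goes through. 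Your final paragraph about $\pi_k=\pi_{k-1}\Rightarrow a_{k+1}=a_k$ is also not what you need: the right statement is that once $\pi_{k-1}$ is in a cascade cell, $a_k$ is the same for every $y_k$, hence $R^{\pi_{k-1}}_{a_k}\propto I$, hence $\pi_k=\pi_{k-1}$, and by induction the belief and action are frozen thereafter.
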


\subsection{Information Exchange Model} \label{sec:iem}
 In comparison to the previous subsection,
%The  classical social learning model described above takes place on a tandem network.
we now  consider social learning on a family of time dependent directed acyclic graphs  - in such cases, apart from herding, the phenomenon of data incest arises. 

%Modelling and mitigating data incest
%for reputation systems and polling systems are the subject of the remainder of the paper.
%We   describe  the social learning model that is used to mimic the behavior of agents in online reputation systems. 
%How can multiple individuals interacting over a social network  estimate an underlying state of nature?  The state is the quality of a social parameter such as quality of a restaurant or political party.

%The rest of this section is organized as follows:
%\begin{compactenum}
%\item 
 %Sec.\ref{sec:social}

%\item In Sec.\ref{sec:repute}, a fair reputation protocol is presented and the criterion for achieving a fair rating is defined.
%\item  Sec.\ref{sec:removal} presents an incest removal algorithm so that the online reputation system
%achieves a fair rating. A necessary and sufficient condition is given on the graph structure of information exchange between agents so that a fair rating is achievable.
%\end{compactenum}

%Recommender systems have also become important tools in e-commerce.  They combine they ratings of multiple users to answer queries such as "Would I like X?"  with predictions and suggestions. \cite{RKM

%\subsection{Information Exchange Graph in Social Network}\label{sec:social}

%The information exchange between agents in the social network is  formulated on a family of time dependent directed acyclic graphs.
 Consider an online reputation system  comprised of agents $\{1,2,\ldots,S\}$ that aim to estimate an underlying state of nature (a random variable). 
 Let $x \in 
 \X = \{1,2,\ldots,X\}$
 represent the state of nature (such as the quality of a restaurant/hotel) with known prior distribution $\pi_0$. Let $k = 1,2,3,\ldots$ depict epochs at which events occur. 
 %These events involve taking observations, evaluating beliefs and choosing actions as  described below. 
 The index $k$ marks the historical order of events and not  absolute time. For simplicity, we refer to $k$ as ``time".

% \subsubsection{Information Exchange Graph} 

 It is convenient also to reduce the coordinates of time $k$ and agent $s$  to a single integer index  $n$:
\begin{equation} \label{reindexing_scheme}
 n \triangleq s+ S(k-1), \quad
s \in \{1,\ldots, S\}, \; k = 1,2,3,\ldots 
\end{equation}
We  refer to $n$ as a ``node" of a time dependent information flow   graph $G_n$ which we now define.
 Let \begin{equation}\label{eq:defG} G_{n} = (V_{n}, E_{n}), \quad n  = 1,2,\ldots \end{equation} denote a sequence of time-dependent {\em directed acyclic graphs}
  ({\em DAGs})\footnote{A DAG is a directed graph with no directed cycles. The ordering of nodes $n=1,2,\ldots,$ proposed here is  a special case of the well known result
 that the nodes of a DAG are  partially orderable; see  Sec.\ref{sec:vote}.}
  of information flow in the social network until and including time $k$. Each vertex in $V_{n}$ represents an agent $s'$  at time $k'$ and each edge $(n',n'')$ in $E_{n}\subseteq V_{n} \times V_{n}$ shows that the information (action) of node $n'$ (agent $s'$ at time $k'$) reaches node $n''$ (agent $s''$ at time $k''$).  It is clear that  $G_n$ is  a sub-graph of $G_{n+1}$. %This means that the diffusion of actions  can be modelled via a family of time-dependent DAGs.

%The algorithms below will involve specific columns of the adjacency matrix  and transitive closure matrix  of the graph $G_n$.
The Adjacency Matrix $A_n$ of $G_n $ is an $n\times n$ matrix with elements $A_n(i,j)$  given by 
\beq \label{eq:adjacencymatrix}
A_n (i,j)=\begin{cases}
1 &\textrm{ if } (v_j,v_i)\in E \;, \\
0 &\textrm{ otherwise}
\end{cases}\;, \text{  } A_n(i,i)=0.
\eeq
%For simplicity,  we also assume that  an individual $s$ remembers all its past information. So $A_n(l,m) = 1$ for $l=[s,k]$, $m=[s,k']$ and  $k <  k'$ - that is, all the past information 
%of individual $s$ form single hop links to its current information.\footnote{This assumption is  introduced only for simplicity of exposition and is not required in the subsequent results.}

The transitive closure matrix $T_n$ is the  $n\times n$ matrix 
\beq T_n =  \text{sgn}((\mathbf{I}_n-A_n)^{-1}) \label{eq:tc} \eeq
where for  matrix $M$, the matrix $\text{sgn}(M)$ has elements
$$ 
\text{sgn}(M)(i,j) = \begin{cases} 0 & \text{ if } M(i,j)=0\;, \\
1  & \text{ if } M(i,j) \neq 0. \end{cases}
$$
Note that $A_n(i,j) = 1$ if there is a single hop path between nodes $i$ and $j$, In comparison,
$T_n(i,j) = 1$ if there exists a path (possible multi-hop) between  $i$ and $j$.

The information reaching node $n$ depends on the information flow graph  $G_n$.
The following two sets will be used  to specify the incest removal algorithms below:
\begin{align}
\history_n &= \{m : A_n(m,n) = 1 \}  \label{eq:history}  \\
\full_n &= \{m : T_n(m,n) = 1 \} . \label{eq:full}  
\end{align}
Thus $\history_n$ denotes the set of previous nodes $m$ that communicate with node $n$ in a single-hop.
In comparison, $\full_n$
  denotes the set of previous nodes $m$ whose information eventually arrives at node $n$. Thus $\full_n$  contains all possible  multi-hop connections by which information from a node $m$
 eventually reaches node $n$. %The reader should compare this with $\history_n$ (\ref{eq:history}) which denotes all one hop connections to node $n$.

\subsubsection*{Properties of $A_n$ and $T_n$}
 Due to causality with respect to the time index $k$ (information sent by an agent can only arrive at another agent at a later time instant),
 the following obvious properties hold  (proof omitted):

\begin{lemma} \label{lem:properties} Consider the sequence of DAGs $G_n$, $n=1,2,\ldots$ \\
(i) The adjacency matrices $A_n$  are  upper triangular. \\  % and $A_n(i,i)=0$ by convention.  \\
 $A_n$ is the upper left  $n\times n$ submatrix of $A_{n+1}$. \\
(ii) The transitive closure matrices
 $T_n$  are upper triangular  with ones on the diagonal. Hence, $T_n$ is invertible.
\\
(iii)
Classical social learning of Sec.\ref{sec:herd}, is a trivial example with %a tandem network and 
adjacency matrix
$A_n(i,j) = 1$ for $j=i+1$ and $A_n(i,j) = 0$ elsewhere. \qed
\end{lemma}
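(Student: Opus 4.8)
The plan is to dispatch the three parts in order; the only part that needs any thought is (i), namely checking that the linear reindexing \eqref{reindexing_scheme} is a topological ordering of every $G_n$. Once that is in place, (ii) is essentially a one‑line consequence of nilpotency of strictly triangular matrices, and (iii) is just a matter of writing down the graph underlying classical social learning.

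For (i), I would first show that every directed edge of $G_n$ runs from a lower‑indexed node to a higher‑indexed one. Suppose node $n'$ (agent $s'$ at epoch $k'$) sends information to node $n''$ (agent $s''$ at epoch $k''$). Causality of information flow forces $k'\le k''$, and $k'=k''$ can occur only with $s'<s''$, since an action is available no earlier than the epoch at which it is taken. If $k'<k''$ then by \eqref{reindexing_scheme}, $n'=s'+S(k'-1)\le S+S(k'-1)=Sk'\le S(k''-1)<s''+S(k''-1)=n''$; if $k'=k''$ with $s'<s''$ then $n'<n''$ directly. Hence $n'<n''$ in all cases. Combined with $A_n(i,i)=0$ from \eqref{eq:adjacencymatrix}, this says $A_n$ has nonzero entries only strictly above the diagonal, i.e. it is strictly upper triangular. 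The nestedness claim is then immediate: passing from $G_{n+1}$ to $G_n$ only removes the single latest vertex $n+1$ together with the edges meeting it, and by the causality just established every such edge has $n+1$ as an endpoint, so none lies inside $\{1,\dots,n\}$. Thus $E_{n+1}$ restricted to the first $n$ vertices equals $E_n$, which is precisely the statement that $A_n$ is the leading $n\times n$ principal submatrix of $A_{n+1}$.

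For (ii), note that $\mathbf{I}_n-A_n$ is upper triangular with unit diagonal (as $A_n(i,i)=0$), hence invertible with an upper‑triangular, unit‑diagonal inverse; equivalently, since a strictly upper triangular matrix is nilpotent ($A_n^{\,n}=0$), the Neumann series terminates and $(\mathbf{I}_n-A_n)^{-1}=\sum_{r=0}^{n-1}A_n^{\,r}$. Each summand is upper triangular (products of upper triangular matrices are upper triangular), strictly so for $r\ge1$, so the sum is upper triangular with every diagonal entry $1$. Applying $\text{sgn}(\cdot)$ entrywise leaves the pattern of vanishing entries unchanged, so $T_n$ is upper triangular with ones on the diagonal; consequently $\det T_n=1\ne 0$ and $T_n$ is invertible. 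I would also remark, to justify the terminology, that $A_n^{\,r}(i,j)$ counts directed walks of length $r$ from $i$ to $j$, so $T_n(i,j)=1$ exactly when $j$ is reachable from $i$ (or $i=j$), consistent with $\full_n$ in \eqref{eq:full}.

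For (iii), in the protocol of Sec.~\ref{sec:herd} the agents act once each in sequence and the only quantity handed on is the public belief $\pi_{k-1}$, a sufficient statistic for all earlier actions; equivalently one takes $S=1$ and lets node $k$ transmit only to node $k+1$. The information‑flow graph is then the directed path $1\to 2\to\cdots\to n$, whose adjacency matrix has $A_n(i,i+1)=1$ and all other entries zero, which is the claimed form (and as a sanity check $\history_n=\{n-1\}$, $T_n$ is the all‑ones upper triangular matrix, so $\full_n=\{1,\dots,n-1\}$, with a unique directed path between any two nodes — no data incest). The only genuine obstacle in the whole lemma is the bookkeeping in part (i): one must state precisely what ``causality'' means for the information‑exchange model and confirm that \eqref{reindexing_scheme} respects it; everything afterwards is routine linear algebra, which is presumably why the original statement omits the proof.
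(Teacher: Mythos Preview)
The paper does not actually prove this lemma: it simply remarks that the properties are ``obvious'' consequences of causality with respect to the time index~$k$ (``information sent by an agent can only arrive at another agent at a later time instant'') and writes ``proof omitted.'' Your proposal is a correct and careful fleshing-out of exactly this causality argument, so it is consistent with the paper's intended approach, only far more detailed.

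One small remark: your case analysis in part~(i) allows for within-epoch edges ($k'=k''$ with $s'<s''$), whereas the paper's phrasing (``a later time instant'') suggests the model only permits strictly $k'<k''$. This does not harm correctness---your argument simply covers a case that may never arise---but you could streamline by noting that the paper's causality assumption already gives $k'<k''$, whence $n'\le Sk'\le S(k''-1)<n''$ directly. Everything else (the Neumann-series computation for $T_n$, the nestedness of the $A_n$, and the path-graph description of classical social learning) is exactly right and indeed routine, which is presumably why the authors felt free to omit it.
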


The appendix contains an  example of data incest that illustrates the above notation.

\subsection{Data Incest Model and Social Influence Constraint} \label{sec:incest}

%\subsubsection{Social Influence Constraint and Ordinal Action Choice}
Each node $n$ receives recommendations from its immediate friends (one hop neighbors) according to the information flow graph defined above.
 That is, it receives  actions  $\{a_m, m \in \history_n\}$ from  nodes  $m \in \history_n$ and then seeks 
to  compute the associated public beliefs  $\belief_m, m\in \history_n$.
If node $n$ naively (incorrectly)  assumes that the public beliefs $\belief_m, m\in \history_n$  are 
  independent, then
it would    
  fuse  these  as
  \beq  \belief_n{-} = \frac{\prod_{m\in \history_n} \belief_m }{\mathbf{1}_X^\p \prod_{m\in \history_n} \belief_m}.\label{eq:dataincest}\eeq
  This naive data fusion would result in data incest.
  
\subsubsection{Aim} The aim is to provide each node $n$ the true posterior distribution %with an unbiased estimate 
\beq \tbelief_{n-}(i) = P(x = i | \{a_m, m \in \full_n\})   \label{eq:aims}\eeq
subject to the following {\em social influence constraint}:
There exists a fusion algorithm $\mathcal{A}$ such that
  \beq  \tbelief_{n-} = \mathcal{A} (\pi_m, m \in \history_n).   \label{eq:si}\eeq

 \subsubsection{Discussion. Fair Rating and Social Influence}
We briefly pause to discuss 
  (\ref{eq:aims}) and (\ref{eq:si}).\\
%\begin{align*}  \belief_{n-}(i) &= P(x= i | a_m, m \in \history_n) \\
% \eta_n &=  P(x = i | a_m, m \in \history_n,  y_n )  \
% \end{align*}
(i) We  call $\tbelief_{n-}$ in (\ref{eq:aims})  the {\em true}  or {\em fair online rating} available to
  node $n$
 since 
  $\full_n $ defined in (\ref{eq:full}) denotes all  information (multi-hop links) available  to node $n$. By definition
  $\tbelief_{n-}$ is incest free and is the desired conditional probability that agent $n$ needs.
\footnote{%
  For the reader unfamiliar with Bayesian state estimation: Computing the posterior  $\tbelief_{n-}$ is crucial
%for estimating a random variable or more generally a random process (nonlinear filtering problem) given
 noisy observations.
 Then the
conditional mean estimate is evaluated as $\E\{x| \{a_m, m \in \full_n\} \}= \sum_{x \in \X} 
 x \tbelief_{n-}(x) $ and  is the minimum variance estimate, i.e., optimal in the mean square error sense and more generally
 in the Bregmann loss function sense. The conditional
mean is a `soft' estimate and is unbiased by definition.
  Alternatively the maximum aposteriori `hard' estimate is evaluated as $\argmax_x \tbelief_{n-}(x)$. \label{foot:bayes}}

  Indeed, if node $n$ combines  $\tbelief_{n-}$ together with its own private observation via social learning, then
clearly
\begin{align*}   \eta_n(i) &=  P(x = i | \{ a_m, m \in \full_n\},  y_n ), \quad i \in \X,  \\
\pi_n(i)  &= P(x = i | \{a_m, m \in \full_n\},  a_n ),\quad i \in \X, \end{align*}
are, respectively,   the correct (incest free) private belief for node $n$ and the correct after-action public belief.
If agent $n$ does not use  $\tbelief_{n-}$, then incest can propagate; for example if agent $n$ naively uses (\ref{eq:dataincest}).

Why should an individual $n$ agree to  use  $\tbelief_{n-}$ to combine with its private message? It is here that the social influence constraint (\ref{eq:si}) is important.
   $ \history_n$ can be viewed as the  ``{\em social message}'', i.e., personal friends of node $n$ since they directly communicate to node $n$, while the associated beliefs can be viewed as the
``{\em informational message}''.
%Theorem \ref{} below shows that this fusion can be achieved by a weighted (linear)  combination of  $\{\belief_m, m \in \history_n\}$. 
As described in the remarkable recent paper  \cite{BFJ12}, the social message  from personal friends exerts a  large social influence\footnote{In a study conducted by social networking site {\em myYearbook}, 81\% of respondents said they had received advice from friends and followers relating to a product purchase through a social site; 74 percent of those who received such advice found it to be influential in their decision. ({\em Click Z}, January 2010).}
 -- it provides significant incentive (peer pressure) for individual $n$ to comply with
the protocol of combining its estimate with  $\tbelief_{n-}$  and thereby prevent incest.
 \cite{BFJ12} shows  that  receiving  messages 
from known friends has significantly more influence on an individual than the information in the messages. This study includes a comparison of information messages and social messages on Facebook and their direct
effect on voting behavior.
 To quote \cite{BFJ12}, ``The effect of social transmission on real-world voting
was greater than the direct effect of the messages themselves..."  In Sec.\ref{sec:expt}, we provide results of an experiment on human subjects that also  illustrates social influence in social learning. 

\subsection{Fair Online Reputation System: Protocol \ref{protocol:socialcons}} \label{sec:repute}

%With the above discussion, %it only remains to formalize the online reputation system protocol and describe the construction of algorithm $\mathcal{A}$.
%\subsubsection{Protocol for Fair Online Reputation System}
 The procedure specified in  Protocol \ref{protocol:socialcons}   evaluates the  fair online rating by eliminating
data incest in a social network. The aim is to achieve (\ref{eq:aims}) subject to (\ref{eq:si}).

\begin{algorithm}\floatname{algorithm}{Protocol}
(i) {\em  Information from Social Network}: %This proceeds in  two steps
\begin{compactenum}
\item {\em Social Message from Friends}: Node $n$ receives social message $\history_n$ comprising the names or photos of friends that have made  recommendations.
\item {\em Informational Message from Friends}: 
The reputation system    fuses  recommendations   $\{a_m, m \in \history_n\}$   into  the single informational message $\belief_{n-}$
and presents this to  node $n$.
\\
The reputation system computes $\belief_{n-}$  as follows:
\begin{compactenum}
\item   $\{a_m, m \in \history_n\}$ are used to compute public beliefs $\{\pi_m, m \in \history_n\}$ using Step (v) below.
\item 
$\{\pi_m, m \in \history_n\}$ are fused into  $\belief_{n-}$ as
\beq  \belief_{n-} = \mathcal{A} (\pi_m, m \in \history_n) . \label{eq:fuse}\eeq
In Sec.\ref{sec:removal},   fusion algorithm $\mathcal{A}$ is  designed as 
\beq   \lbelief_{n-}(i) =  \sum_{m  \in \history_n} w_n(m) \, \lbelief_m(i), \quad i \in \X.    \label{eq:algA} \eeq
Here $\lbelief_m (i) = \log \belief_m(i)$ and $w_n(m)$ are weights.
\end{compactenum}

\end{compactenum}
(ii) {\em  Observation}: Node $n$  records   private observation  $y_n$ from the  distribution $B_{iy} = P(y|x=i)$, $i \in \X$.\\
(iii) {\em Private Belief}:  Node $n$   uses $y_n$ and informational message $\belief_{n-}$ to update its private belief via Bayes rule:
\beq   \priv_n = 
\frac{B_{y_n}  \belief_{n-}}{ \mathbf{1}_X^\p B_y  \belief_{n-}} . \label{eq:privi} \eeq
\\
(iv) {\em Recommendation}: Node $n$ makes recommendation
$$a_n = \arg\min_a c_a^\p \eta_n$$ and records this on the  reputation system.\\
(v) {\em Public Belief Update by Network Administrator}: Based on recommendation $a_n$,
the reputation system (automated algorithm)  computes the public belief $\pi_n$ using  the social learning
filter (\ref{eq:piupdate}).
\caption{Incest Removal   in Online Reputation System}\label{protocol:socialcons}
\end{algorithm}

At this stage, the public rating $\belief_{n-}$ computed in (\ref{eq:fuse}) of  Protocol \ref{protocol:socialcons} is not necessarily the fair online rating  $\tbelief_{n-}$ of (\ref{eq:aims}).
Without careful design of  algorithm $\mathcal{A}$ in (\ref{eq:fuse}), 
due to  inter-dependencies of actions on previous actions,  $\belief_{n-}$  can be substantially different
from $\tbelief_{n-}$. 
Then  $ \eta_n $ computed via (\ref{eq:privi}) will not be the correct private belief and 
  incest will propagate in the network.  In other words,  $\eta_n$, $\belief_{n-}$ and $\belief_n$ are defined purely in terms of their computational expressions in Protocol \ref{protocol:socialcons};  %at this stage
  they are not necessarily the desired  conditional probabilities, unless algorithm $\mathcal{A}$ is properly designed to remove incest.
 Note also the requirement that algorithm $\mathcal{A}$ needs to satisfy the social influence constraint (\ref{eq:si}).

 \subsection{Ordinal Decision Making in Protocol \ref{protocol:socialcons}} \label{sec:ordinal}

%\subsubsection{Ordinal  Decision Making} 
Protocol \ref{protocol:socialcons} assumes that each agent is a Bayesian utility optimizer.
The following discussion shows that under reasonable conditions, such a Bayesian model is a useful idealization of  agents' behaviors.

Humans typically make {\em monotone} decisions - the more favorable the private  observation,
the higher the recommendation. Humans   make {\em ordinal} decisions\footnote{Humans typically convert numerical attributes to ordinal scales before making  decisions. For example,
it does not matter if the cost of a meal at a restaurant is \$200 or \$205; an individual would classify this cost as ``high". 
Also credit rating agencies use ordinal symbols such as AAA, AA, A.} since humans tend to think in symbolic ordinal terms.
Under what conditions is the recommendation $a_n$ made by  node $n$ {\em monotone increasing} in its observation
$y_n$ and {\em ordinal}? 
Recall from Steps (iii) and (iv)  of Protocol \ref{protocol:socialcons} that the recommendation of node $n$ is
$$a_n(\tbelief_{n-},y_n) = \argmin_a c_a ^\p B_{y_n} \tbelief_{n{-}}$$
So an equivalent question is: Under what conditions is the $\argmin$ increasing in observation $y_n$?
Note that an increasing argmin is an {\em ordinal} property - that is,
$  \argmin_a c_a ^\p B_{y_n} \tbelief_{n{-}}$ increasing in $y$ implies $\argmin_a \phi(c_a ^\p B_{y_n} \tbelief_{n{-}})$ is also increasing in $y$ for any monotone function $\phi(\cdot)$.

The following result gives sufficient conditions for each agent to give a  recommendation that is monotone and ordinal in its private observation:
\begin{theorem} \label{thm:monotone}
Suppose the observation probabilities and costs satisfy the following conditions:
\begin{compactenum}
\item[(A1)] $B_{iy}$ are TP2 (totally positive of order 2); that is,
$B_{i+1,y}B_{i,y+1} \leq B_{i,y} B_{i+1,y+1}$.
\item[(A2)]  $c(x,a)$ is submodular. That is, $c(x,a+1) - c(x,a) \leq c(x+1,a+1)-c(x+1,a)$.
\end{compactenum}
Then
\begin{compactenum}
\item
Under (A1) and (A2),  the recommendation $a_n(\tbelief_{n-},y_n) $  made by agent $n$ is increasing and hence ordinal in observation $y_n$, for any $\tbelief_{n{-}}$. 
\item Under (A2),     $a_n(\tbelief_{n-},y_n) $ is increasing in belief $\tbelief_{n-}$ with respect to the monotone likelihood ratio (MLR) stochastic order\footnote{ \label{footnotemlr} Given probability mass functions
$\{p_i\}$ and $\{q_i\}$, $i=1,\ldots,X$ then
$p$ MLR dominates $q$ if  $\log p_i - \log p_{i+1} \leq \log q_i - \log q_{i+1}$.}  for any observation~$y_n$.\qed
\end{compactenum} 
\end{theorem}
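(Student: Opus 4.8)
The plan is to prove both parts as consequences of standard monotonicity results for single-crossing/supermodular optimization, following the classical argument of Milgrom--Shannon and the lattice-theoretic machinery used in partially observed Markov decision processes. I would first set up the relevant notation: write the agent's objective as $C(a,y,\pi) = c_a^\prime B_y \pi$, a function of the action $a\in\A$, the observation $y\in\Y$, and the belief $\pi\in\Pi$ on $\X$, and note that the recommendation is $a_n = \argmin_{a} C(a,y_n,\tbelief_{n-})$. The entire theorem then reduces to showing that $C$ has the appropriate single-crossing (equivalently, submodularity-type) property jointly in $(a,y)$ for Part 1, and in $(a,\pi)$ in the MLR order for Part 2, after which the increasing-$\argmin$ conclusion is automatic from Topkis's monotonicity theorem.

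For Part 1, the key step is to show that under (A1)--(A2) the scalar $C(a,y,\pi) = \sum_{i\in\X} c(i,a) B_{iy} \pi(i)$ is submodular in $(a,y)$ for every fixed $\pi$. Here the natural route is: (i) the TP2 assumption (A1) says the kernel $B_{iy}$ is log-supermodular in $(i,y)$, which by the basic closure property of TP2 functions implies that for fixed $\pi$ the ``tilted'' measure $i\mapsto B_{iy}\pi(i)$ is MLR-increasing in $y$; (ii) the submodularity of the cost $c(x,a)$ in (A2) means $c(\cdot,a+1)-c(\cdot,a)$ is an increasing function of $x$; (iii) combining these, $\sum_i \big(c(i,a+1)-c(i,a)\big) B_{iy}\pi(i)$, after normalization, is a conditional expectation of an increasing function against an MLR-increasing family of measures, hence is decreasing in $y$ — wait, here one must be careful about signs, and I would do the bookkeeping with the normalization $\sigma(y,\pi) = \mathbf 1_X^\prime B_y \pi$ to reduce everything to a genuine expectation before invoking that MLR-dominance preserves expectations of monotone functions. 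Once $C(\cdot,\cdot,\pi)$ is submodular in $(a,y)$ on the lattice $\A\times\Y$, Topkis's theorem gives that $\argmin_a C(a,y,\pi)$ is increasing in $y$; ordinality is then immediate since the $\argmin$ is invariant under any monotone transformation $\phi$ applied to $C$, as already observed in the paragraph preceding the theorem statement.

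For Part 2 the argument is structurally identical but now the ordering is on beliefs. The claim is that $a\mapsto c_a^\prime B_y \pi$ has decreasing differences in $(a,\pi)$ when $\Pi$ is ordered by the MLR stochastic order. The step to carry out is: fix $y$ and consider the difference $D_a(\pi) = (c_{a+1}-c_a)^\prime B_y \pi$; I would show $\pi \gr \pi'$ (MLR order) implies the normalized quantity $D_a(\pi)/\sigma(y,\pi)$ has the right monotonicity, again using that MLR dominance implies dominance of expectations of the increasing function $i\mapsto (c(i,a+1)-c(i,a))B_{iy}$ — noting that multiplication by the positive diagonal $B_y$ preserves MLR order, another standard closure fact. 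With decreasing differences in $(a,\pi)$ established, Topkis's theorem again yields $\argmin_a$ increasing in $\pi$ w.r.t.\ the MLR order, for any fixed $y_n$; note only (A2) is needed here since the TP2 assumption was used in Part 1 solely to move monotonicity from $\pi$ to $y$.

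The main obstacle I anticipate is getting the normalization and the direction of the inequalities exactly right: because $C$ is expressed as $c_a^\prime B_y\pi$ rather than as an expectation against a probability vector, one must divide through by $\sigma(y,\pi)$ (or $\mathbf 1_X^\prime B_y\pi$) before invoking the ``MLR dominance preserves expectation of increasing functions'' lemma, and one must check that this normalization does not destroy the submodularity in $(a,y)$ — it does not, since the normalizer does not depend on $a$, so it factors out of the difference $C(a+1,\cdot)-C(a,\cdot)$ cleanly; but this is the place where a careless argument would go wrong. A secondary, purely technical point is verifying the two closure properties invoked (TP2 kernel applied to a vector yields an MLR-monotone family; diagonal scaling by a positive TP2 column preserves MLR order), both of which are standard (e.g.\ Karlin's total positivity theory) and which I would cite rather than reprove.
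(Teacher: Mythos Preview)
Your proposal is correct and follows essentially the same route as the paper's proof: use (A1) to show the posterior $B_y\pi/\mathbf{1}'B_y\pi$ is MLR-increasing in $y$, use (A2) to make $c_{a+1}-c_a$ monotone in its components, combine to obtain a single-crossing property for $(a,y)\mapsto c_a'B_y\pi$, and invoke the monotone-selection theorem. The only cosmetic difference is that the paper works directly with the single-crossing condition (Definition~\ref{def:scc}) on the \emph{unnormalized} objective, noting that the positive normalizer $\sigma(y,\pi)$ preserves signs; this sidesteps the normalization worry you flag, since single crossing requires only the implication $(c_{a+1}-c_a)'B_{y+1}\pi\ge 0\Rightarrow(c_{a+1}-c_a)'B_y\pi\ge 0$ rather than full submodularity of the unnormalized cost. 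Your alternative---pass to the normalized objective, which \emph{is} submodular in $(a,y)$ and has the same $\argmin$---is equally valid and arguably cleaner. Your treatment of Part~2 is also on target and slightly more explicit than the paper, which does not spell it out separately.
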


The proof is in the appendix.
We can interpret the above theorem as follows. If agents makes recommendations that are monotone and ordinal in the observations and monotone in the prior, then they mimic
the Bayesian social learning model.  Even if the agent does not exactly follow a Bayesian social learning model,  its monotone ordinal behavior implies that such a Bayesian model is  a useful idealization.

Condition (A1) is widely studied in monotone decision making; see the classical book by  Karlin \cite{Kar68} and  \cite{KR80}; numerous examples of noise distributions are  TP2. Indeed in the highly cited paper
\cite{Mil81} in the economics literature, observation $y+1$ is said to be more ``favorable news''  than observation $y$ if Condition (A1) holds.

Condition (A2) is the well known submodularity condition \cite{Top98}. Actually (A2) is a stronger version of the   more general single-crossing condition \cite{MS92,Ami05} stemming from the economics
literature (see appendix)
$$ (c_{a+1} - c_a )^\p B_{y+1} \tbelief \geq 0 \implies (c_{a+1} - c_a )^\p B_{y} \tbelief \geq 0.$$ This single crossing condition is ordinal, since  for any monotone function $\phi$,
it is equivalent to
$$ \phi( (c_{a+1} - c_a )^\p B_{y+1} \tbelief ) \geq 0 \implies \phi( (c_{a+1} - c_a )^\p B_{y} \tbelief)  \geq 0.$$
(A2) also makes sense in a reputation system for the costs to be well posed.
 Suppose the recommendations in action set $\A$ are arranged
in increasing order and also  the states in $\X$ for the underlying state are arranged in ascending order. Then (A2) says:  if recommendation $a+1$  is more accurate
than recommendation $a$ for state $x$; 
then recommendation $a+1$ is also more accurate than recommendation $a$ for state $x+1$ (which is a higher quality state than $x$).

In the experiment  results reported in Sec.\ref{sec:expt}, we found that  (A1) and (A2) of Theorem \ref{thm:monotone} are justified.

\subsection{Discussion and Properties of Protocol \ref{protocol:socialcons}} \label{sec:discussion}

For the reader's convenience we provide an illustrative example of data incest in the appendix.
 We now discuss several other properties of Protocol~\ref{protocol:socialcons}.
 
 \subsubsection{Individuals have selective memory}  Protocol \ref{protocol:socialcons}  allows for  cases where 
each node can remember some (or all) of its past actions or none. This models cases where
people forget most of the past except for specific highlights.  For example, in the information flow graph of the illustrative example in the appendix (Fig.\ref{sample}),  if nodes 1,3,4 and 7 are assumed to be the same individual, then at node 7, the individual remembers what happened at node 5 and node 1, but not node 3.

\subsubsection{Security: Network  and Data Administrator} The social influence constraint (\ref{eq:si}) can  be viewed as a 
separation of privilege requirement for network security.  For example, the National Institute of Standards and Technology (NIST) of the U.S.\ Department of Commerce
 \cite[Sec.2.4]{SJT08} recommends  separating the  roles of data  and systems administrator.
Protocol \ref{protocol:socialcons} can then be interpreted as follows: A network administrator has access to the social network  graph (but not the data) and can compute weights 
$w_n(m)$ (\ref{eq:fuse})
by which the estimates are weighed.
A data administrator has access to the recommendations of friends (but not to the social network). Combining the
weights with the recommendations yields the informational message $\tbelief_{n-1}$ as in (\ref{eq:algA}).

\subsubsection{Automated Recommender System} Steps (i) and (v)  of Protocol \ref{protocol:socialcons} can be combined into an automated recommender system  that maps previous actions of agents in the social group
to a single recommendation (rating) $\belief_{n-}$ of (\ref{eq:fuse}). This recommender system can operate completely opaquely to the actual user (node $n$). Node $n$ simply
uses the automated rating $\belief_{n-}$ as the current best available rating from the reputation system. Actually Algorithm $\mathcal{A}$ presented below fuses the beliefs in a linear fashion. A human node $n$
 receiving an informational message comprised of a linear combination of recommendation of friends, along with the social message has incentive to follow the protocol as described in Sec.\ref{sec:incest}.

\subsubsection{Agent Reputation}
The  cost function minimization in Step (iv) can be interpreted in terms of the reputation of agents in online reputation systems. If an agent continues to write bad reviews for high quality
restaurants on Yelp, her reputation becomes lower among the users. Consequently, other people ignore reviews of that (low-reputation) agent in evaluating their opinions about the social unit under study (restaurant). Therefore, agents  minimize the penalty of writing inaccurate reviews.

\begin{comment}
\subsubsection{Think and Act}  Steps (ii) to  (v)  of Protocol \ref{protocol:socialcons} constitute standard social learning as described in Sec.\ref{sec:herd}.
The key difference with standard social learning is  Step (i) performed by the network administrator.
Agents  receive public beliefs from the social network with arbitrary random delays.
These delays reflect the time an agent takes between reading the publicly available reputation and making its decision. It is typical behavior of people to
 read  published ratings multiple times and then think for an arbitrary amount of time before acting. \end{comment}
 
 \subsection{Incest Removal Algorithm $\mathcal{A}$} \label{sec:removal}
 It only remains to  describe the construction of algorithm $\mathcal{A}$
in Step 2b of Protocol \ref{protocol:socialcons} so that 
\begin{align}
 \belief_{n-}(i) & =  \tbelief_{n-}(i) , \quad i \in \X \nonumber \\
 \text{ where } & \tbelief_{n-}(i) = P(x = i | \{a_m, m \in \full_n\}).   \label{eq:aimalg}\end{align}
%\begin{align*}  \belief_{n-}(i) &= P(x= i | a_m, m \in \history_n) \\
% \eta_n &=  P(x = i | a_m, m \in \history_n,  y_n )  \
% \end{align*}
To describe algorithm $\mathcal{A}$, we make the following definitions:
Recall  $\tbelief_{n-}$ in (\ref{eq:aims})  is the  {\em fair online rating} available to
  node $n$.
%\subsubsection{Fair Rating Algorithm}
 It is convenient to work with the logarithm of the un-normalized belief:
 %\footnote{The un-normalized belief proportional to $\belief_n(i)$ is the numerator of the social learning filter (\ref{eq:piupdate}).
%The corresponding un-normalized fair rating corresponding to $\tbelief_{n-}(i) $ is the joint distribution $P(x=i, \{a_m, m \in \full_n\})$.
%By taking  the logarithm of the un-normalized belief, Bayes formula merely becomes the sum of the log likelihood and log prior. This allows
 %us to devise a data incest  removal algorithm based on linear combinations of the log beliefs.}
   accordingly define
$$  \lbelief_n(i) \propto \log \belief_n(i), \quad \lbelief_{n-}(i) \propto  \log \belief_{n-}(i), \quad i \in \X.$$
%so that algorithm $\mathcal{B}$ can be written as $\mathcal{B}(\lbelief_m, m \in \history_n)$.
Define the
  $n-1$ dimensional weight vector:
\beq  w_n =  T_{n-1}^{-1}  t_n.  \label{eq:weight} \eeq
Recall  that
$t_n$ denotes the first $n-1$ elements of the  $n$th column of the transitive closure matrix $T_n$. Thus the weights are purely a function of the adjacency matrix of
the graph and do not depend on the observed data.

We present algorithm $\mathcal{A}$ in two steps: first, the actual computation is given in Theorem \ref{thm:socialincestfilter}, second, necessary and sufficient conditions on the information flow graph for the existence of such an algorithm to achieve the social influence constraint (\ref{eq:si}).

 %The following theorem explicitly gives algorithm $\mathcal{A}$. It shows that the logarithm of the fair rating $\tbelief_{n-}$ defined in (\ref{eq:aims}) can be obtained as  a weighted  linear combination of the logarithms of previous public beliefs.

%Define $\lbelief_n(i) = \log \belief_n(i)$.
  \begin{theorem}[Fair Rating Algorithm] \label{thm:socialincestfilter} Consider the  reputation system with  Protocol \ref{protocol:socialcons}.
Suppose the  network administrator runs the following algorithm in~(\ref{eq:fuse}):
%$ \lbelief_n(i) = \log \belief_n(i) = 
 %\label{eq:socialalgstructure}
%\eeq 
%mplemented in (\ref{eq:fuse})   by the network administrator:
\beq \label{eq:socialconstraintestimate}
\lbelief_{n-}(i)  =   \sum_{m=1}^{n-1} w_n(m)\, \lbelief_m(i)  % w_n^\p \, \lbelief_{1:n-1}(i) %+ \iota_{n},
\eeq
where the weights $w_n$ are chosen according to (\ref{eq:weight}).

Then  $\lbelief_{n-}(i) \propto \log \tbelief_{n-}(i)$. That is,  the fair  rating $\log \tbelief_{n-}(i)$ defined in (\ref{eq:aims}) is obtained via (\ref{eq:socialconstraintestimate}).
%Here  $\iota_n$ are the log-likelihoods of  the action defined as
%\begin{align*} \iota_n &= \log P(a_{n}|x=i, (a_m,m \in \full_n)), \\ \iota_{1:n-1} &= [\iota_1,\ldots,\iota_{n-1}]^\p 
%s\end{align*}
 \qed
\end{theorem}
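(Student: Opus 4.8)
The plan is to prove $\lbelief_{n-}(i)\propto\log\tbelief_{n-}(i)$ by induction on the node index $n$, using the key structural fact that the social learning filter (\ref{eq:piupdate}) acts additively in the log domain: each recommendation injects a single ``innovation'' into the log-belief, and the transitive closure matrix $T_n$ records exactly which innovations should reach node $n$. Here and below, ``up to a constant'' means up to an additive term independent of $i$, which is what the symbol $\propto$ absorbs in this log-domain convention. For the first step, define for each node $m$ the innovation $\Delta_m(i)=\log P(a_m\mid x=i,\belief_{m-})$, i.e.\ the log-likelihood that recommendation $a_m$ contributes given the state $i$ and the informational message $\belief_{m-}$ that node $m$ fused. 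By Step (v) of Protocol~\ref{protocol:socialcons} and (\ref{eq:piupdate}) we have $\belief_m\propto\Bs_{a_m}\belief_{m-}$, hence $\lbelief_m(i)=\lbelief_{m-}(i)+\Delta_m(i)$ up to a constant; unwinding this recursion, and using $m\in\full_m$, gives $\lbelief_m(i)=\log\pi_0(i)+\sum_{j\in\full_m}\Delta_j(i)$ up to a constant, \emph{provided} each ancestor $m<n$ was given the correct message $\belief_{m-}=\tbelief_{m-}$, which is the induction hypothesis.

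The second step is the incest-free identity for the target: $\log\tbelief_{n-}(i)=\log\pi_0(i)+\sum_{j\in\full_n\setminus\{n\}}\Delta_j(i)$ up to a constant, where $\tbelief_{n-}(i)=P(x=i\mid a_m,\ m\in\full_n\setminus\{n\})$ (the not-yet-formed $a_n$ is excluded from the conditioning). Each $a_j$ is a deterministic function of $(y_j,\belief_{j-})$, and $\belief_{j-}$ is a deterministic function of the recommendations $\{a_l:l\in\history_j\}$, hence recursively of $\{a_l:l\in\full_j\setminus\{j\}\}$ and ultimately of $\{y_l:l\in\full_j\setminus\{j\}\}$; by Lemma~\ref{lem:properties}(i) $\history_j\subseteq\{1,\dots,j-1\}$, and whenever $j$ is an ancestor of $n$ transitivity of reachability gives $\history_j\subseteq\full_n$. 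Listing the ancestors of $n$ in the causal order $1,2,\dots$ and applying the chain rule, and using that the private observations are conditionally independent given $x$, each factor collapses to $P(a_j\mid x,\belief_{j-})$, so $P(a_m,\ m\in\full_n\setminus\{n\}\mid x=i)=\prod_{j\in\full_n\setminus\{n\}}P(a_j\mid x=i,\belief_{j-})$. Multiplying by $\pi_0(i)$ and taking logs yields the claim. The delicate point is exactly that every ancestor is counted \emph{once} here even though $G_n$ contains multiple overlapping directed paths --- this is the step at which data incest is eliminated, and I expect it to be the main work of the proof.

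For the induction step, combine the first step with Lemma~\ref{lem:properties}: $T_{n-1}$ is the upper-left $(n-1)\times(n-1)$ submatrix of $T_n$, $T_n(j,m)=0$ for $j>m$, so $\full_m$ is the support of column $m$ of $T_{n-1}$, and the stored log-beliefs obey $\lbelief_m(i)=\log\pi_0(i)+\sum_{j=1}^{n-1}T_{n-1}(j,m)\,\Delta_j(i)$ up to a constant for all $m<n$. Substituting into (\ref{eq:socialconstraintestimate}) gives $\lbelief_{n-}(i)=\sum_{m=1}^{n-1}w_n(m)\lbelief_m(i)=\big(\sum_m w_n(m)\big)\log\pi_0(i)+\sum_{j=1}^{n-1}\big(T_{n-1}w_n\big)(j)\,\Delta_j(i)$ up to a constant. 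By Lemma~\ref{lem:properties}(ii), $T_{n-1}$ is invertible, so the weights $w_n=T_{n-1}^{-1}t_n$ of (\ref{eq:weight}) satisfy $T_{n-1}w_n=t_n$, and the innovation sum becomes $\sum_{j=1}^{n-1}T_n(j,n)\,\Delta_j(i)=\sum_{j\in\full_n\setminus\{n\}}\Delta_j(i)$. Comparing with the second step yields $\lbelief_{n-}(i)\propto\log\tbelief_{n-}(i)$, i.e.\ $\belief_{n-}=\tbelief_{n-}$, which closes the induction; the base case $n=1$ is immediate since $\full_1=\{1\}$ and $\belief_{1-}=\pi_0$.

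Where the difficulty lies: the third step is essentially the identity $T_{n-1}(T_{n-1}^{-1}t_n)=t_n$ and is routine once the framework is in place; the substance is the second step --- showing the joint likelihood of the ancestor recommendations factorizes with each ancestor appearing exactly once despite the loops in $G_n$. One bookkeeping caveat is the prior: the computation above produces $\big(\sum_m w_n(m)\big)\log\pi_0$, and $\sum_m w_n(m)$ need not equal $1$ for a general DAG (it does for the classical chain of Lemma~\ref{lem:properties}(iii)). The clean way to handle this --- and arguably the intended reading, since $\pi_0$ is present before any agent acts --- is to model $\pi_0$ as the log-belief of a virtual root node $0$ that is an ancestor of every node; then $T_n(0,n)=1$, the entry of $t_n$ at node $0$ inserts exactly one copy of $\log\pi_0$, and the identity $\log\tbelief_{n-}=\log\pi_0+\sum_{j\in\full_n\setminus\{n\}}\Delta_j$ comes out exactly.
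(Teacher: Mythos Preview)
Your proposal is correct and follows essentially the same route as the paper: define the log-likelihood innovation at each node (the paper writes $\bar{R}^{\belief}_{ia}=\log P(a\mid x=i,\belief)$, your $\Delta_m$), express the stored log-beliefs as $\lbelief_{1:n-1}(i)=T_{n-1}^{\prime}\,\bar{R}_{1:n-1}(i)$ and the incest-free target as $t_n^{\prime}\,\bar{R}_{1:n-1}(i)$, and then read off $w_n=T_{n-1}^{-1}t_n$ from the invertibility of $T_{n-1}$. The paper's proof is a terse five lines that simply asserts the two transitive-closure identities; your version supplies the factorization argument behind them (your ``second step'') and the induction the paper leaves implicit, and you correctly flag the prior-bookkeeping issue, which the paper silently absorbs into its $\propto$ convention on un-normalized log-beliefs.
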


Theorem  \ref{thm:socialincestfilter}  says that  the fair rating  $\tbelief_{n-}$ can be expressed as a linear function of the action log-likelihoods
 in terms of the transitive closure matrix $T_n$ of graph $G_n$.

\subsubsection*{Achievability of Fair Rating by Protocol \ref{protocol:socialcons}}
%We are not quite done. 
\begin{compactenum}
\item Algorithm $\mathcal{A}$  at node $n$ specified by  (\ref{eq:fuse})  needs to satisfy the social influence constraint (\ref{eq:si}) - that is, it needs to operate on beliefs
$\lbelief_m, m \in \history_n$. % -- or equivalently   beliefs from previous nodes specified by the  last column of the adjacency matrix $A_n$.
\item 
On the other hand,  to provide incest free estimates, algorithm $\mathcal{A}$ specified in (\ref{eq:socialconstraintestimate})  requires  all previous beliefs $l_{1:n-1}(i)$ that are specified by the non-zero elements of the vector $ w_n$.
\end{compactenum}
The only way to reconcile points 1 and 2 is  to  ensure  that  $A_n(j,n) = 0$ implies $w_n(j) = 0$ for $j=1,\ldots, n-1$. This condition means that the single hop 
past estimates $\lbelief_m, m \in \history_n$
available at node $n$ according to  (\ref{eq:fuse}) in Protocol \ref{protocol:socialcons}  provide all the information required to compute 
$w_n^\p \, \lbelief_{1:n-1}$ in (\ref{eq:socialconstraintestimate}).
 % This is a condition on the information flow graph $G_n$.
We formalize this condition in 
the following theorem.

\begin{theorem}[Achievability of Fair Rating]\label{thm:sufficient}
Consider the fair rating algorithm specified by (\ref{eq:socialconstraintestimate}). For Protocol \ref{protocol:socialcons}  using the social influence constraint information  $(\belief_m, m \in \history_n)$ to achieve the estimates $\lbelief_{n-}$ of algorithm (\ref{eq:socialconstraintestimate}),
a necessary and sufficient condition on the information flow graph $G_n$ is \beq \label{constraintnetwork}
A_n(j,n)=0   \implies w_n(j)= 0.
\eeq
Therefore for Protocol  \ref{protocol:socialcons} to generate incest free estimates for nodes $n=1,2,\ldots$, condition  (\ref{constraintnetwork}) needs to hold for each $n$.
(Recall 
  $w_n$ is specified in (\ref{eq:socialconstraintestimate}).)
\qed
\end{theorem}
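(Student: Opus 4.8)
The plan is to read the claim through Theorem~\ref{thm:socialincestfilter}. That theorem tells us that the fair rating $\tbelief_{n-}$ of (\ref{eq:aims}) is obtained at node $n$ \emph{precisely} by the linear log-belief rule $\lbelief_{n-}(i)=\sum_{m=1}^{n-1}w_n(m)\,\lbelief_m(i)$ with $w_n=T_{n-1}^{-1}t_n$ as in (\ref{eq:weight}), whereas Step~2b of Protocol~\ref{protocol:socialcons}, equation (\ref{eq:algA}), only permits the fusion rule to be a weighted sum over the single-hop predecessors, i.e.\ over $\{\lbelief_m: m\in\history_n\}$. So ``can Protocol~\ref{protocol:socialcons} deliver $\tbelief_{n-}$ at node $n$?'' becomes ``is $\sum_{m=1}^{n-1}w_n(m)\,\lbelief_m$ already a function of $\{\lbelief_m: m\in\history_n\}$ for every realization of the private observations?'', and I will show this is equivalent to the support of $w_n$ being contained in $\history_n$, which is exactly (\ref{constraintnetwork}) since $\history_n=\{m:A_n(m,n)=1\}$. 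I treat the two implications separately; since $n$ is arbitrary, requiring achievability at every node forces (\ref{constraintnetwork}) at every $n$.

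\emph{Sufficiency.} Assume (\ref{constraintnetwork}). Then every term of $\sum_{m=1}^{n-1}w_n(m)\,\lbelief_m$ with $m\notin\history_n$ vanishes, so this sum literally equals $\sum_{m\in\history_n}w_n(m)\,\lbelief_m$; hence the rule (\ref{eq:algA}) prescribed in Protocol~\ref{protocol:socialcons} coincides, term for term, with the fair-rating rule (\ref{eq:socialconstraintestimate}) of Theorem~\ref{thm:socialincestfilter} and respects the social-influence constraint (\ref{eq:si}). A routine induction on $n$ (base case: the first node has empty $\history$) confirms that the log-beliefs $\lbelief_m$ handed to node $n$ are the genuine incest-free ones, so Theorem~\ref{thm:socialincestfilter} applies verbatim, giving $\belief_{n-}=\tbelief_{n-}$; the remarks following (\ref{eq:aims}) then make $\priv_n,a_n,\belief_n$ correct as well, closing the induction.

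\emph{Necessity.} I argue the contrapositive. Suppose for some $n$ there is $j<n$ with $A_n(j,n)=0$ (so $j\notin\history_n$) and $w_n(j)\neq0$. A first observation, using the upper-triangular structure of Lemma~\ref{lem:properties} and $w_n=T_{n-1}^{-1}t_n$, is that $w_n(j)\neq0$ forces $j\in\full_n$: a nonzero entry $(T_{n-1}^{-1})(j,m)$ requires $j$ to reach $m$, and some such $m$ must reach $n$. Thus $j$ does reach $n$ and hence reaches at least one node of $\history_n$, so the beliefs at $\history_n$ ``see'' node $j$ --- but only through its quantized recommendation $a_j$, never through $\lbelief_j$ itself. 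I then construct a choice of model primitives $(B,c,\pi_0)$ and two positive-probability observation profiles $y^{(a)}_{1:n-1}$, $y^{(b)}_{1:n-1}$ that induce the \emph{same} public belief $\belief_m$ for every $m\in\history_n$ but \emph{different} $\lbelief_j$, hence different $\tbelief_{n-}\propto\exp(\sum_m w_n(m)\,\lbelief_m)$. Since the input $(\belief_m,m\in\history_n)$ to the fusion algorithm is identical in the two cases while the target differs, no rule admitted by Protocol~\ref{protocol:socialcons} can equal $\tbelief_{n-}$ in both, so achievability fails and (\ref{constraintnetwork}) is necessary.

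The main obstacle is that construction: one must hold \emph{every} belief $\belief_m$, $m\in\history_n$, fixed while genuinely moving $\lbelief_j$. This is where the coarse (quantized) nature of the recommendation map $\priv\mapsto\argmin_a c_a^\p\priv$ of Steps (iii)--(iv) --- the same feature used for the ordinal model in Theorem~\ref{thm:monotone} --- is essential: because the nodes of $\history_n$ observe only $a_j$ and the recommendations along the $j\to n$ paths, one can choose the social-learning-filter matrices $\Bs_a$ so that the map carrying ancestor recommendations into the $\history_n$-beliefs is non-injective along those paths (for instance, arranging $\Bs_a\Bs_b=\Bs_{a'}\Bs_{b'}$ for distinct action pairs), thereby decoupling $\{\belief_m:m\in\history_n\}$ from $\lbelief_j$ while keeping both profiles realizable with positive probability. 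Verifying that such a ``decoupling'' model exists whenever $j\in\full_n\setminus\history_n$ and $w_n(j)\neq0$ is the crux; everything else is bookkeeping with Theorem~\ref{thm:socialincestfilter}.
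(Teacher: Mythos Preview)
The paper does not give a separate formal proof of this theorem in the appendix; its entire argument is the two--bullet discussion that immediately precedes the theorem statement. That discussion is purely structural: Protocol~\ref{protocol:socialcons} fixes the fusion rule to be the linear sum (\ref{eq:algA}) over $m\in\history_n$ with the \emph{same} weights $w_n(m)$, while the fair rating (\ref{eq:socialconstraintestimate}) is the linear sum over all $m<n$. The two coincide if and only if the missing terms $\sum_{m\notin\history_n}w_n(m)\,\lbelief_m$ vanish identically, i.e.\ $w_n(m)=0$ whenever $A_n(m,n)=0$. Your sufficiency paragraph reproduces this exactly.

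For necessity you take a genuinely different and much heavier route: you try to exhibit, for an arbitrary fusion map $\mathcal{A}$, two positive--probability observation profiles with identical $\{\belief_m:m\in\history_n\}$ but different $\lbelief_j$, exploiting the many--to--one nature of the action map and engineering $\Bs_a\Bs_b=\Bs_{a'}\Bs_{b'}$ along the $j\to n$ paths. This is more than the paper asks: the statement concerns the specific weighted rule (\ref{eq:algA})/(\ref{eq:socialconstraintestimate}), not an arbitrary $\mathcal{A}$. And even for the stronger claim you target, the construction is left as an unverified ``crux''; as written, it is a gap rather than a proof. If you want a clean necessity argument at the paper's level of generality, the invertibility of $T_{n-1}$ (Lemma~\ref{lem:properties}) already does the work: since $\lbelief_{1:n-1}=T_{n-1}^{\prime}\,\loprob_{1:n-1}$ with $T_{n-1}$ nonsingular, $w_n=T_{n-1}^{-1}t_n$ is the \emph{unique} coefficient vector expressing $\lbelief_{n-}=t_n^{\prime}\loprob_{1:n-1}$ as a linear combination of the $\lbelief_m$'s, so no choice of coefficients supported on $\history_n$ can reproduce (\ref{eq:socialconstraintestimate}) unless $w_n(j)=0$ for every $j\notin\history_n$. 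That one--line linear--algebra observation replaces your entire decoupling construction.
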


%Note that the constraint (\ref{constraintnetwork}) is purely in terms of the adjacency matrix $A_n$, since the transitive closure matrix (\ref{eq:tc}) is a function of the adjacency matrix.

\noindent{\em Summary}: Algorithm (\ref{eq:socialconstraintestimate}) together with the condition (\ref{constraintnetwork}) ensure that incest free estimates are generated
by Protocol \ref{protocol:socialcons} that satisfy social influence constraint~(\ref{eq:si}).

\section{Experimental Results on Human Subjects} \label{sec:expt}
To illustrate social learning, data incest and social influence, this section presents an actual 
psychology experiment that was conducted by our colleagues at the Department of Psychology of University of British Columbia in September and  October,  2013,
see \cite{HSK14} for details. The participants comprised 36 undergraduate students  who participated in the experiment for course credit.

\subsection{Experiment Setup}
The experimental study involved 1658 individual trials. Each trial comprised two participants who were asked to
 perform a perceptual task interactively.  
 The perceptual task was as follows:
 Two arrays of circles denoted left and right, were given to each pair of participants. Each participant was asked to judge which array (left or right) had the larger average diameter. The
 participants answer (left of right) constituted their action. So the action space is $\A = \{0 \text{ (left)} ,1 \text{ (right)}\}$.
 
The circles  were prepared for each trial as follows: two $4\times4$ grids of circles were generated by uniformly sampling  from the radii: $\{ 20, 24, 29, 35, 42\}$ (in pixels). The average diameter of each grid was computed, and if the means differed by more than 8\% or less than 4\%, new grids were made. Thus in each trial, the left array and right array
   of circles differed  in the average diameter  by 4-8\% %(These numbers are based on  \cite{TG80}.)
   
 For each trial,
one of the two participants was chosen randomly to start the experiment  by choosing an action according to his/her observation. Thereafter, each participant was given access
to their partner's previous response (action) and the participants own previous action prior to making his/her  judgment. This mimics the social learning Protocol \ref{protocol:socialcons} of
 Sec.\ref{sec:repute}. 
  The participants continued choosing actions according to this procedure until the experiment terminated. The trial terminated when the response of each of the two participants did not change for three successive iterations (the two participants did not necessarily have to agree for the trial to terminate).
 
In each trial, the actions of participants were recorded along with the  time interval taken to choose their action. As an example, Fig.~\ref{Fig:Samplepath} illustrates the sample path of  decisions made by the two participants
in one of the 1658 trials. In this specific trial, the average diameter of the left array of circles was $32.1875$ and the right array was	$30.5625$ (in pixels); so the ground truth was $0 $ (left).

\begin{figure}[h]
\centerline{
\includegraphics[width=0.45\textwidth]{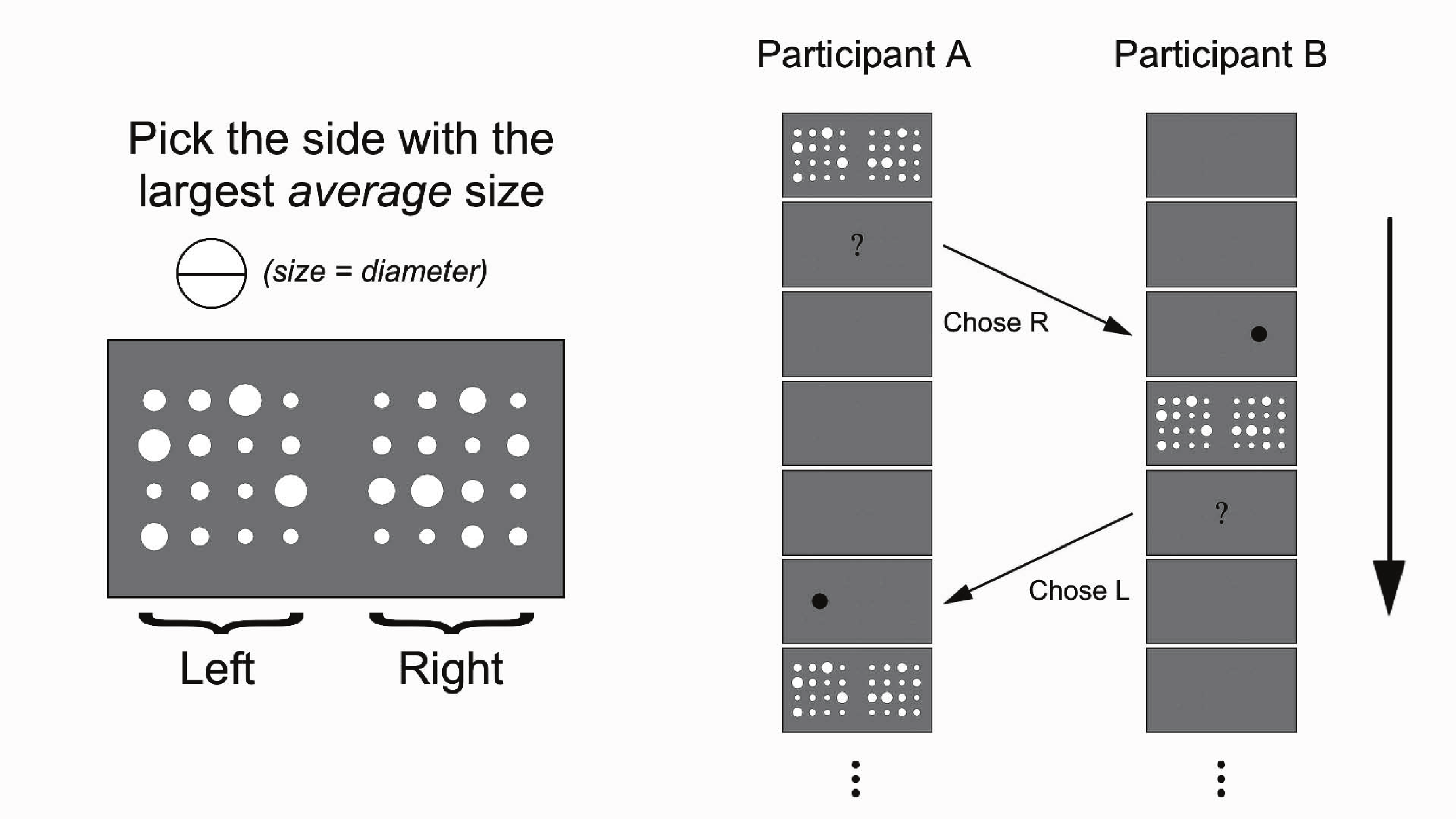}}
\caption{Two arrays of circles were given to each pair of participants on a screen. Their task is to interactively determine which side (either left or right) had the larger average diameter. The partner's previous decision was displayed on screen prior to the stimulus.}
\label{Fig:SocialSensor}
\end{figure}

 \begin{figure}[htb]
\centerline{
\includegraphics[width=0.4\textwidth]{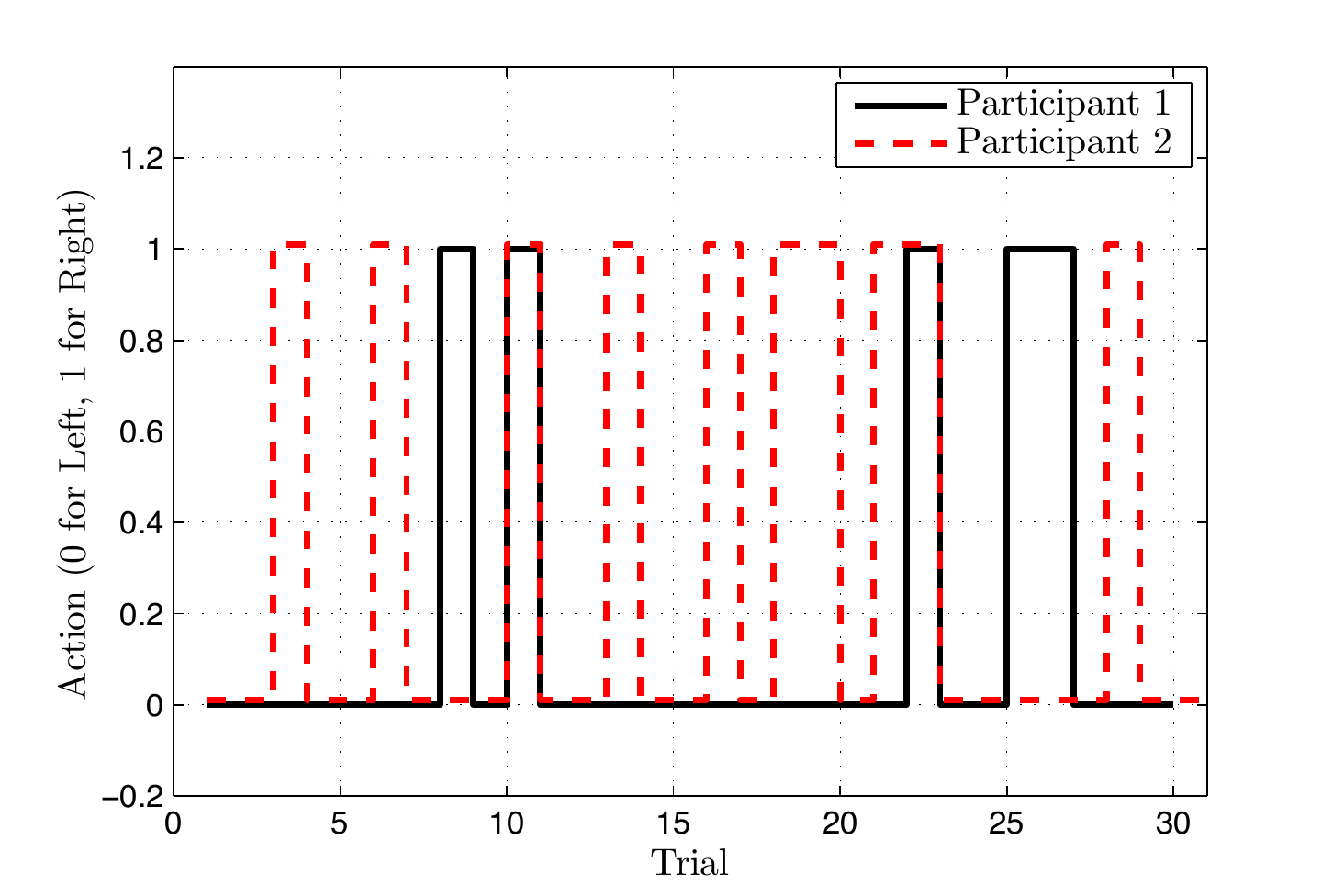}}
\caption{Example of sample  path of actions chosen by  two participants in a single trial of the experiment. In this  trial, both participants eventually chose the correct answer 0 (left).}
\label{Fig:Samplepath}
\end{figure}

\subsection{Experimental Results}   The results of our experimental study are as follows:

\subsubsection{Social learning Model} As mentioned above, the experiment for each pair of participants was continued until both participants' responses stabilized. 
 {\em In what percentage of these experiments, did an agreement occur between the two participants?} 
The answer to this question reveals  whether ``herding" occurred in the experiments
 and whether
the participants performed social learning (influenced by their partners). The experiments show that in  66\% of trials (1102 among 1658), participants reached an agreement; that is herding occurred. Further, in 32\% of the trials, both participants converged to the correct decision after a few interactions. 

To construct  a  social learning model for the experimental data,  we consider  the experiments where both participants reached an agreement.
Define the social learning success rate as
    \begin{equation}\nonumber
\frac{\text{\# expts where both participants chose correct answer}}{\text{\# expts where both participants reached an agreement}}\cdot
    \end{equation}
   In the experimental study, the state space is $\X = \{0,1\}$ where $x = 0$, when the left array of circles has the larger diameter and $x = 1$, when the right array has the larger diameter. The initial belief for both participants is considered to be $\pi_0 = [0.5, 0.5]$. The observation space is assumed to be $\Y = \{0,1\}$. 
   
   To estimate the social learning model parameters (observation probabilities  $B_{iy}$  and  costs  $c(i,a)$), 
we determined the parameters that best fit the learning success rate of the experimental data. The best fit parameters obtained were\footnote{Parameter estimation in social learning is a challenging problem not addressed in this paper. Due to the formation of  cascades in finite time, construction of an asymptotically  consistent estimator is impossible, since actions after the formation of a cascade contain no information.}
   \begin{align}
   &B_{iy}  = \begin{bmatrix} 0.61 & 0.39 \\ 0.41 & 0.59 \end{bmatrix}, \quad
   c(i,a) = \begin{bmatrix} 0 & 2 \\ 2 & 0 \end{bmatrix}\nonumber.  
   \end{align}
Note that $B_{iy}$ and $c(i,a)$ satisfy both the conditions of the Theorem \ref{thm:monotone}, namely TP2 observation probabilities and single-crossing cost. This implies that the subjects of this experiment made monotone and ordinal decisions.
   
\subsubsection{Data incest} Here, we study the effect of information patterns in  the experimental study that can result in data incest. Since private observations
are highly subjective and participants  did not
document these, we cannot claim with certainty if data incest changed the action of an individual. However, from the experimental data, we can localize specific
information patterns that can result in incest. In particular, we focus on  the two information flow graphs depicted in Fig.\ref{exp:dataincest}.
%we can determine data incest events in the experiments.
 In the two graphs of Fig.\ref{exp:dataincest}, the action of the first participant at time $k$ 
 influenced the action of the second participant at time $k+1$, and thus, could have been double
 counted by the first participant at time $k+2$. %Since we do not have access to the private observations of participants, we cannot conclusively say that data incest 
  %affected the action of the first participant at time $k+2$. However, 
We found that in 
   79\% of experiments, one of the information patterns shown in Fig.\ref{exp:dataincest} occurred (1303 out of  1658 experiments). Further, in 21\% of experiments, 
  the information patterns shown in Fig.\ref{exp:dataincest} occurred and at least one participant
changed his/her decision, i.e., the judgment of participant at time $k+1$ differed from his/her judgments at time $k+2$ and $k$.  These results
show that even for   experiments involving  two participants, data incest information  patterns occur frequently (79\%) and causes individuals to modify their
actions (21\%). It shows that  social learning protocols require careful design to handle and mitigate data incest.
\begin{figure}[h]
\centering
\scalebox{.42}{\includegraphics{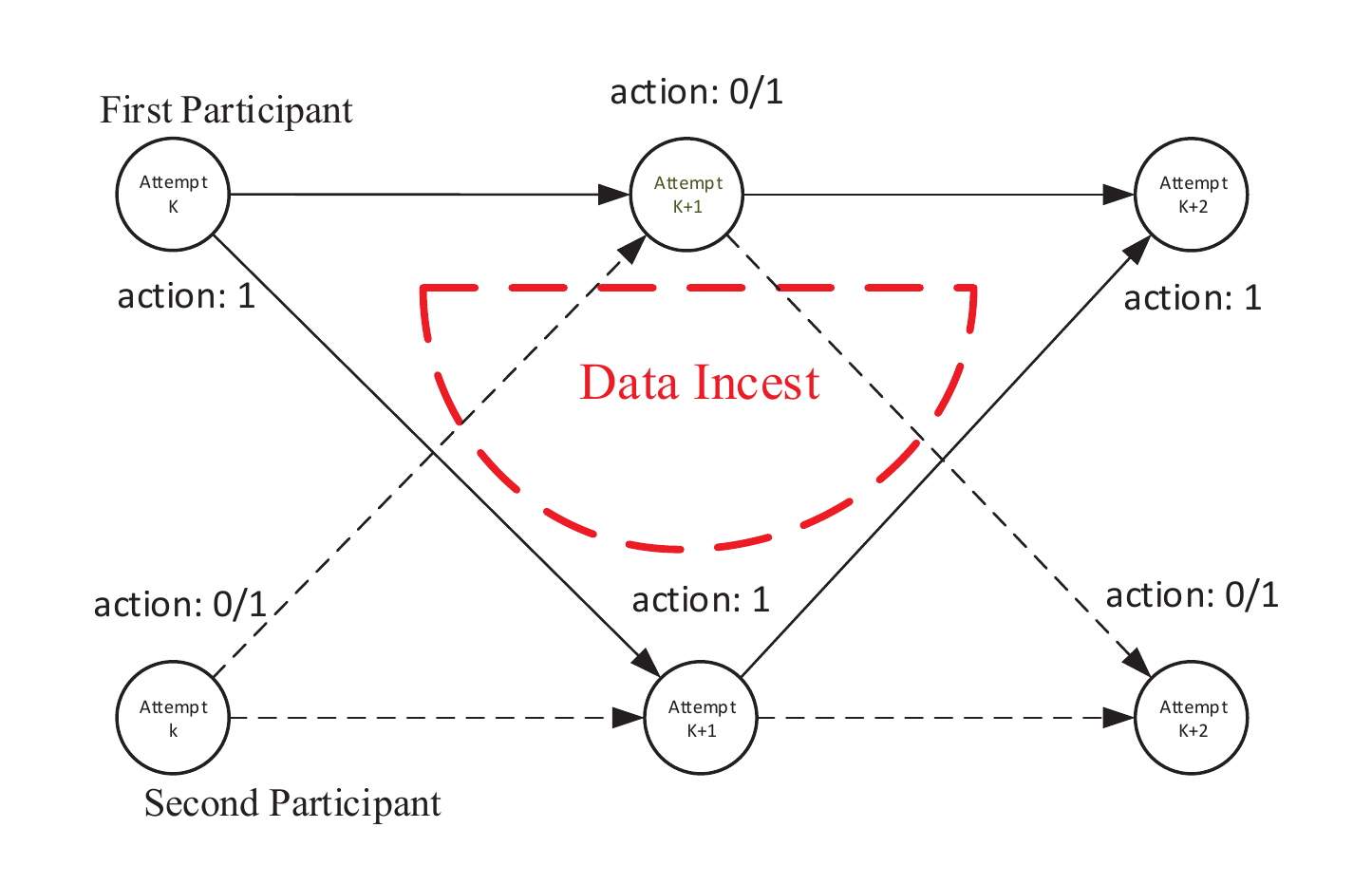}}
%\subcaption{}
%\label{diss}

\scalebox{.42}{\includegraphics{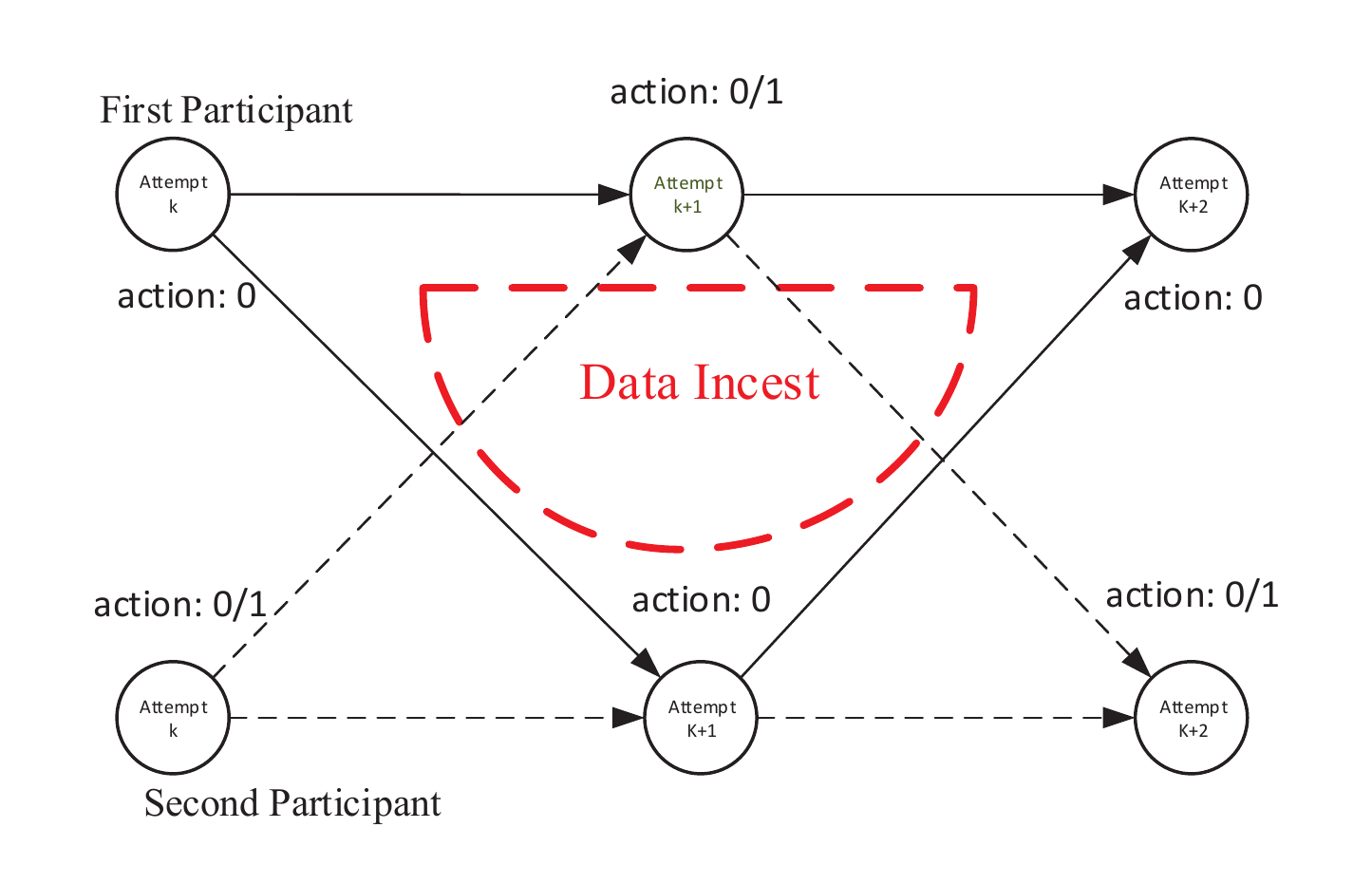}}
%\subcaption{}
%\label{diss2}
\caption{\label{exp:dataincest} Two information  patterns from our experimental studies  which can result in data incest.}
\end{figure}

%\begin{figure}[h]
%\centerline{
%\includegraphics[width=.5\textwidth]{summary.pdf}}
%\caption{Social learning with data incest that exercised by groups of students who were asked to perform a conceptual task in our experimental study.}
%\label{summary}
%\end{figure}

\section{Belief-based Expectation Polling} \label{sec:vote}

We now move on to the third part of the paper, namely data incest in expectation polling. Unlike previous sections, the agents no longer are assumed to  eliminate incest at each step.
So incest propagates in the network.  Given the incestious beliefs, the aim is to compute the posterior of the state.
We illustrate how the  results of  Sec.\ref{sec:classicalsocial} can  be used to eliminate data incest (and therefore bias) in expectation polling systems.
Recall from Sec.\ref{sec:introduction}  that  in expectation  polling \cite{RW10,DKS12},  individuals are sampled and  asked who they believe will win the election; as opposed to intent polling where
individuals are sampled and asked who they intend to vote for. The  bias of the estimate from expectation polling
depends strongly on the social network structure.  Our approach below is  Bayesian:  we 
compute the posterior and therefore the conditional mean estimate  (see Footnote \ref{foot:bayes}).

 We consider two formulations below:
\begin{compactenum}
\item An  expectation polling system where in addition to specific polled voters, the minimal number of additional voters are recruited. The pollster then is able to use the incestious beliefs
to compute the posterior conditioned on the observations
 and  thereby eliminate incest (Sec.\ref{sec:polldesign} below).
\item An expectation polling system when it is not possible to recruit additional voters. The pollster then can only compute the posterior conditioned on the incestious beliefs.
(Sec.\ref{sec:pollestimate} below). 
\end{compactenum}

The first approach can be termed as ``exact" since the estimate computed based on the incestious beliefs is equivalent
to the estimate computed based on the private observations of the sampled voters. The second approach, although optimal given the available information, has a higher variance (see footnote \ref{foot:price}).

Suppose $X$ candidates contest an election.
Let $x \in 
 \X = \{1,2,\ldots,X\}$ denote the candidate that is leading  amongst the voters, i.e., $x$ is the true state of nature.  There are $N$ voters. These voters  communicate their expectations via a social network
  according to the steps listed in Protocol~\ref{polling}.
We index the voters as nodes $n \in \{1,\ldots, N\}$ as follows:
 $m< n$ if there exists a directed path from node $m$ to node $n$ in the directed acyclic graph (DAG).  It is well known that such a labeling of nodes in a DAG constitute a partially ordered set.
 The remaining nodes can be  indexed arbitrarily, providing the above partial ordering holds.

%The   voters exchange  information  over a social network according to the steps listed in Protocol \ref{polling}.
Protocol \ref{polling}  is similar to Protocol \ref{protocol:socialcons}, except that voters exchange their expectations (beliefs) of who the leading candidate is, instead of
recommendations. (So unlike the  previous section, agents
do not perform social learning.)

\begin{algorithm}\floatname{algorithm}{Protocol}
\begin{compactenum}
\item {\em Intention from Friends}: Node $n$ receives the beliefs $\{\pi_m, m \in \history_n\}$ of who is leading from its immediate friends, namely  nodes  $\history_n$.
\item {\em Naive Data Fusion}: Node $n$ fuses the estimates of its friends naively (and therefore with incest) as 
$$\belief_{n-} = \frac{\prod_{m\in \history_n} \belief_m }{\mathbf{1}_X^\p \prod_{m\in \history_n} \belief_m}.  $$
\item {\em  Observation}: Node $n$  records its    private observation  $y_n$ from the  distribution $B_{iy} = P(y|x=i)$, $i \in \X$ of who the leading candidate is.
\item {\em  Belief Update}:  Node $n$   uses $y_n$  to update its  belief via Bayes formula:
\beq   \belief_n = 
\frac{B_{y_n}  \belief_{n-}}{ \mathbf{1}_X^\p B_{y_n}  \belief_{n-}} . \label{eq:privpoll} \eeq
\item Node $n$ sends its  belief $\belief_n$ to subsequent nodes as specified by the social network graph.
\end{compactenum}
\caption{Belief-based Expectation Polling Protocol} \label{polling}
\end{algorithm}

{\em Remark}: Similar to Theorem \ref{thm:monotone}, the Bayesian update (\ref{eq:privpoll}) in Protocol \ref{polling}
can be viewed as an idealized model;  under   assumption (A1)
 the belief $\belief_n$  increases with respect to the observation $y_n$ (in terms of the monotone likelihood ratio order). So even  if agents are not actually
 Bayesian, if they choose their belief to be monotone
 in the observation, they mimic the Bayesian update.

\subsection{Exact Incest Removal in Expectation Polling} \label{sec:polldesign}

Assuming that the voters follow Protocol \ref{polling}, the polling agency (pollster) seeks  to estimate  the leading candidate (state of nature $x$) by sampling
a subset of the 
 $N$  voters.  
 
 \subsubsection{Setup and Aim} 
Let $\Req = \{\Req_1,\Req_2,\ldots,\Req_L\}$ denote   the   $L-1$ sampled {\em recruited}  voters together with node  $\poll$ that denotes the pollster.
For example the  $L-1$ voters could be volunteers or
paid  recruits  that have already signed up
for the polling agency.  Since, by Protocol \ref{polling}, these voters have naively combined the intentions of their friends (by ignoring
dependencies), the pollster needs to poll additional voters to eliminate incest. Let $\additional$ denote this set of {\em extra} (additional) voters to poll.
Clearly the choice of $\additional$  will  depend   on $\Req$ and the structure of the social network.

What is  the smallest set $\additional$ of extra voters  to poll in order to  compute the posterior $P(x|y_1,\ldots,y_{\poll})$ of state $x$ (and therefore eliminate data incest)?

\subsubsection{Incest Removal}
Assume that all the recruited $L-1$ nodes in $\Req$ report to a central polling node  $\poll$. We assume that the pollster has complete knowledge of the network;
e.g.,  in an online social network like Facebook.

Using the formulation of Sec.\ref{sec:iem},
the pollster constructs the directed acyclic graph $G_n = (V_n,E_n)$, $n\in \{1,\ldots,  \poll\}$.
The methodology of Sec.\ref{sec:removal} is straightforwardly used to determine the minimal set of extra (additional) voters $\additional$.
The procedure is as follows:
\begin{compactenum}
\item
For each $n= 1,\ldots, \poll$,   compute the weight vector $w_n = T_{n-1}^{-1}  t_n$;  see  (\ref{eq:weight}).
\item The indices of the non-zero elements in $w_n$ that are not in $V_n$, constitute the minimal additional  voters (nodes) that need to be polled.
This is due to the necessity and sufficiency of (\ref{constraintnetwork}) in Theorem \ref{thm:sufficient}.
\item Once the additional voters in $\additional$ are polled for their beliefs, the polling agency corrects belief $\belief_n$ reported by node $n \in \Req$ to remove incest as:
\beq  \cbelief_{n}(i)  =    \lbelief_n(i) +  \sum_{m \in \additional, m < n} w_n(m) \lbelief_m(i) . \label{eq:voteincestremove} \eeq
%where  $\additional_n = \{ m \in \additional: m < n\}$.
\end{compactenum}

\subsection{Bayesian Expectation Polling  using Incestious Beliefs}  \label{sec:pollestimate}

%\subsubsection{Setup and Aim}
Consider  Protocol \ref{polling}. %where  each node $n$ receives beliefs from its one hop neighbors (immediate friends) $\history_n$.
 %Also identical to the previous subsection,
%let  $\Req = \{\Req_1,\Req_2,\ldots,\Req_L\}$ denote   the  pool of $L$ recruited voters that the pollster samples.
However, unlike  the previous subsection, assume that the pollster {\em cannot} sample additional voters $\additional$ to remove incest; e.g.,
the pollster is unable to  reach  marginalized sections of  the population. %\footnote{Respondent driven sampling (RDS) introduced by Heckathorn \cite{Hec97} is a Markov-chain Monte Carlo approach
%for sampling from marginalized populations in social networks and is currently used by the U.S. Centers for Disease Control and Prevention to sample from injection drug users. % that is part 
%of the National HIV Behavioral Surveillance System. 
    %RDS is beyond the scope of the current paper.}

Given the incest containing beliefs $\belief_{\Req_1}, \ldots, \belief_{\poll}$ of the $L$ recruits and pollster, how can the pollster compute the posterior
$P(x|\belief_{\Req_1}, \ldots, \belief_{\poll})$ and therefore the unbiased conditional mean estimate $\E\{x | \belief_{\Req_1}, \ldots, \belief_{\poll}\}$ where
$x$ is the state of nature\footnote{The price to pay for  not recruiting additional voters is an increase in variance of the Bayesian estimate. It is  clear that $E\{x|y_1,\ldots,y_{\poll}\}$ computed using additional voters in Sec.\ref{sec:polldesign} has a lower variance (and is thus a more accurate estimator) than  $\E\{x | \belief_{\Req_1}, \ldots, \belief_{\poll}\}$ since the sigma-algebra generated by $ \belief_{\Req_1}, \ldots, \belief_{\poll}$ is a subset of that generated by $y_1,\ldots,y_{\poll}$. Of course, any conditional mean estimator is  unbiased by definition. \label{foot:price}}?

\subsubsection{Optimal  (Conditional Mean) Estimation using Incestious Beliefs}

Define the following  notation
\begin{align*}  \belief_{\Req} &=  \begin{bmatrix} \belief_{\Req_1} &  \ldots & \belief_{\Req_L} \end{bmatrix}^\p, \;
 \lbelief_{\Req} =  \begin{bmatrix} \lbelief_{\Req_1} &  \ldots & \lbelief_{\Req_L} \end{bmatrix}^\p,
\\ Y_{\Req} &= \begin{bmatrix} y_{1} & \ldots & y_{\Req_L} \end{bmatrix}^\p, \;
\logoprob(x) = \begin{bmatrix}  \log b_{xy_1} \ldots,  \log b_{xy_n}  \end{bmatrix}^\p
 \end{align*}

\begin{theorem} \label{thm:estpollster}
Consider  the beliefs $\belief_{\Req_1}, \ldots, \belief_{\Req_L}$ of the $L$ recruits and pollster in an expectation poll operating according to Protocol  \ref{polling}. Then for each $x \in \X$,
 the
posterior is evaluated as
\beq \label{eq:pollest}
P(x| \belief_{\Req}) \propto \sum_{Y_{\Req}  \in \obseq}  \prod_{m=1}^{\Req_L}  B_{x y_m}\, \pi_0(x) .
\eeq
Here  $\obseq$ denotes the set of sequences $\{Y_{\Req}\} $ that satisfy
\begin{align} \label{eq:omat}
  \omat  \, \logoprob(x)  & = \lbelief_\Req(x)  - \omat e_1 \lbelief_0(x)  \\
\text{ where, } \omat &= \begin{bmatrix} 
e _{\Req_1} &  e_{\Req_2} & \cdots &  e _{\Req_L}  \end{bmatrix}^\p
(I -  A')^{-1}. \nonumber
\end{align}
Recall $A$ is the adjacency matrix and $e_m$ denotes the unit $\Req_L$ dimension vector    with 1 in the $m$-th position. \qed
\end{theorem}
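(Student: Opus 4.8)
The plan is to obtain the posterior by Bayes' rule, rewrite it as a sum over all private-observation sequences $Y_\Req = (y_1,\dots,y_{\poll})$ that are consistent with the reported beliefs $\belief_\Req$, and then identify that consistency set with $\obseq$ as defined in (\ref{eq:omat}). First I would observe that, because the social-network graph $G_n$ is fixed and known to the pollster, the beliefs $\belief_{\Req_1},\dots,\belief_{\poll}$ produced by Protocol \ref{polling} are a deterministic function of $Y_\Req$; hence Bayes' rule gives $P(x\mid\belief_\Req)\propto\pi_0(x)\,P(\belief_\Req\mid x)$, with $P(\belief_\Req\mid x)=\sum P(Y_\Req\mid x)$ where the sum ranges over exactly those observation sequences that produce the reported beliefs. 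Using that the private observations are conditionally independent given $x$ with $P(y_m\mid x)=B_{xy_m}$, one gets $P(Y_\Req\mid x)=\prod_{m=1}^{\poll}B_{xy_m}$, so the whole claim reduces to characterizing this fibre of consistent observation sequences.

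Next I would linearize Protocol \ref{polling} in the log domain. Taking logarithms of the naive fusion step and of the Bayes update (\ref{eq:privpoll}), and working throughout with un-normalized beliefs $\lbelief_n(x)\propto\log\belief_n(x)$ so that the per-step normalizations are absorbed into the final proportionality, the protocol becomes the affine recursion $\lbelief_n(x)=\log B_{xy_n}+\sum_{m\in\history_n}\lbelief_m(x)$, with the prior $\lbelief_0(x)=\log\pi_0(x)$ entering at the source nodes. Stacking this over $n=1,\dots,\poll$ and writing $\sum_{m\in\history_n}\lbelief_m(x)=(A'\lbelief(x))_n$ turns it into $\lbelief(x)=\logoprob(x)+A'\lbelief(x)$ modulo the prior term; by Lemma \ref{lem:properties} the adjacency matrix is upper triangular, so $I-A'$ is invertible (the same structural fact that makes the fair-rating algorithm of Theorem \ref{thm:socialincestfilter} well defined). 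Selecting the rows corresponding to the recruited nodes yields $\lbelief_\Req(x)=\omat\,\logoprob(x)+\omat e_1\,\lbelief_0(x)$ with $\omat=[e_{\Req_1}\,\cdots\,e_{\Req_L}]^\p(I-A')^{-1}$, which is precisely the linear system (\ref{eq:omat}). Since two observation sequences yield the same reported beliefs iff they yield the same $\lbelief_\Req(\cdot)$ (the normalization being determined by its components), the fibre from the previous paragraph is exactly $\obseq$.

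Finally I would substitute the identification fibre $=\obseq$ back into the Bayes expression to obtain $P(x\mid\belief_\Req)\propto\pi_0(x)\sum_{Y_\Req\in\obseq}\prod_{m=1}^{\poll}B_{xy_m}$, which is (\ref{eq:pollest}). One clean-up point I would record: a node $m\le\poll$ that is not an ancestor of any recruited node contributes a zero column to $\omat$, so $y_m$ is left free by (\ref{eq:omat}) and is summed out through $\sum_{y_m}B_{xy_m}=1$; thus the formula automatically restricts to the observations that genuinely influence $\belief_\Req$.

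I expect the main obstacle to be the second step, specifically the bookkeeping of the prior and of the per-step normalization constants, so that the reported (normalized) beliefs really do pin down the affine combination in (\ref{eq:omat}) rather than only pinning it down up to node-wise additive constants, and so that the prior-injection term comes out exactly as $\omat e_1\,\lbelief_0(x)$. Once the log-linear recursion is set up correctly, the invertibility of $I-A'$ and the Bayes-plus-marginalization argument are routine.
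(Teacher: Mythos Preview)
Your proposal is correct and follows essentially the same route as the paper's proof: write the log-belief recursion $\lbelief_{1:n}(x)=\logoprob_{1:n}(x)+e_1\lbelief_0(x)+A'\lbelief_{1:n}(x)$, invert $I-A'$ (using the upper-triangularity from Lemma~\ref{lem:properties}), extract the recruited rows to get (\ref{eq:omat}), and then apply Bayes' rule with marginalization over the consistent observation sequences. Your additional remarks on the normalization bookkeeping and on non-ancestor nodes being summed out are useful elaborations of points the paper leaves implicit, but they do not change the argument.
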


The above  theorem asserts that given the incest containing beliefs of the $L$ recruits and pollster, the posterior distribution of the
candidates can be computed via (\ref{eq:pollest}). 
The conditional mean estimate or maximum aposteriori estimate  can then be computed as in footnote \ref{foot:bayes}.
%is $\E\{x | \belief_{\Req}\} = \sum_{x} x \, P(x|\belief_\Req)$.
%Alternatively the maximum aposteriori estimate can be evaluated as $\argmax_{x}  P(x|\belief_\Req)$.

\subsection{An extreme example of Incest in Expectation Polling}
The following  example is a Bayesian version of polling in a social network  described  in \cite{DKS12}. We show that due to data incest,  expectation polling can be significantly biased.
 Then the methods of Sec.\ref{sec:polldesign} and \ref{sec:pollestimate} for eliminating incest are illustrated.
 % the above two methods of eliminating incest, namely, 
 %recruiting an extra voter to remove incest and also the posterior estimate (\ref{eq:pollest}) to compute the conditional mean given the incestious estimates.

Consider the  social network
 Fig.\ref{fig:polling}  with  represents an expectation polling system. 
% Fig.\ref{fig:polling} is similar to the example considered in \cite{DKS12} which was used to illustrate the network induced bias in expectation polling.
The $L-1$ recruited  nodes are denoted as $\Req = \{2,3,\ldots, L\}$. These sampled nodes   report their beliefs to the polling node $ L+1$.
Since the poll only considers sampled voters, for notational convenience, we ignore labeling the remaining  voters; therefore
Fig.\ref{polling} only shows  $L+1$ nodes.
 
 An important feature of the graph of 
Fig.\ref{fig:polling} is that all recruited nodes are influenced by  node 1.
%Assume also that node $N$ also makes its own private observation.
%
%
% Since information from node 1 is available at all other  nodes
This  unduly affects the estimates reported by every other node.
For large $L$,  even though node 1 constitutes a negligible fraction of the total number of  voters, it significantly biases the estimate  of $x$ due to incest.\footnote{In \cite{DKS12} a  novel
 method is proposed to achieve  unbiased expectation polling  - weigh the estimate of each node by the reciprocal of its degree distribution, or alternatively sample
nodes with probability inversely proportional to their degree distribution. Then highly influential
nodes such as node 1 in  Fig.\ref{fig:polling}  cannot bias the estimate. Our paper is motivated by Bayesian
considerations where we are interested in estimating the optimal (conditional mean) estimate which by definition is unbiased.}

To illustrate an extreme case of bias due to data incest, suppose that
 $\X = \{1,2\}$ (so there are $X=2$ candidates). Consider   Fig.\ref{fig:polling} with a total of $L=6$  recruited  nodes; so $\Req = \{2,\ldots,8\}$.
 Assume the private observations recorded at the  nodes $1,2,\ldots, 8 $ are, respectively,
$[2,1,1,1,1,1,1,2]$. Suppose the true state of nature is $x=1$, that is, candidate 1 is leading the poll.
The nodes exchange and compute  their beliefs  according to Protocol \ref{polling}.
For prior $\pi_0 = [0.5, 0.5]^\p$ and observation matrix $B = 
\begin{bmatrix} 0.8 & 0.2 \\ 0.2 & 0.8
\end{bmatrix}$,
 it is easily verified that $\belief_8(1) = 0.2$, $\belief_8(2)  = 0.8$. That is, even though all 6  samples recorded  candidate 1 
as winning (based on  private observations)  and the true state is 1, the belief $\belief_8$ of  the pollster  is significantly biased towards  candidate~2.

Next, we  examine the   two methods of incest removal proposed in Sec.\ref{sec:polldesign} and Sec.\ref{sec:pollestimate}, respectively.
Due to the structure of adjacency matrix $A$  of the network in Fig.\ref{fig:polling}, condition
(\ref{constraintnetwork}) does not hold and therefore exact  incest removal  is not possible unless node 1 is also sampled. Accordingly,
suppose node 1 is sampled in addition to nodes $\{2,\ldots,7\}$ and data incest is removed via algorithm (\ref{eq:voteincestremove}).
%(So $\additional = \{1\}$.)
Then the incest free estimate is $P(x=1|y_1,\ldots,y_8) = 0.9961$.

 Finally,  consider the case where node 1 cannot be sampled.
 Then using (\ref{eq:pollest}), the posterior is computed as   $P(x=1|\belief_8) = 0.5544$.

Comparing the three estimates for candidate 1, namely, 0.2 (naive implementation of Protocol \ref{polling} with incest),  0.9961 (with optimal incest removal), and 0.5544 (incest removal based on $\belief_8)$, one can see that naive expectation polling is significantly biased -- recall that the ground  truth is that candidate 1 is the winning candidate.

\begin{figure}[h]
\centering
\begin{tikzpicture}[->,>=stealth',shorten >=1pt,auto,node distance=2cm,
  thick,main node/.style={circle,fill=white!12,draw,font=\sffamily},scale=0.7, every node/.style={transform shape}]
 
  \node[main node] (1) {1};
  \node[main node] (2) at (2,2) {$2$};
  \node[main node] (3) at (2,1){$3$};
   \node[main node] (4) at (2,0){$4$};
    \node (5) at (2,-1){{$\vdots$}};
     \node[main node] (6) at (2,-2){$L$};
   \node[main node] (7) at (4,0) {$\poll$};
  \path[every node/.style={font=\sffamily\small}]
    (1) edge node {} (2)
         edge node {} (3)
          edge node {} (4)
          edge node {} (5)
          edge node {} (6)
         (2) edge node {} (7)
           (3) edge node {} (7)
             (4) edge node {} (7)
               (5) edge node {} (7)
                 (6) edge node {} (7);
   \end{tikzpicture}
%\centering
%\scalebox{.6}{\includegraphics[scale=0.3]{sample.pdf}}
\caption{Expectation Polling where social network structure can result in significant bias. Node 1 has  undue influence on the beliefs of  other nodes. Node $\poll =  L+1$ represents
the pollster. Only sampled voters are shown.}
\label{fig:polling}
\end{figure}
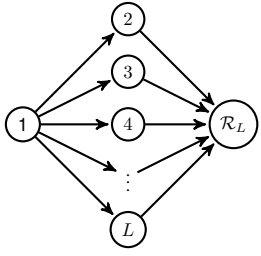

\section{Revealed Preferences and Social Learning} \label{sec:revealed}

A key assumption in the social learning models formulated in this paper is that agents are utility maximizers.
%The main question addressed in this section is:
%Given a dataset of decisions made by an agent,  is it possible to determine if the agent is a utility maximizer?
In microeconomics, the principle of revealed preferences seeks to determine if an agent is an utility maximizer subject to budget constraints
based on observing its  choices over time.   In this section we will use the principle of revealed preferences
on Twitter datasets  to illustrate social learning dynamics in reputation agencies  as a function of external influence (public belief).
%Afriat  published a highly influential paper \cite{Afr67}  in revealed preferences (see also \cite{Afr87}). 

\subsection{Afriat's Theorem}
\label{subsec:afriatstheorem}

 Given a time-series of data $\dataset=\{(\probe_\tindx,\response_\tindx), \tindx\in\{1,2,\dots,\Tindxter\}\}$ where $\probe_\tindx\in\reals^m$ denotes the public belief\footnote{In this section, the public belief is no longer a probability mass function. Instead it reflects the public's perception
 of the reputation agency
 based on the number of followers and tweet sentiment.}, $\response_\tindx$ denotes the response of agent, and $\tindx$ denotes the time index, is it possible to detect if the
   agent is a {\it utility maximizer}?
An agent is a {\em utility maximizer} at each time $\tindx$ if for every public belief $\probe_\tindx$, the chosen response $\response_\tindx$ satisfies
\begin{equation}
\response_\tindx(\probe_\tindx)\in\operatorname*{arg\,max}_{\{\probe_\tindx^\p \response \leq \budget_\tindx\}}\utility(\response)
\label{eqn:singlemaximization}
\end{equation}
with $\utility(\response)$ a non-satiated utility function. Nonsatiated means that an increase in any element of response $\response$ results in the utility function increasing.\footnote{The non-satiated assumption rules out  trivial cases such as a constant utility function which can be  optimized by any response.} As shown by Diewert~\cite{Die73}, without local nonsatiation the maximization problem (\ref{eqn:singlemaximization}) may have no solution. 

 In (\ref{eqn:singlemaximization}) the budget constraint $\probe_\tindx^\p \response_\tindx \leq \budget_\tindx$ denotes the total amount of resources available to the social sensor for selecting the response $x$  to the public belief $\probe_t$. In Sec.\ref{sec:twitterreview}, we will interpret this as a social
 impact budget.

Afriat's theorem \cite{Afr67,Afr87} provides a necessary and sufficient
  condition for a finite dataset $\dataset$ to have originated from an utility maximizer. 
 %Afriat's theorem has subsequently been expanded and refined, most notably by Diewert \cite{Die73}, Varian \cite{Var82} and 
 %Blundell \cite{Blu05}. 
\begin{theorem}[Afriat's Theorem]{Given a dataset $\mathcal{D}=\{(\probe_t,\response_t):t\in \{1,2,\dots,T\}\}$, the following statements are equivalent:}
\begin{compactenum}
  \item The agent is a utility maximizer and there exists a nonsatiated and concave utility function that satisfies (\ref{eqn:singlemaximization}).
\item For scalars $u_t$ and $\lambda_t>0$ the following set of inequalities has a feasible solution:
\begin{equation}
\utility_\tau-\utility_\tindx-\lambda_\tindx \probe_\tindx^\p (\response_\tau-\response_\tindx) \leq 0 \text{ for } \tindx,\tau\in\{1,2,\dots,\Tindxter\}.\
\label{eqn:AfriatFeasibilityTest}
\end{equation}
\item A nonsatiated and concave utility function that satisfies (\ref{eqn:singlemaximization}) is given by:
\begin{equation}
\utility(\response) = \underset{\tindx\in T}{\operatorname{min}}\{u_\tindx+\lambda_\tindx \probe_\tindx^\p(\response-\response_\tindx)\}
\label{eqn:estutility}
\end{equation}
  \item The dataset $\mathcal{D}$ satisfies the Generalized Axiom of Revealed Preference (GARP), namely for any $k\leq T$, $\probe_t^\p \response_t \geq \probe_t^\p \response_{t+1} \quad \forall t\leq k-1 \implies \probe_k^\p  \response_k \leq \probe_k^\p  \response_{1}.$
\end{compactenum}  \qed
\label{thrm: Afriat's Theorem}
\end{theorem}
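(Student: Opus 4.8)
The plan is to prove the four statements equivalent through the cycle of implications $(2)\Rightarrow(3)\Rightarrow(1)\Rightarrow(4)\Rightarrow(2)$, which quarantines the one genuinely hard step, $(4)\Rightarrow(2)$: recovering the Afriat numbers $\utility_\tindx,\lambda_\tindx$ from the purely combinatorial condition GARP. Throughout I use that the budget is exhausted, $\budget_\tindx=\probe_\tindx^\p\response_\tindx$ (which nonsatiation forces in any case), and that the price vectors $\probe_\tindx$ are strictly positive, so that the affine functions appearing in (\ref{eqn:estutility}) are strictly increasing in $\response$.

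The three routine implications go as follows. For $(2)\Rightarrow(3)$: given a feasible solution $\{\utility_\tindx,\lambda_\tindx\}$ with $\lambda_\tindx>0$ of (\ref{eqn:AfriatFeasibilityTest}), define $\utility(\cdot)$ by (\ref{eqn:estutility}); it is concave as a pointwise minimum of affine functions and nonsatiated since each affine piece has gradient $\lambda_\tindx\probe_\tindx>0$ in $\response$. Evaluating at $\response=\response_\tindx$, the inequalities (\ref{eqn:AfriatFeasibilityTest}) say exactly that the $\tau=\tindx$ term attains the minimum, so $\utility(\response_\tindx)=\utility_\tindx$; and whenever $\response$ is affordable at time $\tindx$, the $\tau=\tindx$ term already gives $\utility(\response)\le\utility_\tindx+\lambda_\tindx\probe_\tindx^\p(\response-\response_\tindx)\le\utility_\tindx$, because $\lambda_\tindx>0$ and $\probe_\tindx^\p(\response-\response_\tindx)\le0$. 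Hence each $\response_\tindx$ solves (\ref{eqn:singlemaximization}), which is what $(1)$ asserts, so $(3)\Rightarrow(1)$ is immediate. For $(1)\Rightarrow(4)$: if $\probe_\tindx^\p\response_\tindx\ge\probe_\tindx^\p\response_{\tindx+1}$ then $\response_{\tindx+1}$ was affordable at time $\tindx$, so $\utility(\response_\tindx)\ge\utility(\response_{\tindx+1})$; chaining for $\tindx=1,\dots,k-1$ gives $\utility(\response_1)\ge\cdots\ge\utility(\response_k)$; and if in addition $\probe_k^\p\response_k>\probe_k^\p\response_1$ then $\response_1$ lies strictly inside the time-$k$ budget set, so nonsatiation produces a still-affordable point of strictly higher utility than $\response_1$, hence than $\response_k$, contradicting optimality of $\response_k$ at time $k$. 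Therefore $\probe_k^\p\response_k\le\probe_k^\p\response_1$, which is GARP.

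The substantive step is $(4)\Rightarrow(2)$, which I would prove by induction on the number of observations $\Tindxter$, the base case $\Tindxter=1$ being trivial ($\utility_1=0$, $\lambda_1=1$). Write $\tindx\succeq\tau$ for ``$\probe_\tindx^\p\response_\tindx\ge\probe_\tindx^\p\response_\tau$''. GARP forces an observation $r$ that is extremal in the right sense: one can take $r$ for which no $\tindx\ne r$ satisfies $\probe_\tindx^\p\response_\tindx>\probe_\tindx^\p\response_r$, since otherwise chasing such strict predecessors around the finite index set would close up into a cycle of strict direct revealed preferences, which GARP forbids. Delete $r$; the remaining $\Tindxter-1$ observations still satisfy GARP, so the inductive hypothesis supplies $\{\utility_\tindx,\lambda_\tindx>0\}_{\tindx\ne r}$ obeying all of their mutual inequalities (\ref{eqn:AfriatFeasibilityTest}). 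It then remains to pick $\lambda_r>0$ and $\utility_r$ so that the new inequalities involving $r$ hold; these collapse to $\utility_r\le\min_{\tindx\ne r}\{\utility_\tindx+\lambda_\tindx\probe_\tindx^\p(\response_r-\response_\tindx)\}$ together with $\utility_r\ge\max_{\tau\ne r}\{\utility_\tau+\lambda_r\probe_r^\p(\response_r-\response_\tau)\}$, after which one chooses $\lambda_r$ small and any $\utility_r$ in the resulting interval.

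The main obstacle is precisely showing that this interval is nonempty, i.e.\ that for a suitable $\lambda_r$ the upper bound for $\utility_r$ dominates the lower bound; this is the second and essential use of GARP, because a failure would splice the revealed-preference chains bounding $\utility_r$ from above and from below into a cycle violating cyclical consistency. Matching the extremality property of $r$ with this compatibility inequality is the delicate part; everything else is bookkeeping. As an alternative to the induction, one may instead regard (\ref{eqn:AfriatFeasibilityTest}) together with the constraints $\lambda_\tindx>0$ as a finite system of linear (in)equalities in the unknowns $(\utility_\tindx,\lambda_\tindx)$ and apply a theorem of the alternative (Motzkin transposition / Farkas): infeasibility would yield nonnegative multipliers that encode exactly a GARP-violating revealed-preference cycle, with the strict constraints $\lambda_\tindx>0$ handled by first solving the relaxation $\lambda_\tindx\ge0$ and perturbing. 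Either route closes the cycle of implications and establishes the theorem.
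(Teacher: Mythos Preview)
The paper does not supply its own proof of Afriat's Theorem: it is stated as a classical result, attributed to Afriat~\cite{Afr67} and Varian~\cite{Var82}, and marked with a \qed\ immediately after the statement. The surrounding discussion only points to the literature (in particular to~\cite{FST04} for the LP-duality route) and explains how to \emph{use} the theorem on the Twitter data. So there is, strictly speaking, nothing in the paper to compare your argument against.

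That said, your sketch is a faithful outline of the standard modern proofs. The cycle $(2)\Rightarrow(3)\Rightarrow(1)\Rightarrow(4)$ is exactly the routine direction, and your handling of it is correct: the pointwise minimum in (\ref{eqn:estutility}) is concave and monotone because $\lambda_\tindx\probe_\tindx>0$; the Afriat inequalities say the own term attains the min at each data point; and the GARP chain follows from affordability plus nonsatiation to rule out the strict reversal. For the hard direction $(4)\Rightarrow(2)$, your inductive ``peel off an extremal observation $r$'' argument is precisely the Fostel--Scarf--Todd proof that the paper itself cites~\cite{FST04}, and your alternative via a theorem of the alternative is the LP-duality proof also mentioned there. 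The one place where your write-up is genuinely only a sketch is the verification that, for small enough $\lambda_r$, the interval for $\utility_r$ is nonempty; you correctly identify this as the second essential use of GARP, but the actual inequality-chasing (matching the extremality of $r$ against the candidate upper and lower bounds) is not written out. If you intend this as a full proof rather than a plan, that step needs to be filled in; otherwise the proposal is sound and aligned with the references the paper relies on.
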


As pointed out in  \cite{Var82}, an interesting feature of Afriat's theorem is that if the dataset can be rationalized by a non-trivial utility function, then it can be rationalized
by a continuous, concave, monotonic utility function.  That is, violations of continuity, concavity, or monotonicity cannot be detected with  a finite number of  observations.

Verifying  GARP  (statement 4 of Theorem \ref{thrm: Afriat's Theorem}) on a dataset $\dataset$ comprising $T$ points can be done using Warshall's algorithm with $O(\Tindxter^3)$~\cite{Var82,FST04} computations. Alternatively, determining if Afriat's inequalities (\ref{eqn:AfriatFeasibilityTest}) are feasible can be done via a LP feasibility test (using for example interior point methods \cite{BV04}). Note that the utility (\ref{eqn:estutility}) is not unique and is ordinal by construction. Ordinal means that any monotone increasing  transformation of the utility function will also satisfy Afriat's theorem. Therefore the utility mimics the ordinal behavior of humans, see also Sec.\ref{sec:ordinal}.
Geometrically the estimated utility (\ref{eqn:estutility}) is the lower envelop of a finite number of hyperplanes that is consistent with the dataset~$\dataset$.

\subsection{Example: Twitter Data of Online Reputation Agencies}
\label{sec:twitterreview}

In this section we illustrate  social learning associated with reputation agencies on  Twitter.
Content (games, movies, books) is provided to the reputation agency which then publishes its review as a  tweet. 
The sentiment of  the tweet and number of followers (reputation) constitute the publicly available
 information that drives the social learning process.
The Twitter network  responds to the sentiment of the tweet by retweeting.
 Data incest information structures arise since if the sentiment of the tweet published by the reputation agency improves its reputation, 
  then this reinforces the reputation agency's belief that it adequately reviewed the content.
%The reputation agency receives feedback information in terms of the change in the number of followers as a result of each tweet that it posts. --this is an example of data incest. 

The framework, which is   illustrated in Fig.\ref{fig:BodyConnect},
involves utility maximization and dynamics of the public belief, which we will show (based on Twitter data) evolves according to an autoregressive process.
  Specifically the goal is to investigate how the number of {\it followers} and {\it polarity}\footnote{Here polarity~\cite{MMMU14} is a real
valued variable in the interval [-1,1] and depends on the sentiment of the tweet; that is, whether the tweet expresses a positive/negative/neutral statement.} affect the time before a retweet occurs and the total number of associated retweets.  Apart from its relevance to social learning, the information provided by this analysis can be used in social media marketing strategies to improve a brand and for brand awareness. As discussed in \cite{GOT13}, Twitter provides a significant amount of agent-generated data which can be analyzed to provide novel personal advertising to agents. 

 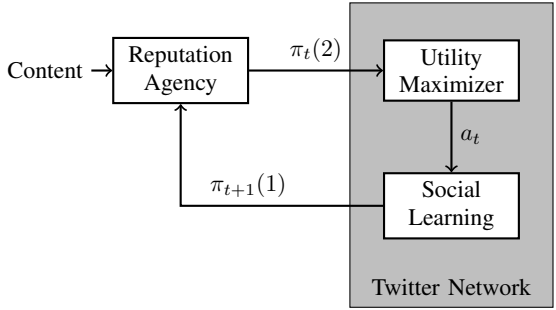
\begin{figure}[h]
  \centering
\begin{tikzpicture}[font = \normalsize, scale =0.9,transform shape, american voltages]
% Define block styles

\draw [fill= lightgray] (1.5,-3.5) coordinate (topleft) rectangle (4.5,1) coordinate (bottomright);
\node at (3,-3.2) {Twitter Network};
\node (CN) at (-3.0,0) {Content};
\node[draw,rectangle,fill=white,thick,text width=5em,align=center] (RA) at (-1.0,0) {Reputation Agency}; 
\node[draw,rectangle,fill=white,thick,text width=5em,align=center] (UM) at (3,0) {Utility Maximizer}; 
\node[draw,rectangle,fill=white,thick,text width=5em,align=center] (SL) at (3,-2) {Social Learning}; 

\draw[thick, black, ->] (CN.east) --  (RA.west);
\draw[thick, black, ->] (RA.east) -- node[midway, above] {$\probe_\tindx(2)$} (UM.west);
\draw[thick, black, ->] (UM.south) -- node[midway, right] {$\response_\tindx$} (SL.north);
\draw[thick, black, ->] (SL.west) -| node[midway, above,xshift=1.0cm] {$\probe_{\tindx+1}(1)$} (RA.south);

\end{tikzpicture}
  \caption{Schematic of dynamics of the reputation agency and the Twitter network. 
  The reputation agency receives {\it content} (books, games, movies) which it reviews and publishes  tweets at epochs $\tindx=1,2,\ldots,$ with sentiment  $\probe_\tindx(2)$. The reputation agency index $i$ has been omitted for clarity.
 % $t$ indexes the epochs at which the reputation agency publishes tweets.
  The public belief $\probe_\tindx$ and response $\response_\tindx$ are defined in Sec.\ref{sec:twitterreview}. } 
\label{fig:BodyConnect}
%\vspace{-1pt}
\end{figure}

\subsubsection{Twitter Datasets}
We consider 9 well known online reputation agencies: @IGN, @gamespot, @AmznMovieRevws, @creativereview, @HarvardBiz, @techreview, @pcgamer, @RottenTomatoes, @LAReviewofBooks.
Fig.\ref{fig:reviewnet} provides the social retweet network which includes the Twitter accounts of these 9 reputation agencies.  The Twitter
data was collected on November 17$^\text{th}$ 2014 at 9:00 pm for a duration of 24 hours. The data was obtained using the {\em Twitter Streaming API}\footnote{\url{https://dev.twitter.com/streaming/overview}} and a custom python script. The sentiment of the tweets and retweets is computed using {\em TextBlob}\footnote{TextBlob - Python based Text Processing Tool Suit, \url{http://textblob.readthedocs.org/en/dev/}}. The social network contains 10,656 nodes with 11,054 edges.
 As   illustrated in Fig.\ref{fig:reviewnet},  numerous  nodes retweet based on tweets posted by these 9 agencies. The edge intensity in Fig.\ref{fig:reviewnet} can be used to gain intuition on the dynamics of retweets. For the nodes with high {\it in degree} such as @IGN and @HarvardBiz the retweets typically occur in a short period of time on the order of 1-12 hours. This behavior has been observed in popularity dynamics of papers and youtube videos~\cite{SWSB14} and is associated with a decrease in the ability to attract new attention after ageing. 
 
% Why do the different review accounts such as @IGN and @gamespot have different number of retweets? A possible cause is the number of followers each Twitter account contains. Consider @IGN, @gamespot, and @pcgamer which have 2.04 million, 1.07 million, 0.66 million followers respectively. As shown in \cite{Kaw13} the number of followers plays a central role in the number of retweets as Twitter users typically retweet only if they have received a tweet. Therefore the difference in total retweets is a result of the number of followers each Twitter account has. 

\begin{figure}[h]
	\begin{tikzpicture}[scale =1,transform shape, american voltages]
	%\draw[step=1cm,gray,very thin] (-4,-4) grid (4,4);
\node[inner sep=0pt] at (-0.5,0) {\includegraphics[angle=180,scale=0.6,trim = 30mm 30mm 20mm 20mm, clip]{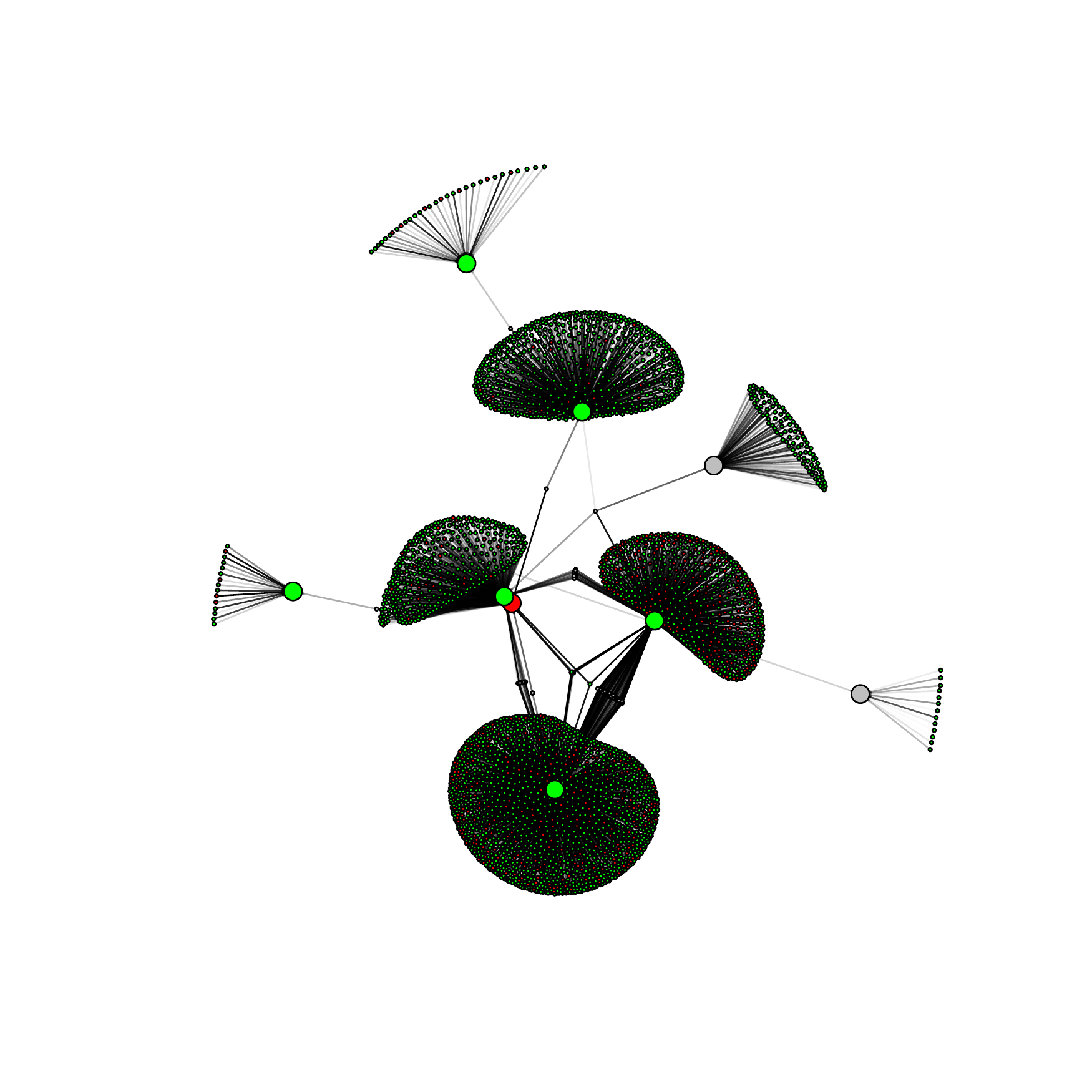}};
\node at (-1.5,3){\large 1};
\node at (-1.4,1.4){\large 2};
\node at (-3,2){\large 3};
\node at (-2,-0.25){\large 4};
\node at (-0.8,-0.8){\large 5};
\node at (0.6,-2.2){\large 6};
\node at (0.25,-0.2){\large 7};
\node at (0.27,1.1){\large 8};
\node at (2.5,1.2){\large 9};
\end{tikzpicture}
\vspace{-10pt}
	\caption{Snapshot of the estimated retweet network obtained by tracking real-time tweets of reputation agencies. The labels $1,2,\dots,9$ correspond to the Twitter accounts @IGN, @gamespot, @AmznMovieRevws, @creativereview, @HarvardBiz, @techreview, @pcgamer, @RottenTomatoes, @LAReviewofBooks. The data is collected over a period of 24 hours starting from November 17$^\text{th}$ 2014 at 9:00 pm. 
	The reputation agency nodes are denoted by large circles and retweeting nodes 
	(followers) by small circles.
	The sentiment of the tweet published by the reputation
	agency is denoted by color: red is negative, green positive, and gray is neutral.  The time of the retweet is indicated by the shade intensity of the edges
	of the graph:  the lighter the shade of an edge, the later the retweet was posted.}
	\label{fig:reviewnet}
\end{figure}

In the following analysis, the aim is to determine if the Twitter followers of an online reputation agency exhibit  utility maximization behavior 
in response to the tweet published by the reputation agency. 
The index  $\tindx=1,2,\ldots,$ denotes  epochs at which the reputation agency publishes its tweets.
 \footnote{The  average time interval between tweets for @IGN, @gamespot, and @HarvardBiz are respectively, 31, 33, and 34 minutes.}
To apply Afriat's theorem (\ref{eqn:AfriatFeasibilityTest}), the public belief is defined by $$\probe_\tindx^i=[{\tt \# followers}, {\tt neutrality}]$$ for each
 reputation agency
$i$. The ${\tt \#followers}$  is the  number of followers of a tweet published by the online reputation agency. The ${\tt neutrality}$ of the tweet published by the 
reputation agency is computed as 
$1/ | {\tt polarity}|$ where the ${\tt polarity}$ of a tweet is computed using TextBlob.
The associated response taken by Twitter users that retweet in the network is given by 
$$\response_\tindx^i=[\Delta t, {\tt \#retweets}]. $$ $\Delta t$ denotes the time between the tweet (published by the agency) and the first retweet (of a follower). ${\tt \#retweets}$ denotes the total number of retweets generated by followers prior to the next tweet from the  reputation agency~$i$.
 Next we need to justify the linear budget in (\ref{eqn:singlemaximization})  to use Afriat's theorem. It is clear that
 as the number of followers of a reputation agency increases, the number of retweets will increase. Consequently one expects   a decrease in time between the first retweet $\response_\tindx(1)$ as the number of followers $\probe_\tindx(1)$  increases. The results in \cite{SD13} suggest that the higher the polarity of a tweets published by the reputation agency, the larger the number of retweets. So  we expect that as the {\it neutrality} (i.e. the lower the polarity) of the tweet increases the resulting number of retweets $\response_\tindx(2)$ will decrease. So it is reasonable to assume the existence of a social impact budget $I_\tindx$ for the utility maximization test (\ref{eqn:AfriatFeasibilityTest}) which satisfies $I_\tindx=\probe_\tindx^\prime\response_\tindx$. We construct the datasets $\dataset_i$ for each reputation agency $i\in\{1,2,\dots,9\}$ from the Twitter data collected on November 17$^\text{th}$ 2014 at 9:00 pm for a duration of 24 hours. The dataset $\dataset_i=\{(\probe_\tindx^i,\response_\tindx^i): t\in\{1,2,\dots,T^i\}\}$ was constructed using the public belief $\probe_\tindx^i$, response $\response_\tindx^i$, and total number of tweets $T^i$ for the reputation agency $i$. Note that $t\in\{1,\dots,T^i\}$ denotes the tweets published by the reputation agency.

%The dataset $\dataset_c$ encodes the average behavior of the reputation agency and is defined by $\dataset_c=\{(\bar{\probe}^i, \bar{\response}^i): i\in\{1,\dots,9\}\}$, where $\bar{\probe}^i=[\bar{\probe}^i(1), \bar{\probe}^i(1)]$ with $\bar{\probe}^i(1)$ the average number of followers of reputation agency $i$, and $\bar{\probe}^i(2)$ the average {\it nuetrality} of the tweets from reputation agency $i$. The averages are computed using the total number of tweets $T^i$ from each reputation agency, for example $\bar{\probe}^i(1)=(1/{T^i})\sum_{t=1}^{T^i}\probe_\tindx^i(1)$. The associated response is given by  $\bar{\response}^i=[\bar{\response}^i(1),\bar{\response}^i(1)]$ with $\bar{\response}^i(1)$ the average $\Delta t$ of account $i$, and $\bar{\response}^i(2)$ the average number of retweets from reputation agency $i$. 

\subsubsection{Results} We found that each of the Twitter datasets $\dataset_i$ satisfy the utility maximization test (\ref{eqn:AfriatFeasibilityTest}). Using (\ref{eqn:estutility}) from Afriat's Theorem the associated utility function for the reputation agencies @IGN, @gamespot, @RottenTomatoes, and @LAReviewBooks is provided in Fig.\ref{fig:retweeta}-\ref{fig:retweetd}. The utility function for the other 5 agencies are omitted as only minor differences are present compared with the utility functions provided in Fig.\ref{fig:reviewnetutility}.  Several key observations can be made from the results:
\begin{compactenum}[(a)]
\item Given that $\dataset_i$ satisfies (\ref{eqn:AfriatFeasibilityTest}), this suggests that the number of followers $\probe^i_\tindx(1)$ 
and neutrality $\probe^i_\tindx(2)$ of the tweet contributes to the retweet dynamics (i.e. the delay before the first tweet $\response^i_\tindx(1)$ and the 
total number of retweets $\response^i_\tindx(2)$). 

\item The utility functions provided in Fig.\ref{fig:reviewnetutility} suggest that Twitter users prefer to increase the delay of retweeting $\response^i_\tindx(1)$ compared with increasing the total number of retweets $\response^i_\tindx(2)$. The results also suggest that as the delay before the first retweet increases the associated number of retweets decreases. This effect is pronounced in Fig.\ref{fig:retweetd} where the first and only retweet occurs approximately 2000 seconds after the original tweet.
\item The revealed preferences of the reputation agencies, represented by the utility functions in Fig.\ref{fig:retweeta}-\ref{fig:retweetd}, are not identical. 
\end{compactenum}
Observation (a) is straightforward as a change in the number of followers will affect the time of a retweet. Additionally, as suggested in \cite{SD13}, as neutrality of the tweet increases the associated total number of retweets is expected to decrease. Observation (b) illustrates an interesting characteristic of how users retweet to reputation agencies, it suggests that Twitter users prefer to increase the time prior to retweeting compared to increasing the total number of retweets. This result is caused by social features of users which include the content of the tweet, and the external context during which the tweet is posted~\cite{SIDB12}. In \cite{SIDB12} over 250 million tweets are collected and analyzed and it was found that a high number of followers does not necessarily lead to shorter retweet times. A possible mechanism for this effect is that tweets that are exposed to a larger number of Twitter users then an individual user is less likely to retweet--this effect is known as {\it diffusion of responsibility}. For large retweet times $\response^i_\tindx(1)$ we observe that the total number of retweets is significantly reduced compared to short retweet times, refer to Fig.\ref{fig:retweeta} and Fig.\ref{fig:retweetd}. This result has been observed in popularity dynamics~\cite{SWSB14} and is associated with an ageing effect--that is, the interest of the tweet decreases with time. Observation (c) is expected as different reputation agencies are expected to have different users retweeting. To quantify this result for the constructed ordinal (i.e. identical for any monotonic transformation) utility functions the comparison of preferences is achieved using the {\it marginal rate of substitution} defined by:
\begin{equation}
\text{MRS}_{12} = \frac{\partial{u}/\partial{\response^i(1)}}{\partial{u}/\partial{\response^i(2)}}. 
\label{eqn:mrs}
\end{equation}
In (\ref{eqn:mrs}), $\text{MRS}_{12}$ defines the amount of $\response^i(2)$ the Twitter users are willing to give up for 1 additional unit of $\response^i(1)$. From Fig.\ref{fig:retweeta}-\ref{fig:retweetd} it is clear that $\text{MRS}_{12}>1$ suggesting users prefer $\response^i(1)$ compared with $\response^i(2)$. Additionally notice that the $\text{MRS}_{12}$ for each of the reputation agencies in Fig.\ref{fig:retweeta}-\ref{fig:retweetd} illustrating that the associated behavior of each is characteristically different. 

%To summarize, we have shown
%that Twitter users retweeting on posts provided by reputation agencies prefer to delay the time prior to retweeting compared with increasing the number of retweets. The preferences of the reputation agencies is provided by the constructed utility function and provides valuable insight into the interaction of Twitter users with reputation agencies. 

\begin{figure}[h]
\subfigure[@IGN Twitter account.]{\label{fig:retweeta}
	\begin{tikzpicture}[scale =1,transform shape, american voltages]
\node[inner sep=0pt] at (0,0) {\includegraphics[angle=0,scale=0.045,trim = 80mm 30mm 180mm 30mm, clip]{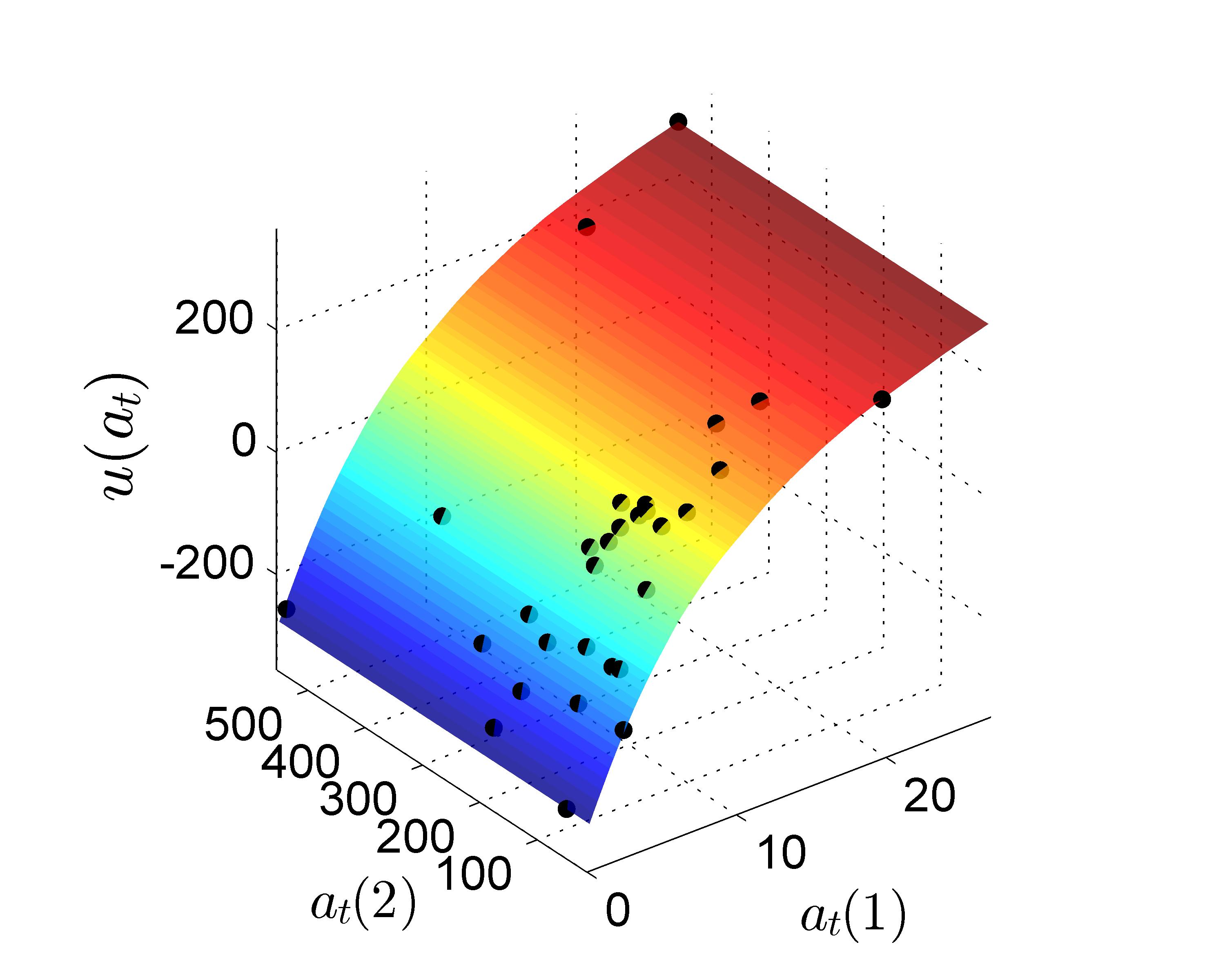}};
\end{tikzpicture}}
\subfigure[@gamespot Twitter account.]{\label{fig:retweetb}
	\begin{tikzpicture}[scale =1,transform shape, american voltages]
\node[inner sep=0pt] at (0,0) {\includegraphics[angle=0,scale=0.045,trim = 80mm 30mm 110mm 30mm, clip]{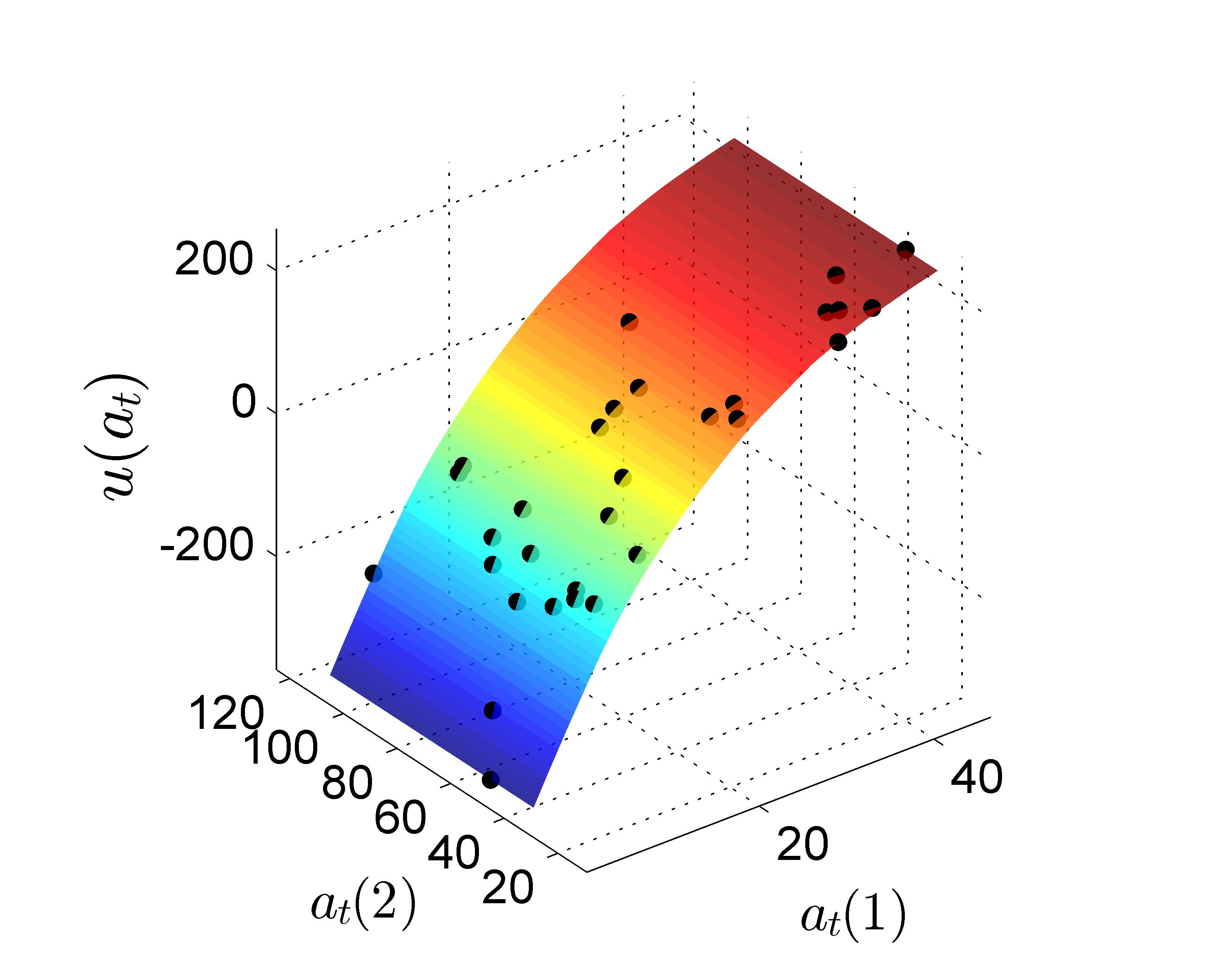}};
\end{tikzpicture}} \\
\subfigure[@HarvardBiz Twitter account.]{\label{fig:retweetc}
	\begin{tikzpicture}[scale =1,transform shape, american voltages]
\node[inner sep=0pt] at (0,0) {\includegraphics[angle=0,scale=0.045,trim = 70mm 30mm 120mm 30mm, clip]{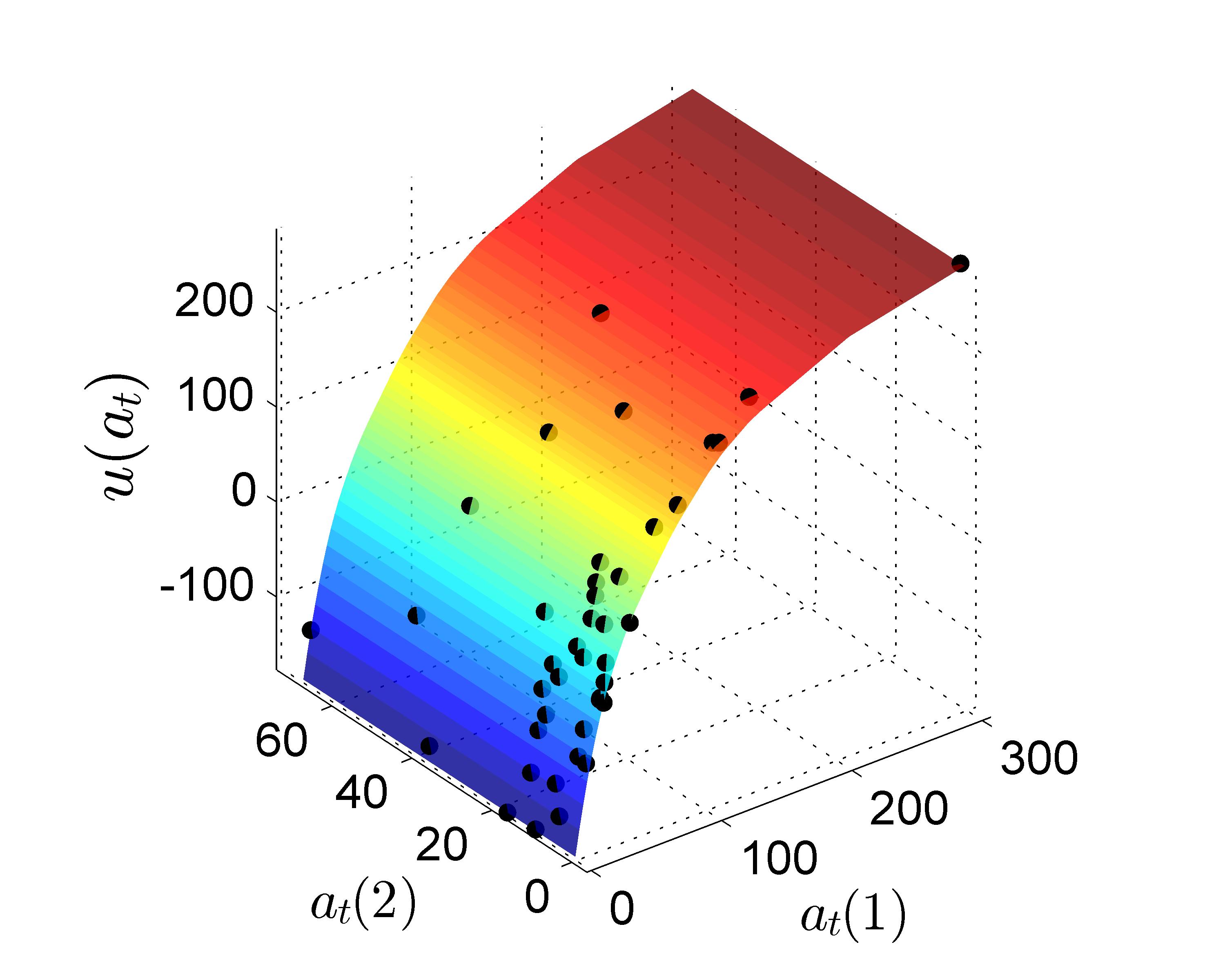}};
\end{tikzpicture}}
\subfigure[@LAReviewofBooks Twitter account.]{\label{fig:retweetd}
	\begin{tikzpicture}[scale =1,transform shape, american voltages]
\node[inner sep=0pt] at (0,0) {\includegraphics[angle=0,scale=0.045,trim = 70mm 30mm 150mm 30mm, clip]{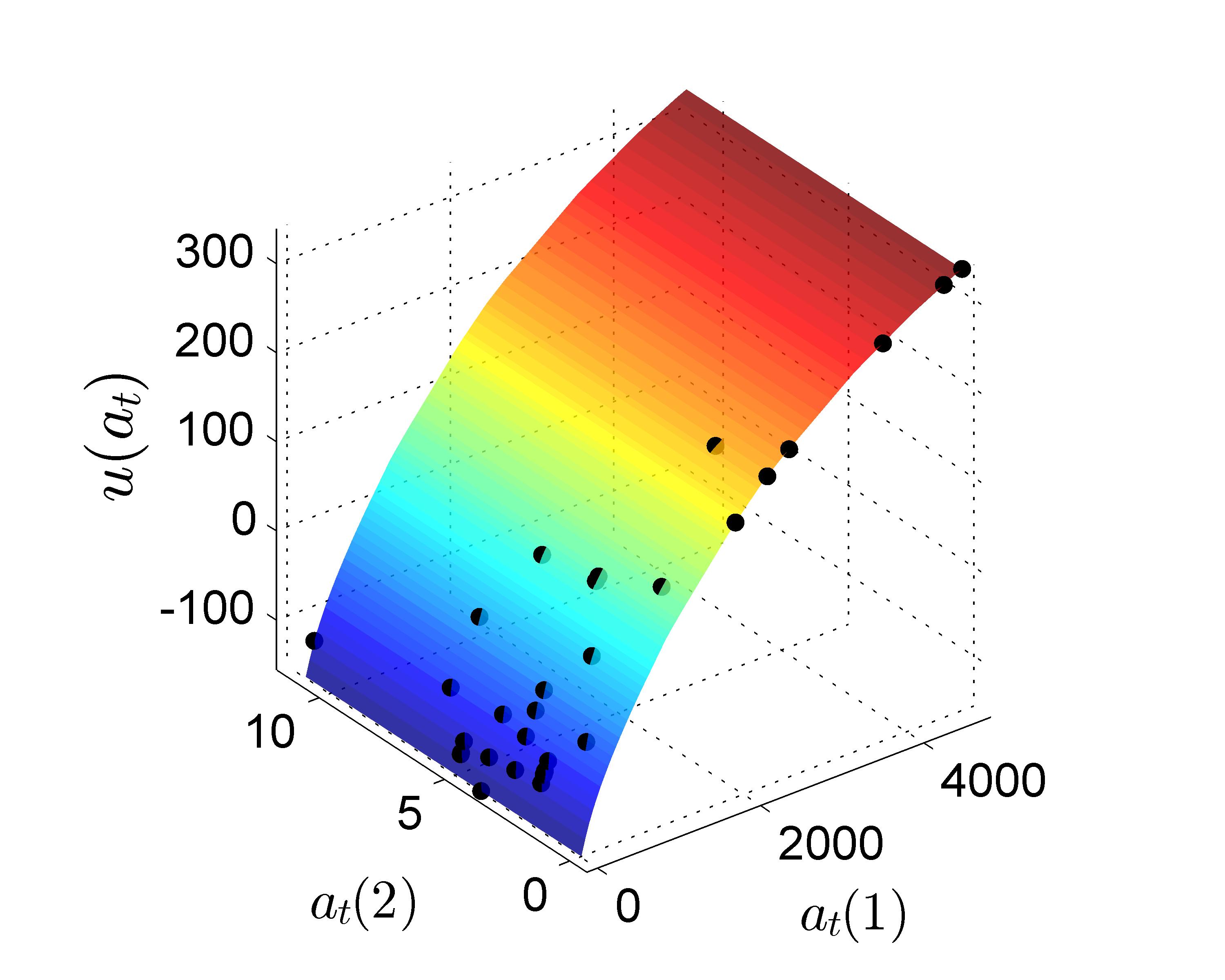}};
\end{tikzpicture}}
	\caption{Estimated utility function $\utility(\response_\tindx)$ using the Twitter datasets $\dataset_i$ for $i\in\{1,2,5,9\}$  defined in Sec.\ref{sec:twitterreview}  constructed using the  (\ref{eqn:estutility}) from Afriat's Theorem. 
	%The index $i$ is identical to that used to represent the reputation agencies in Fig.\ref{fig:reviewnet}. 
	Note that $\response_\tindx(1)$ is the number of retweets, and $\response_\tindx(2)$ has units of seconds.}
	\label{fig:reviewnetutility}
\end{figure}

\subsubsection{Social Learning}
To interpret the above results in terms of social learning, we next show that the response (action)  $\response_\tindx$ and public belief
$\probe_\tindx$ at epoch $\tindx$ determine the public belief
 $\probe_{\tindx+1}$ at epoch $\tindx+ 1$.
We found that the following auto-regressive time series model
%Consider that there is no expected relationship between $\response_\tindx(1)$, the time before the first retweet, and the number of followers $\probe_{\tindx+1}(1)$ or the sentiment of the next tweet $\probe_{\tindx+1}(2)$. 
%The sentiment of the tweet is controlled by the reputation agency which must provide an accurate review or risk losing its reputation. Therefore we consider if the total number of retweets $\response_{\tindx}(2)$ can be used to predict the number of followers at the next tweet $\probe_{\tindx+1}(1)$. To predict $\probe_{\tindx+1}(1)$ the following autoregressive exogenous (ARX) model is used. 
\begin{equation}
\probe_{\tindx+1}(1) = \probe_{\tindx}(1)+b\, \response_{\tindx}(2)+\noise_\tindx
\label{eqn:arxf}
\end{equation}
driven by a zero mean noise process $\{\noise_\tindx\}$
yields an excellent fit, where 
the  parameter $b$ is  estimated from the Twitter data  using least-squares. Note that $\probe_{\tindx+1}(1)-\probe_{\tindx}(1)$ in (\ref{eqn:arxf}) models the total number of new followers resulting from the the total number of retweets $\response_{\tindx}(2)$. To test the accuracy of (\ref{eqn:arxf}), we selected the reputation agencies @IGN, @gamespot, and @HarvardBiz. The experimentally measured and numerically predicted results are displayed in Fig.\ref{fig:arxfollow}. As seen (\ref{eqn:arxf}) accurately predicts the public opinion based on the response of the Twitter users. The {\it mean absolute percentage error} using the AR model for the reputation agencies @IGN, @gamespot, and @HarvardBiz are,
respectively,  0.49\%, 0.67\%, and 0.41\%. 

\begin{figure}[h]
\center
	\includegraphics[angle=0,width=3.4in,,trim = 10mm 0mm 0mm 00mm, clip]{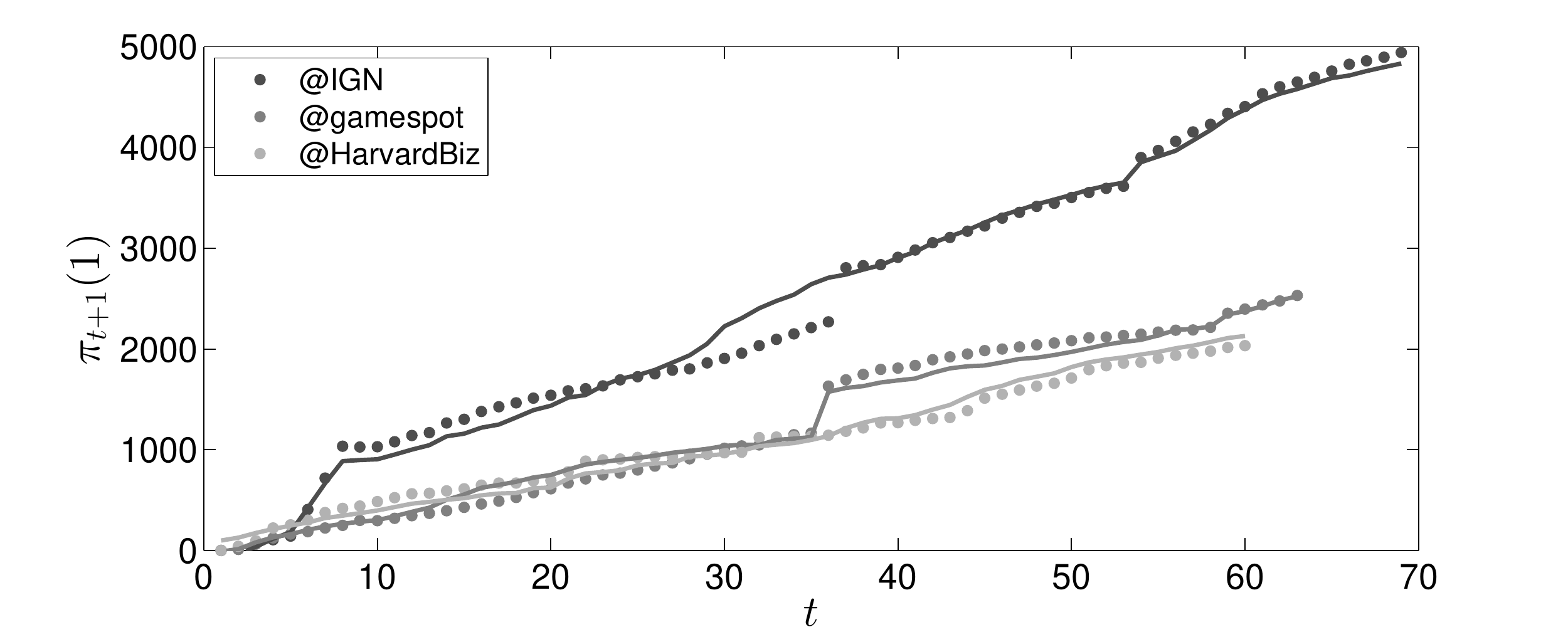}
\vspace{-15pt}
	\caption{Accuracy of the AR model (\ref{eqn:arxf}) for the social learning dynamics of  public belief $\probe_{\tindx+1}(1)$ of reputation agencies: @IGN, @gamespot, and @HarvardBiz. The dots represent the actual number of Twitter followers while the solid line indicates the predicted number of followers using
	the AR model. The initial number of followers at $t=0$ is set to zero in the plot. The estimated parameter values of $b$ in
	(\ref{eqn:arxf}) 
	for agencies @IGN, @gamespot, and @HarvardBiz are respectively: $0.4358$, $0.8132$, and $0.3825$. The average time interval between tweet epochs $t$ for @IGN, @gamespot, and @HarvardBiz are respectively: 31, 33, and 34 minutes.}
	\label{fig:arxfollow}
\end{figure}

\section{Numerical Examples} \label{sec:numerical}

This section illustrates the incest removal  algorithm of social learning Protocol  \ref{protocol:socialcons}  for two different types of social networks.  Then incest removal
in expectation polling is illustrated for these two networks.

\subsection{Social Learning and Incest  in Corporate Network}

 Consider   the network of Fig.\ref{fig:corporate} which depicts a corporate social network. 
Nodes 1 and 10 denote the same  senior level manager. Nodes 2 and 8 denote a mid-level manager; and nodes 3 and 9 denote another mid-level manager.
The two mid-level managers  attend a presentation (and therefore opinion) by a senior manager
to determine (estimate) a specific parameter $x$ about the company. 
Each mid-level  manager then  makes recommendations (decisions) and convey these to two of their workers.
 These workers eventually report back to their mid-level managers who in turn report back to the senior manager.  
 The edge $(1,10)$ indicates that the senior manager recalls her decision at node 1 when making her decision at node 10.
Similarly for edges  (2,8)  and (3,9).

 \begin{figure}[h]
\centering
\begin{tikzpicture}[->,>=stealth',shorten >=1pt,auto,node distance=2cm,
  thick,main node/.style={circle,fill=white!12,draw,font=\sffamily},scale=0.55, every node/.style={transform shape}]
 
  \node[main node] (1) {1};
  \node[main node] (2) at (2,2) {2};
  \node[main node] (3) at (2,-2){3};
  \node[main node] (4) at (3,3) {4};
  \node[main node] (5) at (3,1) {5};
  \node[main node] (6) at (3,-1)  {6};
  \node[main node] (7) at (3,-3)  {7};
  \node[main node] (8) at (4,2) {8};
  \node[main node] (9) at (4,-2) {9};
  \node[main node] (10) at (6,0) {10};
  \path[every node/.style={font=\sffamily\small}]
    (1) edge node {} (2)
        edge node {} (3)
        edge node {} (10)
    (2) edge node {} (4)
        edge node {} (8)
        edge node {} (5)
    (3) edge node  {} (7)
        edge node  {} (6)
        edge node {} (9)
    (4) edge node  {} (8)
    (5) edge node {} (8)
    (6) edge node {} (9)
    (7) edge node {} (9)
    (8) edge node {} (10)
    (9) edge node {} (10);
\end{tikzpicture}
%\centering
%\scalebox{.6}{\includegraphics[scale=0.3]{sample.pdf}}
\caption{Social Learning in Corporate  Network comprising of higher level manager, mid level manager and workers.
% Nodes (1,10) represent the same higher level manager, nodes  (2,8) represent a mid level manager as do nodes (3,9).  Nodes 4,5,6,7 represent individual workers. 
Data incest arises at nodes 8, 9 and 10.}
\label{fig:corporate}
\end{figure}
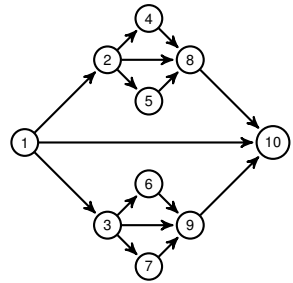
We ran 1000 independent  simulations
with the following parameters in the social learning  Protocol \ref{protocol:socialcons}: 
  $\X=\{1,2,\ldots,X\}$, $\A=\{1,2,\ldots,A\}$, $\Y=\{1,2,\ldots,Y\}$, $X=10$, $A=10$, $Y=20$, true state $x $ uniform on $\X$, prior $\belief_0$  uniform on $\X$,  observation probabilities
$B_{x,y}  \propto  \exp- \frac{(y-x)^2}{2}$,
and costs
$c(i,a) = | \frac{A}{X} x - a| $.

\begin{comment}
Fig.\ref{fig:corporatesim} illustrates    sample paths of the actions $a_n$ and estimated state $\E\{x| a_m, m \in \full_n\}$ for  nodes $n=1,\ldots,10$ for the following cases:
\begin{compactitem}
\item[(i)]  actions $\{a_m, m\in \full_n\}$ are assumed incorrectly to be independent;  so data incest occurs (dashed line)
\item[(ii)] incest removal algorithm $\mathcal{A}$ is applied. It is  verified that  the social influence constraint (\ref{eq:si}) holds  so that exact incest removal occurs (solid line). \end{compactitem}

\begin{figure}[h]
\centering
\includegraphics[scale=0.3]{}
\includegraphics[scale=0.3]{}
\caption{Sample path of actions taken by agents and conditional mean state estimates computed by agents $n=1,2,\ldots,10$.}
\label{fig:corporatesim}
\end{figure}
\end{comment}

 It is clear from Fig.\ref{fig:corporate} that incest arises at nodes 8, 9 and  10 due to  multiple information paths.
Table \ref{tab:corporate} displays the mean square error (MSE) of the state estimates computed at these nodes.
The incestious estimates were computed using naive data fusion (\ref{eq:dataincest}). The incest free estimates were computed using the incest removal algorithm of Theorem \ref{thm:socialincestfilter}.
It is verified  that the achievability condition of Theorem \ref{thm:sufficient} holds;  hence incest can be removed completely given the information from single hop nodes
(immediate friends).
Table \ref{tab:corporate}  shows that incest removal results in substantially more accurate estimates, particularly at node 10 which is the most informed node (since all nodes eventually
report to node 10).

\begin{table}
\begin{subfigure} \centering
\begin{tabular}{|L|c|c|c|} \hline 
Node  &  8 & 9 & 10 \\ \hline
MSE with incest  & 0.3666 & 0.3520 & 0.3119 \\ \hline
MSE with incest removal & 0.2782 & 0.2652& 0.1376\\ \hline
\end{tabular}
\caption{Corporate Network of Fig.\ref{fig:corporate}} \label{tab:corporate}
\end{subfigure}

\begin{subfigure} \centering
\begin{tabular}{|L|c|c|c|c|c|} \hline 
Node  &  5 & 6  & 7 & 8  & 9  \\ \hline
MSE with incest  & 0.3246  & 0.3420 & 0.3312  & 0.3149 & 0.3134 \\ \hline
MSE with incest removal & 0.2799 & 0.2542 & 0.2404  & 0.2267 & 0.2200\\ \hline
\end{tabular}
\caption{Mesh Network of Fig.\ref{fig:mesh}} \label{tab:mesh}
\end{subfigure}

\caption{Effect of incest removal in social learning. The mean square errors (MSE) in the state estimate  were obtained by averaging over 1000 independent simulations.}
\end{table}

\subsection{Social Learning and Incest in Mesh Network}
Consider the mesh network depicted in Fig.\ref{fig:mesh} with  parameters identical to the previous example. Incest arises at nodes 5,6 8 and 9.
Also incest propagates to node 6 from node 5.
It is verified that the condition of Theorem \ref{thm:sufficient} holds for the mesh network and so incest can be completely removed.
Table \ref{tab:mesh} displays the mean square error of naive data fusion and incest removal. and shows that  substantial improvements occur in the state estimate with
incest removal.

\begin{figure}
\centering
\begin{tikzpicture}[->,>=stealth',shorten >=1pt,auto,node distance=2cm,
  thick,main node/.style={circle,fill=white!12,draw,font=\sffamily},scale=0.8, every node/.style={transform shape}]
    \node[main node] (1) at (0,0) {1};
  \node[main node] (2) at (0,-1) {2};
  \node[main node] (3) at (0,-2){3};
  \node[main node] (4) at (1,-2) {4};
  \node[main node] (5) at (1,-1) {5};
  \node[main node] (6) at (1,0)  {6};
  \node[main node] (7) at (2,0)  {7};
  \node[main node] (8) at (2,-1) {8};
  \node[main node] (9) at (2,-2) {9};
   \path[every node/.style={font=\sffamily\small}]
    (1) edge node {} (2)
        edge node {} (6)
         (2) edge node {} (3)
        edge node {} (5)
        (3) edge node  {} (4) 
         (4) edge node {} (5)
        edge node {} (9)
 (5) edge node {} (6)
        edge node {} (8)
          (6) edge node  {} (7) 
             (7) edge node  {} (8) 
   (8) edge node  {} (9) ;
\end{tikzpicture}
\caption{Social Learning  in  Mesh Network. Data Incest occurs at nodes 5, 6, 8 and 9.} \label{fig:mesh}
\end{figure}
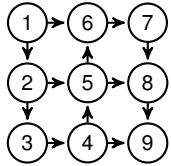

\subsection{Expectation Polling using Incestious Belief}
We illustrate the results of Sec.\ref{sec:vote} for incest removal in  expectation polling.
Consider  the network  of Fig.\ref{fig:corporate} with   the edge  $(1,10 )$  omitted. Node 10 denotes the pollster.
 Assume  the pollster cannot sample  node 1 %into the set of sampled voters $\Req$
 so that exact data incest removal is impossible. (It can be verified that the necessary and sufficient condition (\ref{constraintnetwork})
of Theorem \ref{thm:sufficient}  does not hold if node 1 is omitted.) 
Given the incestious  beliefs (expectations) $\belief_m$ from the sampled voters $m \in \Req$, we use the estimator of Theorem~ \ref{thm:estpollster}  to compute the 
 conditional mean estimate $\hat{x} =  \E\{x|\belief_m ,m \in \Req\}$ of the leading candidate. 

For the simulation we chose $\X=\{1,2\}$ (two candidates), prior $[0.4,\; 0.6]^\p$, $\Y=\{1,2\}$, $B = \begin{bmatrix} 0.8 & 0.2 \\ 0.2 & 0.8 \end{bmatrix}$.
Table \ref{tab:expcorp} displays  the
mean square error of  $\hat{x}$  for different
recruited sample sets $\Req$.  The table also displays the MSE when all nodes are sampled in which case optimal incest removal is achieved.
These are   compared with the naive incestious estimator using  $\belief_{10}$ computed via Protocol \ref{polling}.

\begin{table}
\begin{subfigure} \centering
\begin{tabular}{|c|c|} \hline 
sampled voters $\Req$  & MSE     \\ \hline
%$\{9.10\}$ & 0.0320 \\
$\{8,9,10\}$ &   0.0250 \\
$\{4,\ldots,10\}$ & 0.0230 \\
$\{2,\ldots,10\}$ & 0.0161 \\
$\{1,\ldots,10\}$ & 0.0118 \\ \hline
Naive (incest) & 0.0513 \\ \hline
\end{tabular}
\caption{Corporate Network of Fig.\ref{fig:corporate} with edge (1,10) omitted.} \label{tab:expcorp}
\end{subfigure}

\begin{subfigure} \centering
\begin{tabular}{|c|c|} \hline 
sampled voters $\Req$  & MSE   \\ \hline
$\{4,8,9\} $ &  0,0314 \\
$\{2,4,8,9\}$ &  0.0178  \\
$\{1,\ldots,9\}$ &  0.0170\\ \hline
Naive (incest) & 0.0431 
\\ \hline
\end{tabular}
\caption{Mesh Network of Fig.\ref{fig:mesh}} \label{tab:expmesh}
\end{subfigure}

\caption{Effect of incest removal in expectation polling using  the estimator (\ref{eq:pollest}). The mean square errors (MSE) in the state estimate  were obtained by averaging over 1000 independent simulations.}
\end{table}

Finally, consider expectation polling of voters in the  mesh network of Fig.\ref{fig:mesh}. Table \ref{tab:expmesh} displays the mean square errors of the conditional mean 
estimates for different sampled nodes and the naive incestious estimate.

\section{Conclusions and Extensions}
This paper considered data incest in reputation and expectation polling systems. In reputation systems, data incest arose in a multi-agent social learning model. A necessary and sufficient
condition on the adjacency matrix of the directed graph was given for exact incest removal at each stage. For expectation polling systems, it was shown that even if incest  propagates in the network, the posterior of the state can be estimated based on the incestious beliefs.
Finally by analyzing Twitter data sets associated with several  online reputation agencies we used Afriat's theorem of revealed preferences
to show utility maximization behavior and social learning.
In future work, it is  worthwhile  extending the framework in this paper to active sensing and sequential decision making. For the case of classical social learning,  
\cite{Kri12} deals with sequential quickest detection and stochastic control. Extending this to data incest information patterns is  challenging and non-trivial.

\subsubsection*{Acknowledgement} The experimental data in Sec.\ref{sec:expt} was obtained from Dr.\  Grayden Solman and Prof.\ Alan Kingstone of the Department of Psychology,
University of British Columbia. The experiments were conducted by them in the fall semester of 2013. The  data analysis of Sec.\ref{sec:expt} and Fig.\ref{Fig:Samplepath} was prepared  by Mr.\ Maziyar Hamdi. A complete description  of our results in the psychology experiment is in the preprint \cite{HSK14}.

\appendix
\subsection{Illustrative Example}
We provide here an example of the data incest problem setup of Sec.\ref{sec:classicalsocial}.
 Consider  $S=2 $ two agents with  information flow graph for three  time points $k=1,2,3$ depicted in Fig.\ref{sample}  characterized by 
the family of DAGs  $\{G_1,\ldots,G_7\}$. 
The adjacency matrices  %and the Transitive
$A_1,\ldots,A_7$ are constructed as follows:  $A_n$ is the upper left  $n\times n$ submatrix of $A_{n+1}$ and 
$${\small A_7 = \begin{bmatrix}
0 & 0 & 1 & 1 & 0 & 0 & 1 \\
0 & 0 & 0 & 1 & 0 & 0 & 0 \\
0 & 0 & 0 & 0 & 1 & 0 & 0 \\
0 & 0 & 0 & 0 & 0 & 1 & 0 \\
0 & 0 & 0 & 0 & 0 & 0 & 1 \\
0 & 0 & 0 & 0 & 0 & 0 & 1 \\
0 & 0 & 0 & 0 & 0 & 0 & 0 \\
\end{bmatrix}} .$$

\begin{comment}
\begin{figure}[h]
\centering
\begin{tikzpicture}[->,>=stealth',shorten >=1pt,auto,node distance=2cm,
  thick,main node/.style={circle,fill=white!12,draw,font=\sffamily}]

  \node[main node] (1) {1};
  \node[main node] (2) [below  of=1] {2};
  \node[main node] (3) [right of=1] {3};
  \node[main node] (4) [below  of=3] {4};
  \node[main node] (5) [right of=3] {5};
  \node[main node] (6) [right of=4] {6};

  \path[every node/.style={font=\sffamily\small}]
    (1) edge node  {} (4)
        edge [bend left] node[left] {} (5)
        edge node {} (3)
    (2) edge node {} (3)
        edge node {} (4)
        edge  node {} (5)
        edge [bend right] node[left] {} (6)
    (3) edge node  {} (5)
        edge node  {} (6)
    (4) edge node  {} (5)
        edge node {} (6);
\end{tikzpicture}
%\centering
%\scalebox{.6}{\includegraphics[scale=0.3]{sample.pdf}}
\caption{Example of an  information flow network with $S=2$  two agents, namely   $s\in \{1,2\}$ and  time points $k=1,2,3$.  Circles represent the nodes indexed by $n= s + S(k-1)$
in the social network and each edge depicts a communication link between two nodes.}
% http://www.biomedcentral.com/1471-2105/5/147
\label{sample}
\end{figure}

%There are $S=2$ groups and we consider the interaction of information flow over 3 times points.
  Fig.\ref{sample} shows the nodes  $n =1,2,\ldots,6$ where $n= s + 2(k-1)$. 
  \end{comment}

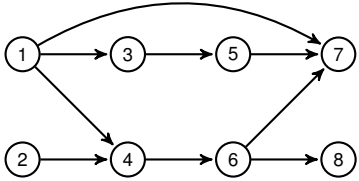
\begin{figure}[h]
\centering
\begin{tikzpicture}[->,>=stealth',shorten >=1pt,auto,node distance=2cm,
  thick,main node/.style={circle,fill=white!12,draw,font=\sffamily},scale=0.7, every node/.style={transform shape}]
 
  \node[main node] (1) {1};
  \node[main node] (2) [below  of=1] {2};
  \node[main node] (3) [right of=1] {3};
  \node[main node] (4) [below  of=3] {4};
  \node[main node] (5) [right of=3] {5};
  \node[main node] (6) [right of=4] {6};
  \node[main node] (7) [right of=5] {7};
  \node[main node] (8) [right of=6] {8};

  \path[every node/.style={font=\sffamily\small}]
    (1) edge node  {} (4)
          edge [bend left] node[left] {} (7)     %edge [bend left,dashed] node[left] {} (7)  %; \path [line,dashed] 
         edge node {} (3)
    (2) edge node {} (4)
    (3) edge node {} (5)
    (4) edge node {} (6)
    (5) edge node {} (7)
    (6) edge node {} (7)
         edge node {} (8);
\end{tikzpicture}
%\centering
%\scalebox{.6}{\includegraphics[scale=0.3]{sample.pdf}}

\caption{Example of  information flow network with $S=2$  two agents, namely   $s\in \{1,2\}$ and  time points $k=1,2,3,4$.  Circles represent the nodes indexed by $n= s + S(k-1)$
in the social network and each edge depicts a communication link between two nodes.}
% http://www.biomedcentral.com/1471-2105/5/147
\label{sample}
\end{figure}

Let us explain these matrices.
Since nodes 1 and 2 do not communicate,
clearly $A_1$ and $A_2$ are zero matrices. Nodes 1 and  3 communicate, hence  $A_3$ has a single
one, etc. Note that if nodes 1,3,4 and 7 are assumed to be the same individual, then at node 7, the individual remembers what happened at node 5 and node 1, but not node 3.
This models the case where the individual has selective memory and remembers certain highlights; we discuss this further in Sec.\ref{sec:discussion}.
From (\ref{eq:history}) and (\ref{eq:full}),  
$$\history_7=\{1,5,6\}, \quad \full_7 = \{1,2,3,4,5,6\} $$
where $\history_7$ denotes all one hop links to node 7 while $\full_7$ denotes all multihop links to node 7.
%Notice that
 %due to causality with respect to the time index $k$, the adjacency matrices are  always upper triangular.

Using (\ref{eq:tc}), the transitive closure matrices $T_1,\ldots,T_7$   are given by: $T_n$ is the upper left  $n\times n$ submatrix of $T_{n+1}$ and 
$$ {\small T_7 = \begin{bmatrix}
1 & 0 & 1 & 1 & 1 & 1 & 1 \\
0 & 1 & 0 & 1 & 0 & 1 & 1 \\
0 & 0 & 1 & 0 & 1 & 0 & 1 \\
0 & 0& 0 & 1 & 0 & 1 & 1 \\
0 & 0 & 0 & 0 & 1 & 0 & 1 \\
0 & 0 &0 & 0 & 0 & 1 & 1 \\
0 & 0 &0 & 0 & 0 & 0 & 1 
\end{bmatrix}.}$$
Note that $T_n(i,j) $ is non-zero only for $i\geq j$ due to causality since information sent by an agent can only arrive at another social group at a later time instant. The weight vectors are then
obtained from  (\ref{eq:socialconstraintestimate}) as
\begin{align*}
w_2 &= \begin{bmatrix}0\end{bmatrix},\quad
w_3 = \begin{bmatrix}1 & 0 \end{bmatrix}^\p,\quad
w_4 = \begin{bmatrix}1 & 1 & 0\end{bmatrix}^\p ,\\
w_5 & = \begin{bmatrix} 0  & 0 & 1 & 0\end{bmatrix}^\p, \quad
w_6 =  \begin{bmatrix} 0  & 0 & 0 & 1 & 0\end{bmatrix}^\p, \\
w_7 &= \begin{bmatrix} -1  & 0 & 0 & 0 & 1 & 1\end{bmatrix}^\p.
\end{align*}
 $w_2$ means that node $2$ does not use the estimate from node $1$. This formula is consistent with the  constraints on  information flow because 
the estimate from node $1$ is not available to node $2$; see Fig.\ref{sample}.
$w_3$ means that node $3$ uses estimates from nodes $1$; $w_4$ means
that  node $4$  uses estimates only from node $1$ and node $2$.  As shown in Fig.\ref{sample}, the mis-information propagation occurs at node $7$ since there are multiple
paths from node 1 to node 7. The vector $w_7$ says that node 7 adds estimates from nodes $5$ and $6$ and removes estimates from node $1$ to avoid triple counting of these estimates already integrated into estimates from nodes $3$ and $4$. Using the algorithm (\ref{eq:socialconstraintestimate}), incest is completely prevented in this example. 

Here is
an example in which exact incest removal is impossible.
Consider the information flow graph of  Fig.\ref{sample} but with the edge between node 1 and node 7  deleted. Then $A_7(1,7) = 0$ while $w_7(1) \neq 0$, and therefore the condition (\ref{constraintnetwork}) does not hold. Hence exact incest  removal is not possible for this case. In Sec.\ref{sec:vote} we compute the Bayesian  estimate of the underlying state when incest cannot be removed.

\subsection*{Proof of Theorem \ref{thm:monotone}}
The proof uses MLR stochastic dominance (defined in footnote \ref{footnotemlr}) and the following  single crossing
condition:

\begin{definition}[Single Crossing  \cite{Ami05}] \label{def:scc}
$g:\Y\times \A\rightarrow \reals $ satisfies a single crossing condition in  $(y,a)$  if $g(y,a) - g(y,\bar{a}) \geq 0$ implies $g(\bar{y},a)- g(\bar{y},\bar{a}) \geq 0$ for $\bar{a}>a$ and $\bar{y} > y$.
Then $a^*(y) = \argmin_{a} g(y,a)$ is increasing in $y$. \qed
\end{definition}

By (A1) it is  verified that the Bayesian update satisfies
$$ \frac{B_y \belief}{\ones^\p B_y \belief}  \lr  \frac{B_{y+1} \belief}{\ones^\p B_{y+1} \belief} $$
where $\lr$ is the MLR stochastic order. (Indeed, the MLR order is closed under conditional expectation and this is the reason why it is widely used in Bayesian analysis.)
By submodular assumption (A2), 
$c_{a+1} - c_a $ is a vector with decreasing elements.
Therefore 
$$(c_{a+1} - c_a )^\p \frac{B_y \belief}{\ones^\p B_y \belief}  \geq (c_{a+1} - c_a )^\p\frac{B_{y+1} \belief}{\ones^\p B_{y+1} \belief} $$
Since the denominator is non-negative, it follows that 
$ (c_{a+1} - c_a )^\p B_{y+1} \belief \geq 0 \implies (c_{a+1} - c_a )^\p B_{y} \belief \geq 0$.
This implies that $ c_a ^\p B_{y} \belief$  satisfies a single crossing condition in $(y,a)$.
Therefore $a_n(\belief,y) = \argmin_a c_a ^\p B_{y} \belief$ is increasing in $y$ for any belief $\belief$.

\subsection*{Proof of Theorem \ref{thm:socialincestfilter}}

The local estimate at node $n$ is given by (\ref{eq:socialconstraintestimate}), namely,
\beq \lbelief_{n-}(i) = w_n^\p  \lbelief_{1:n-1}(i).  \label{eq:inproof}
\eeq
Define $\bar{R}^\belief_{ia} = \log P(a|x=i,\belief)$ and  the $n-1$ dimensional vector $\bar{R}_{1:n-1}(i) =  \begin{bmatrix}  \bar{R}^{\belief_1}_{i,a_1} & & \bar{R}^{\belief_{n-1}}_{i,a_{n-1}}  \end{bmatrix} $.
From the structure of  transitive closure matrix $T_n$,
\beq  \lbelief_{1:n-1}(i)  = T_{n-1}^\p  \,\loprob_{1:n-1}(i),  \;  \lbelief_{n-}(i) = t_n^\p  \loprob_{1:n-1}(i) \label{eq:inproof2} \eeq
Substituting the first equation in (\ref{eq:inproof2}) into (\ref{eq:inproof})  yields $$  \lbelief_{n-}(i) = w_n^\p T_{n-1}^\p  \,\loprob_{1:n-1}(i). $$
Equating this with the second equation in (\ref{eq:inproof2}) yields $ w_n =  T_{n-1}^{-1}  t_n$.
(By Lemma \ref{lem:properties},
  $T_{n-1}$ is  invertible).

\subsection*{Proof of Theorem \ref{thm:estpollster}}
Given $n$ nodes, it is clear from Bayes formula and the structure of the adjacency matrix $A$ of the DAG that
$$ \lbelief_{1:n}(i) = \logoprob_{1:n} +  e_1 \lbelief_0(i) + A^\p  \lbelief_{1:n}(i) $$
Since $I-A$ is  invertible by construction,
$$  \lbelief_{1:n}(i) = (I - A^\p)^{-1}  \logoprob_{1:n} + (I - A^\p)^{-1}  e_1  \lbelief_0(i) $$
Then   $\lbelief_{\Req}(i) =  \begin{bmatrix} 
e _{\Req_1} &  e_{\Req_2} & \cdots &  e _{\Req_L}  \end{bmatrix}^\p \lbelief_{1:n}(i)$ satisfies (\ref{eq:omat}).
Finally $P(x|\belief_\Req) \propto \sum_{Y_{1:n} \in \obseq}  P(Y_{1:n}|x) \,  \belief_0(x)$.
Here $\obseq$ is the set of $n$ dim.\ vectors satisfying~(\ref{eq:omat}).

%\bibliographystyle{IEEEtran}
%\bibliography{$HOME/styles/bib/vkm}

% Generated by IEEEtran.bst, version: 1.13 (2008/09/30)

\vspace{-1.2cm}

\begin{IEEEbiographynophoto}{Vikram Krishnamurthy}
[F]  (vikramk@ece.ubc.ca)   is  a professor and Canada Research Chair at the
Department of Electrical Engineering, University of British Columbia,
Vancouver, Canada. Dr Krishnamurthy's  current research interests include statistical signal processing, computational game theory and
stochastic control in social networks. He served as distinguished lecturer for the IEEE Signal Processing Society and
Editor in Chief of IEEE Journal Selected Topics in Signal Processing. He received an honorary doctorate from KTH (Royal Institute of Technology), Sweden
in 2013.
\end{IEEEbiographynophoto}

\begin{IEEEbiographynophoto}{William Hoiles}
 (whoiles@ece.ubc.ca) is currently a Ph.D. candidate in the Department of Electrical and Computer Engineering, University of British Columbia, Vancouver, Canada. He received the M.A.Sc. degree in 2012 from the Department of Engineering Science, Simon Fraser University, Vancouver, Canada. His current research interests are social sensors and the bioelectronic interface.
\end{IEEEbiographynophoto}

\vfill
\end{document}